\newcommand{\1}{\mbox{1}\hspace{-0.25em}\mbox{l}}
\newcommand{\F}{\mathcal{F}}
\newcommand{\E}{\mathbb{E}}
\newcommand{\R}{\mathbb{R}}
\newcommand{\N}{\mathbb{N}}
\newcommand{\Q}{\mathbb{Q}}
\newcommand{\B}{\mathcal{B}}
\newcommand{\Prob}{\mathbb{P}}
\newcommand{\U}{\mathcal{U}}
\newcommand{\bracket}[1]{\llbracket #1\rrbracket}
\newcommand{\bracketo}[1]{\llbracket #1\rrbracket_\omega}
\newcommand{\internal}{\mathrm{int}}
\newcommand{\lrangle}[1]{\langle#1\rangle}
\newcommand{\asure}{\hspace{10pt}\text{a.s. }}
\title[MTL for stochastic processes]{On the Metric Temporal Logic for\texorpdfstring{\\}{} Continuous Stochastic Processes}
\author[M. Ikeda]{Mitsumasa Ikeda\lmcsorcid{0000-0003-1416-7672}}[a]
\address{National Institute of Advanced Industrial Science and Technology (AIST), Osaka 563-8577, Japan}
\email{mikeda.mscmath@gmail.com, yoriyuki.yamagata@aist.go.jp}
\author[Y. Yamagata]{Yoriyuki Yamagata\lmcsorcid{0000-0003-2096-677X}}[a]
\author[T. Kihara]{Takayuki Kihara\lmcsorcid{0000-0002-1611-952X}}[b]
\address{Nagoya University, Nagoya 464-8601, Japan}
\email{kihara@i.nagoya-u.ac.jp}
\keywords{Stochastic systems, Temporal logic, Formal verification, Measure Theory, Stochastic calculus}
\begin{document}

\begin{abstract}
In this paper, we prove the measurability of an event for which a general continuous--time stochastic process satisfies the continuous--time Metric Temporal Logic (MTL) formula.
Continuous--time MTL can define temporal constraints for physical systems naturally.
Several previous studies deal with the probability of continuous MTL semantics for stochastic processes.
However, proving measurability for such events is not an obvious task, even though it is essential.
The difficulty comes from the semantics of ``until operator,'' which is defined by the logical sum of uncountably many propositions.
Given the difficulty in proving such an event's measurability using classical measure-theoretic methods, we employ a theorem from stochastic analysis.
This theorem is utilized to prove the measurability of hitting times for stochastic processes, and it stands as a profound result within the theory of capacity.
Next, we provide an example that illustrates the failure of probability approximation when discretizing the continuous semantics of MTL--formulas with respect to time. 
Additionally, we prove that the probability of the discretized semantics converges to that of the continuous semantics when we impose restrictions on diamond operators to prevent nesting.

\end{abstract}

\maketitle

\section{Introduction}

Stochastic processes have emerged as valuable tools for analyzing real-time dynamics characterized by uncertainties. They consist of a family of random variables indexed in real-time and find applications in diverse domains such as molecular behavior, mathematical finance, and turbulence modeling. To formally analyze the temporal properties of real-time systems, \emph{Metric Temporal Logic (MTL)} has been introduced as a logical framework, specifying constraints that real-time systems must satisfy (see Chapter VI in \cite{ouc.200304121619670101}). The increasing demand for MTL specifications in industrial applications~\cite{10.1007-BF01995674,edssjb.978.3.642.24372.1.120110101,edseee.473936620081201} has sparked interest in investigating the probability that a stochastic process satisfies the semantics of MTL--formulas.
In contrast to discrete-time stochastic systems, which are limited to describing events within a discretized time domain, continuous-time MTL allows for the precise representation of constraints on events occurring between discrete times. 

This paper focuses on MTL specifications on stochastic systems interpreted by the continuous-time domain.
In particular, we are interested in the probability of an MTL--formula being satisfied by a continuous--time stochastic system.
However, before discussing the probability of event occurrences, it is crucial to ensure their measurability. 
Although previous papers~\cite{edsdbl.conf.cdc.FuT1520150101,edseee.690764320140501,edseee.758791120160901} considered the probability of events in which an MTL--formula or MITL--formula, a restriction of MTL--formulas is satisfied, they did not prove the measurability of such events.
The subtle problem arises in the definition of probability because unions of uncountably many sets define temporal operators in MTL. At the same time, measurability is guaranteed in the case of a union of countably many sets in general.
The same problem persists even when we restrict our attention to MITL--formulas because MITL--formulas contain intervals in which uncountable instances of time are contained.

In this paper, we prove the measurability of events where sample paths of stochastic processes satisfy the propositions defined by MTL under mild assumptions. 
We assume that the stochastic process is measurable as a mapping from the product space of the time domain and the sample space, a common assumption in stochastic analysis.
Although establishing measurability for events represented by MTL--formulas with temporal operators is made challenging by the presence of the union or intersection of uncountably many sets in the definition, we overcome this difficulty by introducing the concept of reaching time for sub-formulas of the given MTL--formula.
By leveraging the reaching time, we prove measurability inductively concerning the sub-formulas.
In the proof of measurability of MTL--formulas, we utilize the measurability of the corresponding reaching times.
The measurability of such reaching times is non-trivial and proven using the theory of capacity (see \cite{ouc.200324971419720101}).

While establishing measurability shows that the probability of continuous--time MTL semantics for stochastic processes is well-defined, the problem of calculating such probabilities remains a challenge. 
Although \cite{edsdbl.conf.cdc.FuT1520150101,edseee.758791120160901} proposed an approximation by discretizing the semantics of MTL--formulas with respect to time, we show that probability based on discretization does not converge the probability based on continuous semantics in general.
We give an example that involves multi-level nested temporal operators.
The example motivates the need for a more comprehensive and precise discussion of approximations, which has been generally overlooked in previous studies.
As a part of such an effort, we show that if a formula only has simplified temporal operators, and these operators never appear in nested positions, the discretization converges the continuous semantics.

This paper contributes to understanding the probability foundation and approximation for stochastic processes satisfying continuous-time MTL semantics. We investigate the measurability of events, provide proofs under mild assumptions, and explore the possibility of approximation by discretization with respect to time.
Our findings highlight the challenges and emphasize the importance of refining approximation techniques in future studies.

This paper is organized as follows.
In Section~\ref{sec:relatedworks}, we refer to some related works and the novelty of our results compared to previous studies.
In Section~\ref{sec:preliminaries}, we provide a comprehensive exposition of the fundamental concepts that will be utilized extensively in this paper.
These include the definitions of measurability, probability space, stochastic process, Brownian motion, stochastic differential equation, and metric temporal logic.
In Section~\ref{sec:proof_of_measurability}, we prove the measurability of the set of paths of a stochastic process satisfying the semantics of the MTL--formula in both continuous and discrete sense.
In Section~\ref{sec:discretization}, we provide a counterexample that the probability that a stochastic process satisfies the discrete semantics of an MTL--formula does not converge to the probability of path satisfying the semantics of the same MTL--formula in a continuous sense.
On the other hand, in Section~\ref{sec:flat_MTL}, we prove the convergence of probability of discrete semantics for general stochastic differential equations (SDEs) under some restriction on the syntax of propositional formulas.
We set the restriction so that temporal operators do not nest.

In conclusion, we show that the convergence result relies on the depth of nests of temporal operators in an MTL--formula.

\section{Related Works}\label{sec:relatedworks}

Temporal reasoning has been extensively studied (for an overview, see~\cite{sep-logic-temporal}), and it has gained increasing attention due to the growing demands in various industrial applications for real-time systems.

Pnueli~\cite{edsdbl.conf.focs.Pnueli7719770101} introduced linear temporal logic (LTL) as a means to express qualitative timing properties of real-time systems using temporal operators. Koyman~\cite{10.1007-BF01995674} extended this logic to include quantitative specifications by indexing the temporal operators with time intervals, leading to the development of metric temporal logic (MTL). Unlike other extensions of LTL with timing constraints, such as timed propositional temporal logic (TPTL) \cite{edseee.6347319890101}, MTL does not allow explicit reference to clocks, making it practical for implementation. A more detailed survey of temporal logic for real-time systems can be found in ~\cite{edssjs.E69AE06D20130601}.

This paper focuses on MTL for a continuous-time stochastic process with a continuous state space. Such processes are commonly used as probabilistic models to describe phenomena with continuous or intermittent effects caused by environmental noise.
In particular, the process represented by the \emph{stochastic differential equation (SDE)} is widely used for model statistical dynamics, asset prices in mathematical finance \cite{edsnuk.vtls00057597320050101,edsnuk.vtls00216662120040101}, computer network~\cite{abouzeid2000stochastic} and future position of aircraft \cite{edseee.127248520030101}, to named a few~\cite{edseee.473936620081201,edseee.690764320140501,edselc.2-52.0-8504760730120171006}.

Considering the wide range of applications, it is natural to consider the probability that the given stochastic system satisfies properties defined by MTL--formulas or MITL--formulas.
The previous studies~\cite{edsdbl.conf.cdc.FuT1520150101,edseee.758791120160901} already considered the probabilities in which stochastic systems satisfy MTL properties and gave an approximation based on the discretization of time and state spaces.

However, to talk probabilities consistently, we must show the measurability of events under consideration.
The subtle problem arises in the definition of probability because unions of uncountably many sets define temporal operators in MTL. At the same time, measurability is guaranteed for the unions of countably many sets in general.
Further, their approximation by discretization assumed that the timed behavior of stochastic processes satisfies Non--Zenoness, which means that the behavior does not change its value infinitely in finite time.
However, stochastic processes, such as solutions of SDEs, generally do not satisfy  Non--Zenoness assumption because of the inherent ``rough'' properties of stochastic processes, as they are neither smooth nor differentiable everywhere (see, for example, Chapter 2 in \cite{MR1121940}).

In this paper, we prove the measurability of events in which stochastic systems satisfy MTL--formulas interpreted by the continuous--time domain under the mild assumption, using the fundamental theory of stochastic analysis, which is developed to study the approximation of probability measures by describing the structure of classes of sets~\cite{ouc.200324971419720101,OUE.MC0002929219950101}.
Our result guarantees the existence of probability in which stochastic systems satisfy MTL--formulas.

Further, we give examples in which probabilities defined by discretization do not converge to probabilities specified in the continuous-time domain, even if the time interval used for discretization goes to 0.
Our examples show that approximation by discretization, proposed by previous studies, does not generally work.
Our examples involve triple nesting ``always'' and ``possibly''.

On the positive side, we show that if MTL--formulas only have ``always'' and ``possibly'' operators and do not nest, time discretization converges to probability defined in the continuous time domain.
Furia and Rossi~\cite{Furia2010-bl} considered the relation between semantics based on discrete time and continuous time.
Although they do not consider the stochastic system, they also find a correspondence between discrete time semantics and continuous semantics when the formula does not have a nested temporal operator.
Therefore, we use the notation $\flat$MTL for our subsystem of MTL, borrowing the symbol from their paper.

The model checking and satisfiability problems of MTL and other interval temporal logics are considered in the previous works~\cite{Alur1996-im,Della_Monica2011-vk,Ouaknine2005-mm}.
Although our work concerns probability holding an MTL--formula, not its validity, their works share the context with us as they concern formal verification of real-time systems.

\section{Preliminaries}\label{sec:preliminaries}
In this section, we introduce several fundamental concepts discussed throughout this paper.
Let us start with the definitions of measurability and probability space.
When defining an event, it is crucial to ensure that it is measurable to give meaning to its probability.
Once the probability space is defined, we define the product space of two probability spaces.
Next, we define a general stochastic process and its path.
Following that, we introduce the definition of Brownian motion and stochastic differential equation.
These two concepts are fundamental to stochastic analysis and form its core.
Lastly, we introduce the syntax and semantics of MTL--formulas, which are defined for every path of a stochastic process.

\subsection{Measurability and Probability}
This subsection introduces the basic definitions used in the measure theory and probability.
Readers who are familiar with these theories may skip this subsection.
More details are available in \cite{ouc.200337561119660101}.

\begin{defi}[$\sigma$--algebra and Measurable space]\label{defi:sigma_algebra}
    Let $\Omega$ be a set and $\F$ be a family of subsets of $\Omega$, i.e., $\F\subset 2^\Omega$.
    $\F$ is called a $\sigma$--algebra if it satisfies the following three conditions:
    \begin{enumerate}[(i)]    
        \item $\Omega\in\F$ and $\emptyset\in\F$.
        \item If $A\in \F$, then $\Omega\setminus A\in\F$. 
        \item If $A_n\in\F$ for $i=1,2,3,\cdots$, then $\bigcup_{i=1}^\infty A_n\in\F$ and $\bigcap_{i=1}^\infty A_n\in\F$
    \end{enumerate}
    If $\F$ is $\sigma$--algebra, $(\Omega,\F)$ is called \emph{a measurable space}.
    If $(\Omega,\F)$ is a measurable space and $A\in F$, we say that $A$ is \emph{$\F$--measurable} or merely \emph{measurable}.
\end{defi}

\begin{defi}[Borel space]
    Let $E$ be a topological space.
    A measurable space is called \emph{Borel space} on $E$, denoted by $(E,\B(E))$, if $\B(E)$ is the smallest $\sigma$--algebra which contains every open set.
    Every set belonging to $\B(E)$ is called \emph{a Borel set}.
\end{defi}

\begin{defi}[Measure space and probability space]\label{defi:measure_space}
    Let $(\Omega,\F)$ be a measurable space.
    A function $\mu:\F\rightarrow [0,\infty]$ is called a \emph{measure} on $(\Omega,\F)$ if $\mu$ satisfies the following two conditions:
    \begin{enumerate}[(i)]
        \item $\mu(\emptyset)=0$.
        \item If $A_i\in\F$ for $i=1,2,3,\cdots$ and $A_i\cap A_j=\emptyset$ for $j\neq i$, then
        \begin{align*}
            \mu(\bigcup_{i=1}^\infty A_i)=\sum_{i=1}^\infty \mu(A_i).
        \end{align*}
    \end{enumerate}
    We call the triplets $(\Omega,\F,\mu)$ \emph{a measure space}.

    Especially, if $\mu(\Omega)=1$, we refer to a measure $\mu$ on $(\Omega,\F)$ as a \emph{probability measure}, and call $(\Omega,\F,\mu)$ a \emph{probability space}.
    We often write a probability measure as $\Prob$.
\end{defi}

\begin{defi}[Lebesgue measure]\label{defi:Leb_meas}
    Let $E=[0,\infty)$ or $\R^n$ endowed with the usual topology and $(E,\B(E))$ be the Borel space on $E$.
    It is well known that there exists a unique measure $\mu$ on $E$ such that
    \begin{align}
        \mu(\prod_{i=1}^n\lrangle{a_i,b_i})=\prod_{i=1}^n(b_i-a_i),
    \end{align}
    for every rectangle $\prod_{i=1}^n\lrangle{a_i,b_i}$ on $E$.
    We call such a measure \emph{Lebesgue measure}.
\end{defi}

\begin{defi}[Complete probability space]
    A probability space $(\Omega,\F,\Prob)$ is said to be \emph{complete} if every subset $G$ of a measurable set $F$ such that $\Prob(F)=0$ is also measurable.
\end{defi}

\begin{defi}[Almost sure]
    Let $(\Omega,\F,\Prob)$ be a probability space, and let $P(\omega)$ be a proposition defined on $\omega\in\Omega$.
    We say that $P(\omega)$ holds \emph{almost surely} if there exists a measurable set $\tilde{\Omega}\in\F$ such that $\Prob(\tilde{\Omega})=1$ and $\tilde{\Omega}\subset\{\omega\in\Omega; P(\omega)\}$.

In the context of probability theory, the phrase ``almost surely'' is often denoted as ``a.s.''.
Hence, we frequently use the notation $P(\omega),\text{ a.s.}$ to indicate that $P(\omega)$ holds almost surely.
\end{defi}

\begin{rem}
    Let $(\Omega,\F, \Prob)$ be a probability space and $P(\omega)$ be a proposition defined on $\omega \in \Omega$.
    Whether $P(\omega)$ holds almost surely depends on the probability measure $\Prob$.
    For example, let $\Omega = [0, 1]$ and $\F := \B ([0, 1])$.
    Let $\Prob$ be the Lebesgue measure on $(\Omega, \F)$.
    Then $\Prob$ is a probability measure on $(\Omega, \F)$.
    Now define another probability measure $\tilde{\Prob}$ as follows:
    \begin{align*}
    \begin{cases}
        \tilde{\Prob}(A) = 1,&\text{ if }x\in A,\\
        \tilde{\Prob}(A) = 0,&\text{ if }x\notin A,
    \end{cases}
    \end{align*}
    where $x \in [0,1]$.
    Define a proposition $P(\omega)$ for $\omega \in \Omega$ as follows:
    \begin{align*}
        P(\omega) \Leftrightarrow \omega=x.
    \end{align*}
    Then $\Prob(\omega \in \Omega; P(\omega)) = 0$, while $\tilde{\Prob}(\omega \in \Omega; P(\omega)) = 1$.
    When we claim that $P(\omega)$ holds almost surely focusing on the probability measure $\Prob$, we express that ``\emph{$P(\omega)$ holds almost surely $\Prob$}'' or ``\emph{$P(\omega)$ holds a.s. $\Prob$}''.
\end{rem}

\begin{rem}\label{rema:almost_sure_concerves}
    Let $P_1(\omega)$ and $P_2(\omega)$ be two proposition defined on $\omega\in\Omega$.
    If $P_1(\omega)$ holds almost surely and $P_2(\omega)$ holds almost surely, then both $P_1(\omega)$ and $P_2(\omega)$ hold almost surely.
    To see this, let $\Omega_1\in\F$ and $\Omega_2\in\F$ and suppose that $\Omega_1\subset \{\omega\in\Omega ; P_1(\omega)\}$, $\Omega_2\subset\{\omega\in\Omega ; P_2(\omega)\}$, and $\Prob(\Omega_1)=\Prob(\Omega_2)=1$. 
    Then $\Omega_1\cap\Omega_2\subset\{\omega\in\Omega ; P_1(\omega)\text{ and } P_2(\omega)\}$ and 
    \begin{align*}
        \Prob((\Omega_1\cap\Omega_2)^C)=\Prob(\Omega_1^C\cup\Omega_2^C)\leq\Prob(\Omega_1^C)+\Prob(\Omega_2^C)-\Prob(\Omega_1^C\cap\Omega_2^C)\leq\Prob(\Omega_1^C)+\Prob(\Omega_2^C)=0.
    \end{align*}
    Therefore, $\Prob(\Omega_1\cap\Omega_2)=1$.

    By repeating similar arguments, we can see the following statement: if $P_i(\omega)$ holds almost surely for $i=1,\cdots,n$, then $P_1(\omega)\wedge,\cdots,\wedge P_n(\omega)$ holds almost surely.
\end{rem}

\begin{defi}[Product measurable space]
    Let $(G,\mathcal{G})$ and $(H,\mathcal{H})$ be two measurable spaces.
    The product $\sigma$--algebra of $\mathcal{G}$ and $\mathcal{H}$, denoted $\mathcal{G}\otimes\mathcal{H}$, is the smallest $\sigma$--algebra on $G\times H$ which contains all set of the form $A\times B$, where $A\in \mathcal{G}$ and $B\in\mathcal{H}$.
    The resulting measurable space
    $(G\times H,\mathcal{G}\otimes\mathcal{H})$ is called the product measurable space of $(G,\mathcal{G})$ and $(H,\mathcal{H})$.
\end{defi}

\begin{fact}\label{fact:section_measurability}
    Let $(G,\mathcal{G})$ and $(H,\mathcal{H})$ me two measurable space and $E\in\mathcal{G}\otimes\mathcal{H}$.
    Then the following holds:
    \begin{align*}
        \{z\in H ; (x,z)\in E\}\in\mathcal{H},\\
        \{z\in G ; (z,y)\in E\} \in \mathcal{G}.
    \end{align*}
\end{fact}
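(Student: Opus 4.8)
The plan is to prove this by the standard ``good sets principle.'' I will establish the first assertion; the second follows by the symmetric argument obtained from exchanging the roles of $(G,\mathcal{G})$ and $(H,\mathcal{H})$. First I would fix an arbitrary $x\in G$ and let $\mathcal{D}_x$ be the family of all $E\subseteq G\times H$ such that the section $\{z\in H ; (x,z)\in E\}$ belongs to $\mathcal{H}$. The goal is to show $\mathcal{G}\otimes\mathcal{H}\subseteq\mathcal{D}_x$, since then every $E\in\mathcal{G}\otimes\mathcal{H}$ has its $x$-section in $\mathcal{H}$, and $x$ being arbitrary gives the claim.

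The core of the argument is to check that $\mathcal{D}_x$ is a $\sigma$--algebra on $G\times H$ containing every measurable rectangle. Closure properties rest on two elementary identities for sections: for any $E\subseteq G\times H$ one has $\{z ; (x,z)\in (G\times H)\setminus E\}=H\setminus\{z ; (x,z)\in E\}$, and for any sequence $E_1,E_2,\dots$ one has $\{z ; (x,z)\in\bigcup_n E_n\}=\bigcup_n\{z ; (x,z)\in E_n\}$. Because $\mathcal{H}$ is itself closed under complements and countable unions (and contains $H$ and $\emptyset$), these identities immediately give that $\mathcal{D}_x$ contains $G\times H$ and is closed under complementation and countable unions, i.e.\ it is a $\sigma$--algebra. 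Next, for a measurable rectangle $A\times B$ with $A\in\mathcal{G}$ and $B\in\mathcal{H}$, the $x$-section equals $B$ if $x\in A$ and equals $\emptyset$ if $x\notin A$; in either case it lies in $\mathcal{H}$, so $A\times B\in\mathcal{D}_x$. Since $\mathcal{G}\otimes\mathcal{H}$ is by definition the smallest $\sigma$--algebra on $G\times H$ containing all such rectangles, we conclude $\mathcal{G}\otimes\mathcal{H}\subseteq\mathcal{D}_x$, which finishes the first part; the second part is identical after swapping $G$ and $H$.

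As for the main obstacle: there is essentially no deep difficulty here, as this is a routine instance of the good sets principle and the result is standard. The only point that deserves a moment's care is the verification that $\mathcal{D}_x$ is closed under complementation, which is precisely where the section/complement identity above is needed; the remaining steps are bookkeeping. I would therefore present the proof tersely, emphasizing the two section identities and the minimality of $\mathcal{G}\otimes\mathcal{H}$.
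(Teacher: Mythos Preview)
Your argument is correct and is precisely the standard ``good sets'' proof one finds in any measure theory text. The paper itself states this result as a \emph{Fact} without proof, so there is no paper-proof to compare against; your write-up supplies exactly the routine verification the authors omitted.
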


\begin{defi}[Product measure space]
    Let $(\Omega,\F,\Prob)$ be a probability space.
    Let $E=[0,\infty)$ or $\R^n$ endowed with the usual topology, and $(E,\B(E),\mu)$ be the measure space with the Lebesgue measure $\mu$ defined in Definition~\ref{defi:Leb_meas}.
    It is well known that there is a unique measure $\Prob\otimes \mu$ on $(\Omega\times E,\F\otimes\B(E))$ such that
    \begin{align}
        \Prob\otimes\mu(A\times B)=\Prob(A)\times\mu(B)
    \end{align}
    for every $A\in\F$ and $B\in\B(E)$.
    We call such measure \emph{the product measure} of $\Prob$ and $\mu$.
    The resulting measure space is denoted as $(\Omega\times E,\F\otimes\B(E),\Prob\otimes\mu)$.
\end{defi}

\begin{defi}[Measurable function]
    Let $(G,\mathcal{G})$ and $(H,\mathcal{H})$ be two measurable spaces.
    A function $X:G\rightarrow H$ is said to be $\mathcal{G}/\mathcal{H}$--measurable if $X^{-1}(B)\in\mathcal{G}$ for every $B\in\mathcal{H}$.
\end{defi}

\begin{defi}[Random variable]
    Let $(\Omega,\F,\Prob)$ be a probability space and $(E,\B(E))$ be the Borel space on a topological space $E$.
    An $E$--valued function $X$ on $(\Omega,\F,\Prob)$ is called a random variable if $X$ is $\F/\B(E)$--measurable.
\end{defi}
If there is no ambiguity, an $E$-valued random variable is simply called a random variable.

\begin{defi}[Law of a random variable]
    Let $(\Omega,\mathcal{F},\mathbb{P})$ be a probability space, $(E,\mathcal{B}(E))$ be the Borel space on a topological space $E$, and $X$ be an $E$-valued random variable.
    It is well-known that $\sigma(X):=\{X^{-1}(B) ; B\in \mathcal{B}(E)\}$ is a sub $\sigma$--algebra of $\mathcal{F}$, and thus the mapping 
    $B\mapsto \mathbb{P}(X^{-1}(B))$ is a probability measure on $(E,\mathcal{B}(E))$.
    We refer to this mapping as \emph{the law of $X$}, often denoted by $\Prob^X$.
\end{defi}

\begin{defi}[Lebesgue integral and expected value]
    Let $(\R,\B(\R))$ be the Borel space on $\R$.
    Suppose that $(G,\mathcal{G},\mu)$ is a measure space and $f:G\rightarrow \R$ be a $\mathcal{G}/\B(\R)$--measurable function.
    The  \emph{Lebesgue integral} is defined as the following four steps:
    \begin{enumerate}[1.]
        \item Let $B\in\mathcal{G}$ and define $\1_B:G\rightarrow \{0,1\}$ as
        \begin{align}
            \1_B(x):=\begin{cases}
                1,&\text{ if }x\in B\\
                0,&\text{ if }x\notin B.
            \end{cases}
        \end{align}
        We call $f:G\rightarrow [0,\infty)$ \emph{Simple function} if
        \begin{align}
            f=\sum_{i=1}^n\alpha_i \1_{B_i},
        \end{align}
        where $B_i,\ i=1,2,\cdots,n$ are elements in $\mathcal{G}$ and $\alpha_1,\alpha_2,\cdots,\alpha_n$ are nonnegative real numbers.
        Let $A\in\mathcal{G}$.
        Define \emph{the Lebesgue integral of the simple function $f$ with respect to $\mu$} as
        \begin{align}
            \int_A f d\mu:=\sum_{i=1}^n\alpha_i\mu(B_i\cap A).
        \end{align}
        \item Let $A\in\mathcal{G}$ and $f:G\rightarrow\R$ be a nonnegative $\mathcal{G}/\B(\R)$--measurable function.
        \emph{The Lebesgue integral} of $f$ with respect to $\mu$ is defined as follows:
        \begin{align}
            \int_A fd\mu:=\sup\left\{\int_A g d\mu\ ;\ g\text{ is simple function such that }g\leq f \right\}.
        \end{align}
        \item Let $A\in\mathcal{G}$ and $f:G\rightarrow \R$ be a $\mathcal{G}/\B(\R)$--measurable function.
        Let $f^+:=\max\{f,0\}$ and $f^-:=-\min\{f,0\}$.
        It is well known that $f^+$ and $f^-$ are $\mathcal{G}/\B(\R)$--measurable and then $\int_A f^+d\mu$ and $\int_A f^-d\mu$ can be defined.
        When $\int_A f^+d\mu<\infty$ and $\int_A f^-d\mu<\infty$, we define \emph{the Lebesgue integral} with respect to $\mu$ as
        \begin{align}
            \int_A fd\mu:=\int_A f^+d\mu-\int_A f^-d\mu.
        \end{align}
        If $A=G$, then we denote $\int_A fd\mu$ as $\int fd\mu$.
        \item Let $(\Omega,\F,\Prob)$ be a probability space and $X$ be a $\R$--valued random variable such that $\int_\Omega X^+ d\Prob<\infty$ and $\int_\Omega X^-d\Prob<\infty$.
        Then $\int_\Omega Xd\Prob$ is called \emph{the expected value} of $X$, denoted by $\E[X]$.
    \end{enumerate}
    
\end{defi}

\begin{rem}
    In this paper, we use another type of notation for the Lebesgue integral to accommodate different situations:
    \begin{align}
        \int_A fd\mu=\int_A f(x)\mu(dx)
    \end{align}
    Especially, if $\mu$ is a Lebesgue measure, we denote the integral of $x \mapsto f(x)$ as following:
    \begin{align}
        \int_A f(x)dx
    \end{align}
\end{rem}

\begin{defi}[Density of random variable and absolute continuity]
\ 
\begin{enumerate}[(i)]
    \item Let $(\Omega,\F,\Prob)$ be a probability space, $(\R^n,\B(\R^n))$ be a Borel space, and $X$ be an $\R^n$--valued random variable on $(\Omega,\F,\Prob)$.
    Let $([0,\infty),\B([0,\infty)))$ is a Borel space on $[0,\infty)$ endowed with the usual topology.
    A $\B(\R^n )/\B([0,\infty))$--measurable function $f$ is called \emph{the density} of $X$ if the following holds:
    \begin{align*}
        \Prob(X^{-1}(B))=\int_B f(x)dx,\hspace{10pt}\forall B\in \B(\R^n).
    \end{align*}
    If such a function $f$ exists, we say that $X$ has a density.
    \item If $X$ has a density, we say that the law of $X$ is \emph{absolutely continuous} with respect to the Lebesgue measure.
\end{enumerate}
    
\end{defi}

In the following sections, we frequently use the notion of almost sure convergence of random variables:

\begin{defi}[Almost sure convergence]
    Let $(\Omega,\F,\Prob)$ be a probability space, and let $E$ be a metric space.
    Let $X$ and $X_n ; n=1,2,\cdots$ be $E$--valued random variables.
    We say $X_n$ converges almost surely to $X$ if there exists a measurable set $N\in\F$ such that $\Prob(N)=0$ and 
    \begin{align*}
        X_n(\omega)\overset{n\rightarrow\infty}{\longrightarrow}X(\omega)
    \end{align*}
    for every $\omega\notin N$.
\end{defi}

\subsection{Stochastic process}\label{sec:stochastic_process}

\begin{defi}
    Let $(\Omega,\F,\Prob)$ be a probability space and $E$ be a Polish space.
    A family of $E$--valued random variables $X:=\{X_{t}\}_{t\geq0}$ indexed by time parameter $t$ is called a \emph{stochastic process}:
\begin{align}
	\begin{array}{ccc}
		\Omega & \stackrel{X_t}{\longrightarrow} & E \\
		\rotatebox{90}{$\in$} && \rotatebox{90}{$\in$} \\
		\omega & \longmapsto & X_t(\omega)
	\end{array}\nonumber
\end{align}

\end{defi}

Following the convention of stochastic analysis, we say $X$ is \emph{measurable} if it satisfies the following assumption:
\begin{asm}\label{cond:measurability}
The function $(\omega,t)\mapsto X_t(\omega)$ is $\mathcal{F}\otimes\mathcal{B}([0,\infty))$-measurable, which means that the inverse image $\{(\omega,t) ; X_t(\omega)\in B\}$ belongs to $\mathcal{F}\otimes\mathcal{B}([0,\infty))$ whenever $B$ is a Borel set in $E$.
\end{asm}

\begin{rem}
Under Assumption~\ref{cond:measurability}, it follows from Fact~\ref{fact:section_measurability} that $X_t$ is $\mathcal{F}/\B(E)$-measurable for all $t\in[0,\infty)$.
\end{rem}

We denote a path $t\mapsto X_t(\omega)$ of $\{X_t\}_{t\geq0}$ as $X(\omega)$ for every $\omega\in\Omega$:
\begin{align}
		\begin{array}{ccc}
			[0,\infty) & \stackrel{X(\omega)}{\longrightarrow} & E \\
			\rotatebox{90}{$\in$} && \rotatebox{90}{$\in$} \\
			t & \longmapsto & X_t(\omega)
		\end{array}\nonumber
\end{align}

\begin{rem}\label{rem:conti_measurable}
    Suppose that $(\Omega,\F,\Prob)$ is a complete probability space.
    If $X_t$ is $\mathcal{F}$-measurable for all $t\in[0,\infty)$ and one of the following holds, $X$ is known to be measurable (see 1.1.14 in \cite{MR1121940}):
    \begin{enumerate}[(i)]
        \item The path $t \mapsto X_t$ is right--continuous almost surely $\Prob$.
        \item The path $t \mapsto X_t$ is left--continuous almost surely $\Prob$.
    \end{enumerate}
\end{rem}

\subsection{Brownian motion}
When studying properties of distributions of general continuous stochastic processes and topics such as convergence of discretization, including numerical computations, it is common to first discuss examples related to a Brownian motion as it is the most representative continuous stochastic process.
Now, we present the definition of the Brownian motion: 

\begin{defi}\label{defi:BM}
    Let $(\Omega,\F,\Prob)$ be a probability space.
    A stochastic process $X:=\{X_t\}_{t\geq0}$ with state space $\R$ is called a standard one-dimensional Brownian motion starting at $x\in\R$ if 
	\begin{enumerate}[(i)]
		\item $\Prob(X_0=x)=1$,
		  \item The path $t\mapsto X_t$ is continuous almost surely $\Prob$,
		  \item For any $s,t\geq0$, $t>s$ implies that $X_t-X_s\sim\mathcal{N}(0,t-s)$ i.e., $X_t-X_s$ has normal distribution with mean $0$ and variance $t-s$.
             \item If $s\leq t\leq u$, then $X_u-X_t$ is independent of $\sigma(X_v;v\leq s)$, where $\sigma(X_v;v\leq s)$ is the smallest sigma algebra which contains $\sigma(X_v)$ for all $v\leq s$.
	\end{enumerate}
\end{defi}

The existence of a Brownian motion is established in Section 2.2 of \cite{MR1121940}, relying on the richness of the underlying probability space.

 \subsection{Stochastic differential equation (SDE)}    
Let us proceed to define one-dimensional stochastic differential equations rigorously:

\begin{defi}\label{defi:SDE_with_drift}
    Let $(\Omega,\F,\Prob)$ be a probability space that supports a Brownian motion $W$.
    Let $\sigma$ and $b$ be real--valued measurable functions on $\R$.
    A {\bf strong solution of the stochastic differential equation (SDE)} 
    \begin{align*}
\begin{cases}
dX_t=b(X_t)dt+\sigma(X_t)dW_t,\\
X_0=\xi\in\R.
\end{cases}
\end{align*} 
    on $(\Omega,\F,\Prob)$ with respect to $W$ and initial condition $\xi$, is a process $X=\{X_t\}_{t\geq0}$ with continuous sample paths and with the following properties:
    \begin{enumerate}[(i)]
	\item $\{X_t\}_{t\geq0}$ is adapted to the filtration induced by Brownian motion (see 1.1.9 and 5.2.1 in \cite{MR1121940}),
        \item $\Prob[\omega\in\Omega; X_0(\omega)=\xi]=1$,
	\item $\Prob[\omega\in\Omega;\int_0^t\{|b(X_s(\omega))|+\sigma^2(X_t(\omega))\}ds<\infty]=1$ holds for every $0\leq t<\infty$, and
	\item the integral version of \eqref{eq:SDE_with_drift}
            \begin{align*}
			X_t=X_0+\int_0^tb(X_s)ds+\int_0^t\sigma(X_s)dW_s;\hspace{10pt}0\leq t<\infty,
		\end{align*}
	holds almost surely. 
    Here, the term $\int_0^t \sigma(X_s) dW_s$ represents the {\it Itô's stochastic integral}, defined as the limit of the following stochastic process (refer to Chapter 3 in \cite{MR1121940}):
     \begin{align}
        \sum_{k=1}^\infty \sigma(X_{\frac{k}{n}})(W_{\frac{k+1}{n}\wedge t}-W_{\frac{k}{n}\wedge t}).
    \end{align}
    \end{enumerate}
\end{defi}

\begin{rem}\label{rem:Brownian_as_SDE}
    Brownian motion $\{W_t\}_{t\geq0}$ starting at $x\in\R$ itself is a solution $\{X_t\}_{t\geq0}$ of following one--dimensional SDE:
    \begin{align*}
    \begin{cases}
        X_t=\int_0^t1dW_s,\\
        X_0=x.
    \end{cases}
    \end{align*}
\end{rem}

\begin{rem}
    Brownian motions and SDEs are well-known to be continuous stochastic processes.
    Therefore, they satisfy Assumption~\ref{cond:measurability}.
\end{rem}

\subsection{Metric Temporal Logic (MTL) and its semantics}

 This section introduces the formal definition of MTL (Metric Temporal Logic) formulas for a given path.
 We begin by assuming a set of atomic propositions, denoted as $AP$, typically defined as a finite set.
 The MTL--formulas are then defined as follows:

 \begin{defi}[Syntax of MTL--formulas]\label{defi:syntax_MTL}
	We define MTL--formulas for a continuous stochastic process $\{X_t\}_{t\geq0}$ using the following grammar:

 \begin{enumerate}
     \item Every atomic proposition $a\in AP$ is an MTL--formula.
     \item If $\phi$ is an MTL--formula, then $\lnot \phi$ is also an MTL--formula.
	 \item If $\phi_1$ and $\phi_2$ are MTL-formulas, then the conjunction of $\phi_1$ and $\phi_2$, denoted as $\phi_1\wedge\phi_2$, is also an MTL--formula.
  \item If 
  $\phi_1$ and 
  $\phi_2$ are MTL--formulas, and 
  $I$ represents an interval on the domain 
  $[0,\infty)$, then the formula
  $\phi_1\U_I\phi_2$ is an MTL--formula.
 \end{enumerate}
 
	The grammar above can be conveniently represented using the \emph{Backus--Naur form}:
	\begin{align*}
	 \phi::=a \mid \phi_1\wedge\phi_2\mid \lnot\phi\mid\phi_1\U_I\phi_2,
 \end{align*}	 
 \end{defi}

\begin{rem}[``Until'' operator]
    In (3) in the definition~\ref{defi:syntax_MTL}, $\U_I$ is called an \emph{until operator.}
    The interval $I$ appearing in the until operator $\U_I$ can be closed, left open, right open, or purely open. This means that $I$ can take the form $I=[a,b]$, $(a,b]$, $[a,b)$, or $(a,b)$, respectively. Furthermore, when $I$ is unbounded, it can only be of the form $I=(a,b)$ or $I=[a,b)$, where $b$ can take the value $\infty$.	
\end{rem}

We define two types of semantics for the previously presented syntax: one for the continuous time domain and the other for the discrete--time domain.

\begin{defi}[Continuous Semantics of MTL-Formulas]\label{defi:contiMTL}
	Consider a path $X(\omega)$ of the stochastic process $\{X\}_{ t\geq0}$ with a fixed $\omega\in\Omega$. Additionally, for each atomic proposition $a\in\mathrm{AP}$, let us assign a Borel set $B_a$ on the domain $E$. The continuous semantics of MTL--formulas are recursively defined as follows:
	\begin{align*}
		X(\omega),t\models a&\iff&&X_t(\omega)\in B_a\\
		X(\omega),t\models \lnot \phi &\iff&&\text{not }[X(\omega),t\models\phi]\\
		X(\omega),t\models \phi_1\wedge\phi_2&\iff&&X(\omega),t\models\phi_1\text{ and } X(\omega),t\models\phi_2\\
		X(\omega),t\models \phi_1\U_I\phi_2&\iff&&\exists s\in I \text{ s.t.: } X(\omega),t+s\models \phi_2\text{ and }\\
		&&&\forall s'\in[t,t+s),\ X(\omega),s'\models\phi_1
	\end{align*}
\end{defi}

\begin{defi}[Time set]\label{defi:time_set}
	We introduce the notations $\llbracket\phi\rrbracket$, $\llbracket\phi\rrbracket(t)$, and $\bracketo{\phi}$ as follows:
	\begin{align*}
		\bracket{\phi}&:=\{(\omega,t) ; X(\omega),t\models \phi\},\\
		\bracket{\phi}(t)&:=\{\omega ; X(\omega),t\models \phi\},\\
		\bracketo{\phi}&:=\{t\geq0 ;X(\omega),t\models \phi\}.
	\end{align*}
	In particular, we refer to $\bracketo{\phi}$ as the ``time set'' associated with $\phi$.	
\end{defi}

\begin{defi}[Discrete Semantics of MTL-Formulas]
    Let us consider the path $X(\omega)$ of $\{X_t\}_{t\geq0}$ and the assignment $B_a$ for an atomic proposition $a\in\mathrm{AP}$, as well as Definition~\ref{defi:contiMTL}.
    For any $n\in\N$, we denote $\{k/n; k\in\N\}$ as $\N/n$. The discrete semantics of MTL--formulas for any $t\in\N/n$ is defined recursively as follows:
	\begin{align*}
		X(\omega),t\models_n a&\iff&&X_t(\omega)\in B_a\\
		X(\omega),t\models_n \lnot \phi &\iff&& \text{not }[X(\omega),t\models_n\phi]\\
		X(\omega),t\models_n \phi_1\wedge\phi_2&\iff&&X(\omega),t\models_n\phi_1\text{ and } X(\omega),t\models_n\phi_2\\
		X(\omega),t\models_n \phi_1\U_I\phi_2&\iff&&\exists s\in I\cap \N/n \text{ s.t.: } X(\omega),t+s\models_n \phi_2\text{ and }\\
		&&&\forall s'\in[t,t+s)\cap \N/n,\ X(\omega),s'\models_n\phi_1
	\end{align*}
\end{defi}

For $t\in\N/n$, we denote by $\llbracket\phi\rrbracket_n(t)$ the set $\{\omega;X(\omega),t\models_n\phi\}$.

\begin{rem}[Constants, Disjunction, Diamond operator and Box Operator]\label{rem:diamond_box}
    We often use two constants of propositional logic \emph{$\top$ (top) } and \emph{$\bot$ (bottom)}.
    Top means undoubted tautology whose truth nobody could ever question, while bottom means undoubted contradiction.
    These two constants can be represented as
    \begin{align*}
        \top=\phi\vee \lnot \phi\\
        \bot=\phi \wedge \lnot \phi
    \end{align*}
    by arbitrary MTL--formula $\phi$.
    We often use the following notation:
    \begin{align*}
        \phi_1\vee \phi_2&=\lnot((\lnot\phi_1)\wedge(\lnot\phi_2)),\\
        \Diamond_I\phi&=\top \U_I\phi,\\
        \Box_I\phi&=\lnot(\Diamond_I\lnot \phi),
    \end{align*}
    We refer to $\Diamond_I$ and $\Box_I$ as the diamond and box operators, respectively.
    In the continuous and discrete semantics, the following equivalences hold:
    \begin{align*}
        X(\omega),t\models\Diamond_I\phi\Leftrightarrow&(\exists s\in I)[X(\omega),t+s\models\phi],\\
        X(\omega),t\models\Box_I\phi\Leftrightarrow&(\forall s\in I)[X(\omega),t+s\models\phi].
    \end{align*}
    \begin{align*}
        X(\omega),t\models_n\Diamond_I\phi\Leftrightarrow&(\exists s\in I\cap \N/n)[X(\omega),t+s\models_n\phi],\\
        X(\omega),t\models_n\Box_I\phi\Leftrightarrow&(\forall s\in I\cap\N/n)[X(\omega),t+s\models_n\phi].
    \end{align*}
\end{rem}

\section{Proof of Measurability}
\label{sec:proof_of_measurability}

As introduced in Definition~\ref{defi:measure_space}, the probability $\Prob(F)$ can only be defined for a measurable set $F$.
Therefore, in order to define $\Prob(\omega\in\Omega ; X(\omega),t\models\phi)$ or $\Prob(\omega\in\Omega;X(\omega),t\models_n\phi)$, it is necessary to show the measurability of $\bracket{\phi}(t)=\{\omega\in\Omega ; X(\omega),t\models\phi\}$ or $\bracket{\phi}_n(t)=\{\omega\in\Omega;X(\omega),t\models_n\phi\}$, respectively.
Since the definition of the discrete semantics of MTL involves the intersection or union of at most countably many sets, the measurability of $\llbracket\phi\rrbracket_n(t) $ follows directly from Definition~\ref{defi:sigma_algebra} of the $\sigma$-algebra.
Then $\Prob(\omega\in\Omega ; X(\omega),t\models_n\phi)$ can be defined.
However, the measurability of $\llbracket\phi\rrbracket(t)$ is not straightforward.
Then, it is not apparent whether $\Prob(\omega\in\Omega; X(\omega),t\models\phi)$ can be defined.

To illustrate the difficulty, let $X$ be an $E$--valued stochastic process, $a,b$ be atomic propositions, and $I$ be an interval on $[0,\infty)$.
Then $X(\omega),t\models a$ is equivalent to $X_t(\omega)\in B_a$ for some Borel set $B_a$, and $X(\omega),t\models b$ is equivalent to $X_t(\omega)\in B_b$ for some Borel set $B_b$.
Since $X_t$ is $\F/\B(E)$--measurable, then $\bracket{a}(t)=\{\omega\in\Omega ; X_t(\omega)\in B_a\}$ belongs to $\F$ and hence $\Prob(\omega\in\Omega; X(\omega),t\models a)$ can be defined.
$\Prob(\omega\in\Omega ; X(\omega),t\models b)$ can be defined for the same reason.

However, can we define $\Prob(\omega\in\Omega; X(\omega),t\models a\U_Ib)$?
From the definition of the until operator, $X(\omega),t\models a\U_I b$ is equivalent to the following:

\begin{align*}
    (\exists s\in I)[X_{t+s}(\omega)\in B_b \text{ and }(\forall s'\in[0,s))[X_{t+s'}\in B_a]].
\end{align*}
Therefore, 

\begin{align*}
    \bracket{a\U_Ib}(t)=\bigcup_{s\in I}\bigcap_{s'\in[0,s)}\{\omega\in\Omega ; X_{t+s}(\omega)\in B_b\}\cap\{\omega\in\omega ;X_{t+s'}(\omega)\in B_a\}.
\end{align*}
Although the measurability of $\{\omega \in \Omega : X_{t+s}(\omega) \in B_b\} \cap \{\omega \in \Omega : X_{t+s'}(\omega) \in B_a\}$ follows from the $\mathcal{F}/\mathcal{B}(E)$-measurability of $X_{t+s}$ and $X_{t+s'}$, the representation of $\llbracket a \U_I b\rrbracket(t)$ involves uncountable intersections and unions of these sets.
Since Definition~\ref{defi:sigma_algebra} guarantees that measurability is preserved under countable unions or intersections, the semantics of negation "$\lnot$" and logical conjunction "$\wedge$" inherits the measurability.
On the other hand, the measurability of $\llbracket a \U_I b\rrbracket(t)=\{\omega\in\Omega;X(\omega),t\models a\U_I b\}$ is not obvious.
Thus, we have observed that the challenge arises when showing the preservation of the measurability under until operator $\U_I$.

Toward this goal, we utilize a profound theorem from the theory of capacity.
We obtain a representation of the until operator $\U_I$ using the inverse image of the \emph{reaching time} (or \emph{debut}) of a set $B$ on $\Omega\times[0,\infty]$.
Then the inverse image is represented as the projection of a measurable set on $\Omega\times[0,\infty)$ to $\Omega$.
By employing capacity theory and the representation of the until operator by the reaching times, we show that the measurability saves under the until operator.

Now let us introduce reaching time:
\begin{defi}
	Consider a subset $B$ of $\Omega\times[0,\infty)$. \emph{The reaching time} or \emph{debut} $\tau_B(\omega,t)$ of $B$ is defined for each $\omega\in\Omega$ as the first time $s>t$ at which $(\omega,s)$ reaches $B$, given by:
	\begin{align*}
				 \tau_B(\omega,t):=\inf\{s>t;(\omega,s)\in B\},
	\end{align*}
	where $\tau_B(\omega,t):=\infty$ if $\{s>t;(\omega,s)\in B \}=\emptyset$.
\end{defi}

\begin{lem}\label{lem:tau_rcontinous}
    Let $B$ be a subset of $\Omega\times[0,\infty)$.
    Then, the reaching time $t\mapsto\tau_B(\omega,t)$ is right-continuous for every $\omega \in \Omega$.	
\end{lem}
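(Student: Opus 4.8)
The plan is to show that for any fixed $\omega\in\Omega$, the function $t\mapsto\tau_B(\omega,t)$ does not jump upward when approached from the right, i.e. that $\lim_{t'\downarrow t}\tau_B(\omega,t)\le\tau_B(\omega,t')$ fails -- more precisely that $\tau_B(\omega,t)=\lim_{t'\downarrow t}\tau_B(\omega,t')$. Fix $\omega$ and abbreviate $f(t):=\tau_B(\omega,t)=\inf\{s>t;(\omega,s)\in B\}$ and $S:=\{s\ge 0;(\omega,s)\in B\}$, so that $f(t)=\inf(S\cap(t,\infty))$ with the convention $\inf\emptyset=\infty$. The first observation is monotonicity: if $t_1\le t_2$ then $S\cap(t_2,\infty)\subseteq S\cap(t_1,\infty)$, hence $f(t_1)\le f(t_2)$; so $f$ is non-decreasing and the right limit $f(t^+):=\lim_{t'\downarrow t}f(t')$ exists and satisfies $f(t)\le f(t^+)$.

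The core of the argument is the reverse inequality $f(t^+)\le f(t)$. I would split on whether $f(t)=\infty$ or $f(t)<\infty$. If $f(t)=\infty$, then trivially $f(t^+)\le\infty=f(t)$ and there is nothing to prove. If $f(t)<\infty$, fix an arbitrary $\varepsilon>0$; by definition of the infimum there is some $s\in S$ with $t<s<f(t)+\varepsilon$. Now choose any $t'$ with $t<t'<s$; then $s\in S\cap(t',\infty)$, so $f(t')\le s<f(t)+\varepsilon$. Since this holds for all $t'\in(t,s)$, we get $f(t^+)=\lim_{t'\downarrow t}f(t')\le f(t)+\varepsilon$. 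As $\varepsilon>0$ was arbitrary, $f(t^+)\le f(t)$, and combined with monotonicity this yields $f(t^+)=f(t)$, i.e. right-continuity at $t$. Since $t\in[0,\infty)$ and $\omega\in\Omega$ were arbitrary, the claim follows.

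I do not anticipate a serious obstacle here: the statement is essentially the standard fact that the debut, as a function of the lower endpoint, is right-continuous, and it rests only on the elementary behaviour of $\inf$ over nested families of sets. The one point that needs a little care is the bookkeeping with the strict inequality $s>t$ in the definition of $\tau_B$: one must make sure that when producing the witness $s$ for $f(t)$ it can be taken strictly greater than $t$ (which it is, by definition of $f(t)$ as an infimum over $(t,\infty)$), and that the auxiliary point $t'$ is chosen strictly between $t$ and $s$ so that $s$ genuinely lies in $(t',\infty)$. No measurability, no structure on $B$, and no continuity properties of the process $X$ are needed -- the lemma is purely about the real-variable function $t\mapsto\inf(S\cap(t,\infty))$ for an arbitrary set $S\subseteq[0,\infty)$.
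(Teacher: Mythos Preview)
Your proof is correct. The overall strategy differs slightly from the paper's, though both are short and elementary. You argue via monotonicity (so that the right limit $f(t^+)$ exists and satisfies $f(t)\le f(t^+)$) and then establish the reverse inequality by an $\varepsilon$-argument, splitting on whether $f(t)$ is finite. The paper instead splits on whether $\tau_B(\omega,t)>t$ or $\tau_B(\omega,t)=t$: in the first case it observes that $\tau_B(\omega,\cdot)$ is in fact \emph{constant} on the interval $(t,\tau_B(\omega,t))$, which immediately gives the limit; in the second case it runs essentially your $\varepsilon$-argument. Your decomposition is a bit more systematic and makes explicit that the result is a general fact about the function $t\mapsto\inf(S\cap(t,\infty))$ for an arbitrary set $S\subseteq[0,\infty)$; the paper's decomposition gives slightly more information in the case $\tau_B(\omega,t)>t$ (local constancy, not just right-continuity). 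Either route is entirely adequate for the lemma.
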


\begin{proof}
    Assume $\tau_B(\omega,t)>t$.
    We can express $\tau_B(\omega,t)$ as $t+\alpha$ for some $\alpha > 0$. According to the definition of $\tau_B(\omega,t)$, for every $s$ in the interval $(t,t+\alpha)$, it holds that $(\omega,s) \notin B$. Therefore, we have $\tau_B(\omega,s) = t+\alpha$ for such $s$, and as a result, $\lim_{s\downarrow t}\tau_B(\omega,s) = t+\alpha = \tau_B(\omega,t)$.
	
    Assume $\tau_B(\omega,t)=t$. For every $\epsilon>0$, there exists $\delta\in(0,\epsilon)$ such that $(\omega,t+\delta)\in B$. Therefore, $\tau_B(\omega,s)\leq t+\delta<t+\epsilon$ for every $s\in (t,t+\delta)$, which implies $\lim_{s\downarrow t}\tau_B(\omega,s)=t=\tau_B(\omega,t)$.
\end{proof}

The following lemma is an abstract version of Proposition 1.1.13 in \cite{MR1121940}.

\begin{lem}\label{lemm:2}
    Let $(\Omega, \F, \Prob)$ be a complete probability space.
    Suppose that a $[0,\infty]$--valued stochastic process $\{Y_t\}_{t\geq0}$ satisfies the following:
    \begin{itemize}
        \item $Y_t$ is $\F/\B([0,\infty])$--measurable for every $t \in [0, \infty)$, i.e., $Y_t^{-1}(B) \in \F$ for every $B \in \B([0,\infty])$.
        \item $t \mapsto Y_t(\omega)$ is right continuous for every $\omega \in \Omega$.
    \end{itemize}
    Then, $\{(\omega, t); Y_t(\omega) \in B\} \in \F\otimes\B([0,\infty))$ for every $B\in \B([0,\infty])$ holds.
    In other word, $(\omega,t)\mapsto Y_t(\omega)$ is $\F\otimes\B([0,\infty))/\B([0,\infty])$-measurable.
\end{lem}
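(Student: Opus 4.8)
The plan is to approximate the right-continuous process $\{Y_t\}_{t\geq 0}$ by a sequence of ``discretized'' processes $\{Y^{(n)}_t\}_{t\geq 0}$ that are piecewise constant on dyadic (or $1/n$) intervals, show that each discretized process is jointly measurable almost trivially because it only involves countably many of the random variables $Y_t$, and then pass to the limit using right-continuity. Concretely, for each $n\in\N$ define
\begin{align*}
    Y^{(n)}_t(\omega):=Y_{\lceil nt\rceil/n}(\omega),
\end{align*}
so that on each interval $[k/n,(k+1)/n)$ the process $Y^{(n)}$ agrees with the single random variable $Y_{(k+1)/n}$ (I would fix the endpoint convention so that right-continuity gives convergence; the step functions should take the value at the \emph{right} endpoint of each mesh interval, since we will later let the mesh shrink from the right).

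The first step is to verify joint measurability of $Y^{(n)}$. For a Borel set $B\in\B([0,\infty])$ we can write
\begin{align*}
    \{(\omega,t);Y^{(n)}_t(\omega)\in B\}=\bigcup_{k=0}^{\infty}\Bigl(Y^{-1}_{(k+1)/n}(B)\times[k/n,(k+1)/n)\Bigr),
\end{align*}
which is a countable union of products of an $\F$-measurable set with a Borel subset of $[0,\infty)$; hence it lies in $\F\otimes\B([0,\infty))$ by the very definition of the product $\sigma$-algebra together with condition (iii) of Definition~\ref{defi:sigma_algebra}. So each $Y^{(n)}$ is $\F\otimes\B([0,\infty))/\B([0,\infty])$-measurable.

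The second step is the limit passage. By right-continuity of $t\mapsto Y_t(\omega)$ at the point $t$, and since $\lceil nt\rceil/n\downarrow t$ as $n\to\infty$, we get $Y^{(n)}_t(\omega)\to Y_t(\omega)$ for every $(\omega,t)$, pointwise on all of $\Omega\times[0,\infty)$ (with values in the compact metric space $[0,\infty]$). A pointwise limit of measurable functions into a metric space is measurable, so $(\omega,t)\mapsto Y_t(\omega)$ is $\F\otimes\B([0,\infty))/\B([0,\infty])$-measurable, which is the claim; the assertion about inverse images of Borel sets $B$ is just a restatement. Note that completeness of $(\Omega,\F,\Prob)$ and the per-$t$ measurability of $Y_t$ are exactly what make the approximants well-defined measurable objects, and right-continuity (rather than mere continuity) is what lets the mesh points approach $t$ from above.

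The main obstacle, and the only point requiring a little care, is the bookkeeping at the mesh points and the endpoint conventions: one must make sure the chosen sequence of approximating step functions actually converges to $Y_t$ at \emph{every} $t$, not merely off the dyadic rationals, and this is precisely why we evaluate at the right endpoint $\lceil nt\rceil/n$ and invoke right-continuity rather than left-continuity. Everything else is routine: the countable-union argument for measurability of the approximants, and the standard fact that pointwise limits of measurable maps into $[0,\infty]$ are measurable.
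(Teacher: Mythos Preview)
Your proof is correct and follows essentially the same approach as the paper's: approximate by step functions evaluated at the right endpoints of a shrinking mesh (the paper uses dyadic intervals $(k/2^n,(k+1)/2^n]$ where you use $1/n$-intervals), verify joint measurability of each approximant via a countable union of measurable rectangles, and pass to the limit using right-continuity and stability of measurability under pointwise limits. One small correction: completeness of $(\Omega,\F,\Prob)$ is not actually used anywhere in this argument (nor in the paper's proof), so your remark that it is ``exactly what makes the approximants well-defined'' overstates its role---only the per-$t$ measurability of $Y_t$ is needed.
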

\begin{proof}
    For $t>0$, $n\geq1$, and $k=0,1,\ldots$, we define $Y_t^{(n)}(\omega)=Y_{(k+1)/2^n}(\omega)$ for $\frac{k}{2^n}<t\leq\frac{k+1}{2^n}$, and $Y^{(n)}_0=Y_0(\omega)$.
    The mapping $(\omega,t)\mapsto Y^{(n)}_t(\omega)$ from $\Omega\times[0,\infty)$ to $[0,\infty]$ is demonstrably $\F\otimes\B([0,\infty))/\B([0,\infty])$-measurable.
    Furthermore, by right--continuity, we have $Y^{(n)}_t(\omega) \rightarrow Y_t(\omega)$ if $n \rightarrow \infty$ for any $(\omega,t)\in[0,\infty)\times\Omega$.
    Consequently, the limit mapping $(\omega,t)\mapsto Y_t(\omega)$ is also $\F\otimes\B([0,\infty))/\B([0,\infty])$-measurable.	
\end{proof}

Now, let us show the measurability of the reaching time when considering it as a stochastic process.
This result is derived from capacity theory, which guarantees the measurability of the projection (of a well-behaved set).

\begin{lem}
    If $B$ belongs to $\F\otimes\B([0,\infty))$, the mapping $(\omega,t)\mapsto \tau_B(\omega,t)$ is $\F\otimes\B([0,\infty))/\B([0,\infty])$-measurable, i.e., $\{(\omega, t); \tau_B(\omega, t)\in A\}\in \F\otimes\B([0,\infty))$ for all $A \in \B([0,\infty])$.
\end{lem}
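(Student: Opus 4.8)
The plan is to reduce the measurability of the reaching time to a statement about projections of measurable sets, which is exactly what capacity theory delivers. First I would fix $A \in \B([0,\infty])$ and observe that, because $t \mapsto \tau_B(\omega,t)$ is right--continuous for every $\omega$ by Lemma~\ref{lem:tau_rcontinous}, Lemma~\ref{lemm:2} applies provided we can show that $\omega \mapsto \tau_B(\omega,t)$ is $\F/\B([0,\infty])$--measurable for each fixed $t$. So the real content is: for fixed $t$, the map $\omega \mapsto \tau_B(\omega,t) = \inf\{s > t ; (\omega,s)\in B\}$ is a random variable. It suffices to check that $\{\omega ; \tau_B(\omega,t) < c\} \in \F$ for every $c \in [0,\infty)$ (the sets $[0,c)$ generate $\B([0,\infty])$, together with $\{\infty\}$, whose preimage is then the complement of a countable union).

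Next I would rewrite this event as a projection: by the definition of the infimum,
\begin{align*}
\{\omega ; \tau_B(\omega,t) < c\} = \{\omega ; \exists s \in (t,c) \text{ with } (\omega,s) \in B\} = \pi_\Omega\bigl(B \cap (\Omega \times (t,c))\bigr),
\end{align*}
where $\pi_\Omega : \Omega \times [0,\infty) \to \Omega$ is the canonical projection. Since $B \in \F \otimes \B([0,\infty))$ and $\Omega \times (t,c) \in \F \otimes \B([0,\infty))$, the set $B \cap (\Omega \times (t,c))$ lies in the product $\sigma$--algebra. The claim that its projection onto $\Omega$ is $\F$--measurable is precisely the \emph{measurable projection theorem} (or ``measurable selection/projection'' theorem, a consequence of Choquet's capacitability theorem), which holds whenever $(\Omega,\F,\Prob)$ is a complete probability space --- this is the ``profound theorem from the theory of capacity'' the paper refers to, cited as \cite{ouc.200324971419720101}. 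Completeness of $(\Omega,\F,\Prob)$ is essential here and is part of the hypotheses carried through this section.

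Once we know $\omega \mapsto \tau_B(\omega,t)$ is $\F/\B([0,\infty])$--measurable for each $t$, Lemma~\ref{lem:tau_rcontinous} supplies the right--continuity of $t \mapsto \tau_B(\omega,t)$, and so the two hypotheses of Lemma~\ref{lemm:2} (applied with $Y_t = \tau_B(\,\cdot\,,t)$) are met; Lemma~\ref{lemm:2} then yields that $(\omega,t) \mapsto \tau_B(\omega,t)$ is $\F \otimes \B([0,\infty))/\B([0,\infty])$--measurable, which is the assertion.

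The main obstacle is the projection step: it is genuinely nontrivial that the projection of an $\F\otimes\B([0,\infty))$--set is $\F$--measurable, and this fails without completeness of $\F$ (the projection is only analytic in general, not Borel). I would invoke the measurable projection theorem as a black box, being careful to state that it requires the underlying probability space to be complete, and to note that $[0,\infty)$ (or its Borelization) is a Polish space so that the theorem applies to $B \subseteq \Omega \times [0,\infty)$. Everything else --- the rewriting of the sublevel set as a projection, the passage from sublevel sets to general Borel sets in $[0,\infty]$, and the final appeal to Lemmas~\ref{lem:tau_rcontinous} and~\ref{lemm:2} --- is routine.
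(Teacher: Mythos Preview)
Your proposal is correct and matches the paper's proof almost exactly: both reduce to showing $\omega \mapsto \tau_B(\omega,t)$ is $\F$-measurable for each fixed $t$ by writing $\{\omega;\tau_B(\omega,t)<u\}=\pi_\Omega\bigl(B\cap(\Omega\times(t,u))\bigr)$ and invoking the measurable projection theorem (the paper cites it as Theorem~I-4.14 in~\cite{OUE.MC0002973019910101}), then appeal to right-continuity (Lemma~\ref{lem:tau_rcontinous}) and Lemma~\ref{lemm:2} for joint measurability. Your explicit emphasis on the completeness hypothesis and the Polish structure of $[0,\infty)$ is a welcome addition that the paper leaves somewhat implicit.
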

\begin{proof}
    From Lemma~\ref{lem:tau_rcontinous}, the reaching times $\tau_B(t,\omega):=\inf \{s > t;(\omega,s)\in B \}$ are right-continuous with respect to $t\in[0,\infty)$.
    Then it is enough to show that $\omega\mapsto \tau_B(\omega,t)$ is $\F/\B([0,\infty])$--measurable because of Lemma~\ref{lemm:2}.
    From the definition of $\tau_B(\omega,t)$, we can represent $\{ \omega ; \tau_B(\omega,t)<u \}$ by using  projection mapping $\pi:\Omega\times[0,\infty]\rightarrow\Omega$ as 
    \begin{align*}
        \{ \omega ; \tau_B(\omega,t)<u \}=\pi(B\cap\{\Omega\times(t,u)\}),\hspace{10pt}\forall u\in[0,\infty].
    \end{align*}
    Since $[0,\infty]$ is a locally compact space with countable basis, $\F$ is complete, and the set $B\cap\{\Omega\times(t,u)\}$ belongs to $\F\otimes \B([0,\infty))$, we can apply \emph{Theorem I-4.14 in~\cite{OUE.MC0002973019910101}} to show $\pi(B\cap\{\Omega\times(t,u)\})\in\F$.
    Therefore, $\{\omega\in\Omega ; \tau_B(\omega,t)<u\}\in\F$ for all $u\geq0$, which implies the $\F/\B([0,\infty])$--measurability of the map $\omega\mapsto\tau_B(\omega,t)$.
\end{proof}

The subsequent lemma regarding the two types of subsets follows directly from basic arguments in measure theory.

\begin{lem}\label{lem::measurableelements}
Let $B\in \F\otimes\B([0,\infty))$.
    Let $f$ and $g$ be functions from $\Omega\times[0,\infty)$ to $[0,\infty]$, which are $\F\otimes\B([0,\infty))/\B([0,\infty])$-measurable.
    Then, the following sets belong to $\F\otimes\B([0,\infty))$.
    \begin{align}
        &\{ (\omega, t) \in \Omega\times[0,\infty); f(\omega,t)\geq g(\omega,t)\},\label{eq:ineq_fg}\\
        &\{(\omega,t)\in\Omega\times[0,\infty);(\omega,f(\omega,t)) \in B\}.
    \end{align}
\end{lem}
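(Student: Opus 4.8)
The plan is to treat the two sets separately, reducing each to a statement about measurable functions on the product space $\Omega\times[0,\infty)$, and to use the product $\sigma$--algebra structure only through Fact~\ref{fact:section_measurability} and the usual closure properties. For the first set~\eqref{eq:ineq_fg}, the idea is the standard trick: introduce the auxiliary map $h:\Omega\times[0,\infty)\to[0,\infty]\times[0,\infty]$ defined by $h(\omega,t):=(f(\omega,t),g(\omega,t))$. Since $f$ and $g$ are each $\F\otimes\B([0,\infty))/\B([0,\infty])$--measurable and $\B([0,\infty])\otimes\B([0,\infty])=\B([0,\infty]\times[0,\infty])$ (the spaces being second countable), $h$ is $\F\otimes\B([0,\infty))/\B([0,\infty]\times[0,\infty])$--measurable. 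The set $D:=\{(x,y)\in[0,\infty]\times[0,\infty]; x\geq y\}$ is closed, hence Borel, and $\{(\omega,t); f(\omega,t)\geq g(\omega,t)\}=h^{-1}(D)$, which therefore lies in $\F\otimes\B([0,\infty))$. (One minor point to address: $[0,\infty]$ is a compact metric space under the usual order topology, so ``$x\geq y$'' defines a closed set there as well, including the point at infinity; this just needs a sentence.)

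For the second set, the plan is to write it as the inverse image of $B$ under the map $\Phi:\Omega\times[0,\infty)\to\Omega\times[0,\infty)$ given by $\Phi(\omega,t):=(\omega,f(\omega,t))$, after first noting that we may replace $f$ by $\tilde f:=f\wedge M$ on pieces, or more simply, handle the value $+\infty$ directly. The cleanest route: first observe $\{(\omega,t);(\omega,f(\omega,t))\in B\}=\{(\omega,t);(\omega,f(\omega,t))\in B,\ f(\omega,t)<\infty\}$ since $B\subset\Omega\times[0,\infty)$. On the set where $f(\omega,t)<\infty$, the map $\Phi$ takes values in $\Omega\times[0,\infty)$, and I claim it is $\F\otimes\B([0,\infty))/\F\otimes\B([0,\infty))$--measurable. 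To verify this it suffices to check preimages of generators $A\times J$ with $A\in\F$, $J\in\B([0,\infty))$: $\Phi^{-1}(A\times J)=(A\times[0,\infty))\cap\{(\omega,t);f(\omega,t)\in J\}$, and both sets are in $\F\otimes\B([0,\infty))$ — the first trivially, the second by measurability of $f$. Hence $\Phi^{-1}(B)\in\F\otimes\B([0,\infty))$, and intersecting with the measurable set $\{f<\infty\}$ (which is $f^{-1}([0,\infty))\in\F\otimes\B([0,\infty))$) gives the claim.

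I expect the main obstacle to be bookkeeping around the point at infinity rather than anything deep: $f$ and $g$ are allowed to take the value $+\infty$, so one must be slightly careful that ``$f\geq g$'' is still a Borel condition on $[0,\infty]^2$ (it is, with the convention $\infty\geq\infty$), and that in the second set the excursion of $f$ to $+\infty$ contributes nothing because $B$ sits inside $\Omega\times[0,\infty)$. Beyond that, the only non-formal ingredient is the identification $\B([0,\infty])\otimes\B([0,\infty])=\B([0,\infty]\times[0,\infty])$, which holds because $[0,\infty]$ is second countable; I would either cite this as standard or give the one-line argument that the product topology has a countable base of rectangles of basic open sets. Everything else is closure of $\F\otimes\B([0,\infty))$ under finite intersections and preimages, together with the generator check, so the ``basic arguments in measure theory'' remark in the statement is accurate.
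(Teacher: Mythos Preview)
Your proposal is correct and follows essentially the same route as the paper: the first set is handled by the standard pair-map $(\omega,t)\mapsto(f(\omega,t),g(\omega,t))$ into $[0,\infty]^2$ and pulling back the closed set $\{x\geq y\}$, and the second set is handled via the map $(\omega,t)\mapsto(\omega,f(\omega,t))$ and checking measurability on rectangle generators. The only cosmetic difference is in the $+\infty$ bookkeeping for the second set: you intersect with $\{f<\infty\}$, whereas the paper instead views $B$ as an element of $\F\otimes\B([0,\infty])$ (via $\B([0,\infty))\subset\B([0,\infty])$) so that the map $(\omega,t)\mapsto(\omega,f(\omega,t))$ lands in $\Omega\times[0,\infty]$ and no case split is needed; either device works.
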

\begin{proof}
    Since $f$ and $g$ are $\F\otimes\B([0,\infty))/\B([0,\infty])$--measurable, the set \eqref{eq:ineq_fg} is $\F\otimes\B([0,\infty))$--measurable.

	Because $f$ is a $\F\otimes\B([0,\infty))/\B([0,\infty])$ measurable function, $\tilde{f}(\omega, t) = (\omega, f(\omega, t))$ is $\F\otimes\B([0,\infty))/\F\otimes\B([0,\infty])$--measurable function.
	By considering $B$ as a subset of $\Omega\times[0,\infty]$, it becomes $\F\otimes\B([0,\infty])$--measurable. Consequently, $\tilde{f}^{-1}(B)$ is $\F\otimes\B([0,\infty))$--measurable.
	Because 
	\[
		\{(\omega,t)\in\Omega\times[0,\infty);(\omega,f(\omega,t)) \in B\} = \tilde{f}^{-1}(B)\in\F\otimes\B([0,\infty)),
	\]
	The lemma holds.    
\end{proof}
	
Now, we can prove the measurability of $\llbracket \phi\rrbracket$ and $\llbracket \phi\rrbracket(t)$.

\begin{lem}\label{lemm:meas_until}
	Let $\phi_1$ and $\phi_2$ be two MTL-formulas and suppose that both $\llbracket\phi_1\rrbracket$ and $\llbracket\phi_2\rrbracket$ are in $\F\otimes\B([0,\infty))$.
	Then $\{(\omega,t);X(\omega),t\models\phi_1\U_I\phi_2\}$ belongs to $\F\otimes\B([0,\infty))$.
\end{lem}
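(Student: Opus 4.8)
The plan is to reduce the satisfaction relation for $\phi_1\U_I\phi_2$ to a statement about the reaching time of the set where $\phi_2$ holds, restricted appropriately so that $\phi_1$ holds on the intervening interval. First I would unwind the semantics: $(\omega,t)\in\bracket{\phi_1\U_I\phi_2}$ holds iff there is $s\in I$ with $(\omega,t+s)\in\bracket{\phi_2}$ and $(\omega,s')\in\bracket{\phi_1}$ for all $s'\in[t,t+s)$. The key device is to encode ``$\phi_1$ holds on $[t,\cdot)$'' via a reaching time. Let $B_1:=\bracket{\lnot\phi_1}=(\Omega\times[0,\infty))\setminus\bracket{\phi_1}$, which lies in $\F\otimes\B([0,\infty))$ by hypothesis and closure under complement. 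Then $\tau_{B_1}(\omega,t)$ is the first time after $t$ at which $\phi_1$ fails, and the condition ``$\forall s'\in[t,t+s),\ X(\omega),s'\models\phi_1$'' is equivalent (modulo a small care at the left endpoint $s'=t$ and whether $I$ is open at the appropriate end) to $t+s\leq \tau_{B_1}(\omega,t)$ together with $X(\omega),t\models\phi_1$. By the earlier lemma, $(\omega,t)\mapsto\tau_{B_1}(\omega,t)$ is $\F\otimes\B([0,\infty))/\B([0,\infty])$-measurable.

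Next I would express the existential quantifier over $s\in I$. Write the condition as: there exists $s\in I$ with $(\omega,t+s)\in\bracket{\phi_2}$ and $t+s\leq\tau_{B_1}(\omega,t)$. The natural move is to replace the witness $s$ by the reaching time into $\bracket{\phi_2}$. Consider $B_2:=\{(\omega,u)\in\Omega\times[0,\infty); (\omega,u)\in\bracket{\phi_2}\}$, but shifted/intersected so we only look at times in the window $t+I$ that are also $\leq\tau_{B_1}(\omega,t)$. Concretely, define the ``candidate hitting time'' $\rho(\omega,t):=\inf\{u>t; (\omega,u)\in\bracket{\phi_2}\ \text{and}\ u-t\in I\ \text{and}\ u\leq\tau_{B_1}(\omega,t)\}$; one then checks that $\rho$ can be built as the reaching time $\tau_{B}$ of a single set $B\in\F\otimes\B([0,\infty))$, using that $t+I$ is (for fixed $t$) a Borel set with Borel-measurable dependence on $t$, and that $\{(\omega,u); u\leq\tau_{B_1}(\omega,u')\ \text{for the relevant base point}\}$ is measurable via Lemma~\ref{lem::measurableelements} applied to $f(\omega,u)=\tau_{B_1}(\omega,\cdot)$ composed appropriately. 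Then $\bracket{\phi_1\U_I\phi_2}$ is, up to endpoint bookkeeping, the set where $\rho(\omega,t)$ actually attains a witness in $I$, i.e. where $\rho(\omega,t)-t\in I$ and $(\omega,\rho(\omega,t))\in\bracket{\phi_2}$ — and the latter membership is handled by the second clause of Lemma~\ref{lem::measurableelements}, $\{(\omega,t);(\omega,f(\omega,t))\in B\}\in\F\otimes\B([0,\infty))$.

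The main obstacle — and the place I expect to spend the most care — is the boundary behaviour of the interval $I$ and the fact that the infimum defining a reaching time need not be attained. If $I$ is right-open, the witness $s$ may approach $\sup I$ without reaching it, and the reaching time $\tau_B$ records an infimum that may fail to be a genuine witness; symmetrically, if $\bracket{\phi_2}$ is ``open-like'' in time, $(\omega,\rho(\omega,t))$ need not lie in $\bracket{\phi_2}$. I would handle this by a case split on the four shapes of $I$ (closed, left-open, right-open, open) and, where necessary, by writing the existential over $s\in I$ as a countable union over a dense sequence together with a ``limit from the left/right'' argument, or by introducing two reaching times (the debut and a companion for the $\phi_1$-window) and comparing them. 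The right-continuity established in Lemma~\ref{lem:tau_rcontinous} is exactly what makes these limiting manipulations legitimate inside the $\sigma$-algebra. Once the correct finite Boolean combination of sets of the forms $\{f\geq g\}$, $\{(\omega,f(\omega,t))\in B\}$, and preimages $\{h(\omega,t)\in A\}$ with $A\in\B([0,\infty])$ is written down, each piece lies in $\F\otimes\B([0,\infty))$ by Lemma~\ref{lem::measurableelements} and the measurability of reaching times, and the proof concludes.
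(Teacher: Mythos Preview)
Your overall strategy matches the paper's---reduce the until operator to reaching times and then assemble a finite Boolean combination of sets of the forms in Lemma~\ref{lem::measurableelements}---but the primary construction you propose has a gap. Your ``candidate hitting time'' $\rho(\omega,t)=\inf\{u>t;(\omega,u)\in\bracket{\phi_2},\ u-t\in I,\ u\leq\tau_{B_1}(\omega,t)\}$ is \emph{not} the debut of a fixed set $B\subset\Omega\times[0,\infty)$: the constraints $u-t\in I$ and $u\leq\tau_{B_1}(\omega,t)$ depend on the base point $t$, not just on $(\omega,u)$, so the lemma giving measurability of $\tau_B$ does not apply directly. Your parenthetical ``for the relevant base point'' is where this breaks.

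The fix is exactly the fallback you mention: use \emph{two} separate reaching times and compare them. The paper sets $\tau_1=\tau_{\bracket{\phi_1}^C}$ and $\tau_2=\tau_{\bracket{\phi_2}}$, and for $I=[a,b]$ writes $\bracket{\phi_1\U_I\phi_2}$ as $\bracket{\phi_1}$ intersected with a four-way disjunction: endpoint witnesses at $t+a$ or $t+b$, and interior witnesses via $\tau_2(\omega,t+a)$. The crucial point---which you flag as an obstacle but then propose to handle by countable dense unions---is that the infimum $\tau_2(\omega,t+a)$ need not itself be a witness. A countable-union argument does not work here, since $\bracket{\phi_2}$ is an arbitrary measurable set and a witness need not be approximable by rationals at which $\phi_2$ holds. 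The paper's device is instead to split on whether $(\omega,\tau_2(\omega,t+a))\in\bracket{\phi_2}$: if yes, require $\tau_1(\omega,t)\geq\tau_2(\omega,t+a)$; if no, require the strict inequality $\tau_1(\omega,t)>\tau_2(\omega,t+a)$ (so that some genuine witness slightly past $\tau_2$ is still inside the $\phi_1$-window). Both pieces are measurable by the two clauses of Lemma~\ref{lem::measurableelements}, and that is the whole proof.
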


We prove Lemma~\ref{lemm:meas_until} using the right continuity of $\tau_i(t),\ i=1,2$ and Lemma~\ref{lemm:2}.
\begin{proof}
    To prove this lemma, we put 
    \begin{align*}
	\tau_1(\omega, t)&:=\tau_{\llbracket\phi_1\rrbracket^C}(\omega, t)=\inf\{s>t;X(\omega),s\not\models\phi_1\},\\	
	\tau_2(\omega, t)&:=\tau_{\llbracket\phi_2\rrbracket}(\omega, t)=\inf\{s>t;X(\omega),s\models\phi_2\}.
    \end{align*}
    By Lemma~\ref{lemm:2}, $\tau_1$ and $\tau_2$ are $\F\otimes\B([0,\infty))/\B([0,\infty])$-measurable.

    We only prove the case where $I=[a,b]$.
    The cases for other forms of $I$ can be proved in similar ways.
    For simplicity, suppose that $a>0$.
    Then $X(\omega),t\models\phi_1\U_I\phi_2$ holds if and only if $X(\omega), t \models \phi_1$ holds and one of the following possibilities holds:
    \begin{enumerate}
        \item $X(\omega),t+a\models\phi_2$ holds and $\tau_1(\omega,t)\geq t+a$
	\item $X(\omega), t+b \models \phi_2$ holds and $\tau_1(\omega, t) \geq t+b$ holds
	\item $\tau_2(\omega, t+a) < t+b$, $X(\omega), \tau_2(\omega, t+a) \models \phi_2$ and $\tau_1(\omega, t) \geq \tau_2(\omega, t+a)$ hold
	\item $\tau_2(\omega, t+a) < t+b$, $X(\omega), \tau_2(\omega, t+a) \not\models \phi_2$ and $\tau_1(\omega, t) > \tau_2(\omega, t+a)$ hold
    \end{enumerate}
    By $\F\otimes\B([0,\infty))/\B([0,\infty])$--measurability of $\tau_1$ and $\tau_2$,
    \begin{align*}
        &\{ (\omega, t) ;\tau_1(\omega,t) \geq t+a\},\\
        &\{ (\omega, t) ;\tau_1(\omega,t) \geq t+b\},\text{ and}\\
        &\{(\omega,t) ; \tau_2(\omega, t+a) < t+b\}
    \end{align*}
    are in $\F\otimes\B([0,\infty))$.
    Thanks to Lemma~\ref{lem::measurableelements},
    \begin{align*}
        &\{(\omega, t) ; \tau_1(\omega, t) \geq \tau_2(\omega, t+a)\}\text{ and}\\
        &\{(\omega, t) ; \tau_1(\omega, t) > \tau_2(\omega, t+a)\}
    \end{align*}
    are in $\F\otimes\B([0,\infty))$.
    Since $\tau_2(\omega,t+a)$ is $\F\otimes\B([0,\infty))/\B([0,\infty])$--measurable and then
    \begin{align*}
        \{ (\omega, t) ; X(\omega), \tau_2(\omega, t+a) \models \phi_2 \}&=\{(\omega,t); (\omega,\tau_2(\omega,t+a))\in\bracket{\phi_2}\},\\
        \{ (\omega, t) ; X(\omega), \tau_2(\omega, t+a) \not\models \phi_2 \}&=\{(\omega,t); (\omega,\tau_2(\omega,t+a))\notin\bracket{\phi_2}\}.
    \end{align*}
    From Lemma~\ref{lem::measurableelements}, both sets are in $\F\otimes\B([0,\infty))$.
    This completes the proof of $\F\otimes\B([0,\infty))$-measurability of $X(\omega),t\models\phi_1\U_I\phi_2$.
\end{proof}

\begin{thm}
	For each MTL-formula $\phi$, $\bracket{\phi}$ is $\F\otimes\B([0,\infty)) $-measurable and $\bracket{\phi}(t)$ is $\F$-measurable for all $t\geq0$.
\end{thm}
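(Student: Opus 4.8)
The plan is to prove the first assertion, $\bracket{\phi}\in\F\otimes\B([0,\infty))$, by structural induction on the MTL--formula $\phi$ following the grammar of Definition~\ref{defi:syntax_MTL}, and then to obtain the $\F$--measurability of $\bracket{\phi}(t)$ as an immediate corollary via section measurability.

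For the base case, let $\phi=a$ be an atomic proposition with associated Borel set $B_a\subseteq E$. By the continuous semantics, $\bracket{a}=\{(\omega,t);X_t(\omega)\in B_a\}$, and this set lies in $\F\otimes\B([0,\infty))$ precisely by Assumption~\ref{cond:measurability}, the standing measurability hypothesis on the stochastic process $X$.

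For the inductive step I would handle the three remaining constructs. If $\phi=\lnot\psi$ and $\bracket{\psi}\in\F\otimes\B([0,\infty))$, then $\bracket{\lnot\psi}=(\Omega\times[0,\infty))\setminus\bracket{\psi}$ is measurable since a $\sigma$--algebra is closed under complementation. If $\phi=\phi_1\wedge\phi_2$ with $\bracket{\phi_1},\bracket{\phi_2}\in\F\otimes\B([0,\infty))$, then $\bracket{\phi_1\wedge\phi_2}=\bracket{\phi_1}\cap\bracket{\phi_2}$ is measurable by closure under finite intersections. If $\phi=\phi_1\U_I\phi_2$, the induction hypothesis supplies $\bracket{\phi_1},\bracket{\phi_2}\in\F\otimes\B([0,\infty))$, and Lemma~\ref{lemm:meas_until} then gives $\bracket{\phi_1\U_I\phi_2}\in\F\otimes\B([0,\infty))$ at once. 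This closes the induction.

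Finally, fix $t\geq0$. Since $\bracket{\phi}\in\F\otimes\B([0,\infty))$, the set $\bracket{\phi}(t)=\{\omega\in\Omega;(\omega,t)\in\bracket{\phi}\}$ is a section of $\bracket{\phi}$ and therefore belongs to $\F$ by Fact~\ref{fact:section_measurability}. The only genuinely delicate point — stability of measurability under the until operator — has already been isolated and discharged in Lemma~\ref{lemm:meas_until}, which itself rests on the capacity--theoretic measurability of reaching times; hence within the present theorem there is no remaining obstacle, and the proof is a routine structural induction together with one appeal to section measurability.
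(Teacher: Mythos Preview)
Your proof is correct and follows essentially the same approach as the paper's own proof: a structural induction on $\phi$ with the atomic case handled by Assumption~\ref{cond:measurability}, negation and conjunction by closure properties of the $\sigma$--algebra, the until case by Lemma~\ref{lemm:meas_until}, and the $\F$--measurability of $\bracket{\phi}(t)$ by Fact~\ref{fact:section_measurability}.
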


\begin{proof}
    We can prove the measurability of $\llbracket \phi \rrbracket$ by induction on $\phi$.
    \begin{itemize}
        \item Atomic Formula: If $\phi$ is an atomic formula, then $\llbracket \phi \rrbracket$ belongs to $\F \otimes \B([0,\infty))$ because the mapping $(\omega,t) \mapsto X_t(\omega)$ is $\F \otimes \B([0,\infty))/\B(E)$-measurable.
        \item Negation: If $\llbracket \phi \rrbracket$ belongs to $\F \otimes \B([0,\infty))$, then $\llbracket \lnot \phi \rrbracket = \bracket{\phi}^C$ clearly belongs to $\F \otimes \B([0,\infty))$.
        \item Conjunction: Suppose $\phi_1$ and $\phi_2$ are two MTL--formulas, and $\llbracket \phi_i \rrbracket$ is $\F \otimes \B([0,\infty))$-measurable for $i=1,2$. Then it is straightforward to show that $\llbracket \phi_1 \wedge \phi_2 \rrbracket = \bracket{\phi_1} \cap\bracket{\phi_2}$ is also $\F \otimes \B([0,\infty))$-measurable.
        \item Until Operator: From Lemma~\ref{lemm:meas_until}, we can obtain $\F \otimes \B([0,\infty))$-measurability of $\llbracket \phi_1 \U_I \phi_2 \rrbracket$.
    \end{itemize}
    Once we have shown that $\bracket{\phi}$ belongs to $\F \otimes \B([0,\infty))$, the fact that $\llbracket \phi \rrbracket(t) \in \F$ follows from Fact~\ref{fact:section_measurability}.
\end{proof}

Since the domain of $\Prob$ is $\F$, we can define $\Prob(\omega ; X(\omega),t\models\phi)$ for all $\phi$ and $t\in[0,\infty)$.

\section{Discretization of MTL--formula: Counterexample}\label{sec:discretization}

Fu and Ufuk~\cite{edsdbl.conf.cdc.FuT1520150101} proposed a methodology for approximating the probability that the solution of a controlled stochastic differential equation (SDE) satisfies an MITL--formula. Their approach involves discretizing both the time and state space of the SDE.
They derive probabilities based on this discretized semantics by utilizing a reachability problem for a timed automaton generated by the SDE and MITL--formula.

The authors argue that their simulation's convergence results from the convergence in distribution of the approximated SDE, whose state space has been discretized. They claim that the probability obtained from the discretized approach converges to the probability derived from the continuous--time semantics of the original SDE.

However, here, we demonstrate that for a one-dimensional Brownian motion, denoted as X, the probability obtained using the discretized semantics does not necessarily converge to the probability obtained using continuous semantics. 
This failure arises because a reaching time of the Brownian motion may have positive density.

It is worth noting that Brownian motion can be viewed as the solution of a stochastic differential equation (SDE) (see Remark~\ref{rem:Brownian_as_SDE}).
Furthermore, every SDE without control can be regarded as a special case of controlled SDEs.
Consequently, Brownian motion can be an illustrative example of a solution to a controlled SDE. Hence, our counterexample aligns with the scenario presented in \cite{edsdbl.conf.cdc.FuT1520150101}.
Consider that $X$ is a one-dimensional Brownian motion starting from $0$.

The one of the ideas of our construction is that, we can define a MTL--formula depending on a behavior of $X$ in a single time instance by nested temporal operators on open intervals.
In fact, our example falls under MITL--formulas, because there is no temporal operators defined on a single instance of time.

Let $p$ be an atomic formula, $B_p:=[1,\infty)$ be the set associated with $p$ and $\tau_p(\omega):=\inf\{t\geq0 ;X_t\in B_p\}$.
In other words, $X(\omega),t\models p\Leftrightarrow X_t(\omega)\geq1$ and $\tau_p(\omega)=\inf\{t\geq0;X(\omega),t\models p\}$.

Put 
	\begin{align}
		\phi_1&:=\Box_{(1,2)}(\Diamond_{(1,4)}p\wedge\lnot\Diamond_{(1,3)}p)\\
		\phi_2&:=(\Diamond_{(1,3)}\phi_1)\wedge(\lnot\Diamond_{(1,2)}\phi_1)\wedge(\lnot\Diamond_{(2,3)}\phi_1),\\
		\phi_3&:=\Diamond_{(1,2)}\phi_2,\\
		\psi&:=(\lnot p)\wedge(\lnot\Diamond_{(0,8)}p)\wedge\phi_3.\label{eq:MTL_psi}
	\end{align}
In one line,

\begin{align}
	\psi:=(\lnot p)\wedge(\lnot\Diamond_{(0,8)}p)\wedge\Diamond_{(1,2)}[(&\Diamond_{(1,3)}\Box_{(1,2)}(\Diamond_{(1,4)}p\wedge\lnot\Diamond_{(1,3)}p))\\
	\wedge&(\lnot\Diamond_{(1,2)}\Box_{(1,2)}(\Diamond_{(1,4)}p\wedge\lnot\Diamond_{(1,3)}p))\\
	\wedge&(\lnot\Diamond_{(2,3)}\Box_{(1,2)}(\Diamond_{(1,4)}p\wedge\lnot\Diamond_{(1,3)}p))].
\end{align}

In this setting, the following statements hold.
\begin{thmC}[{\cite[Remark 2.8.3]{MR1121940}}]\label{fact:hitting_time_density}
$\tau_p(\omega)$ has positive density on $[0,\infty)$.
\end{thmC}

We will take $\psi$ as a counterexample that $\Prob( \omega \in \Omega ; X(\omega), 0 \models_n \psi ) $ does not converges to $\Prob( \omega \in \Omega ; X(\omega), 0 \models \psi )$.
We show that $\Prob( \omega \in \Omega ; X(\omega), 0 \models_n \psi ) = 0$ for sufficiently large $n\in \N$, while $\Prob( \omega \in \Omega ; X(\omega), 0 \models \psi )>0$. 

Before showing that $\psi$ is the counterexample, let us describe the meaning of the formula $\psi$.
Note that the following equivalences hold:
\begin{align*}
    X(\omega), 0 \models \psi &\Leftrightarrow [X(\omega), 0 \models (\lnot p) \wedge (\lnot \Diamond_{(0,8)} p)] \text{ and }[X(\omega), 0 \models \phi_3],\\
    X(\omega), 0 \models_n \psi &\Leftrightarrow [X(\omega), 0 \models_n (\lnot p) \wedge (\lnot \Diamond_{(0,8)} p)] \text{ and }[X(\omega), 0 \models_n \phi_3].
\end{align*}
Let $\tau^{(n)}_p(\omega) := \min\{ t\in \N/n ; X(\omega), t \models_n p \}$.
Then $X(\omega), 0 \models (\lnot p) \wedge (\lnot \Diamond_{(0,8)} p)$ and $X(\omega), 0 \models_n (\lnot p) \wedge (\lnot \Diamond_{(0,8)} p)$ means that $\tau_p \geq 8$ and $\tau^{(n)}_p \geq 8$, respectively.
By combining these meanings with the semantics of
$X(\omega), 0 \models \phi_3$
and
$X(\omega), 0 \models_n \phi_3$
respectively, we show that
$X(\omega), 0 \models \psi$ is equivalent to
$\tau_p \in (8,9)$
almost surely, while $X(\omega), 0 \models_n \psi$ is equivalent to $X(\omega), 0 \models_n \bot$ almost surely.

Now we estimate the probability $\Prob(\omega\in\Omega ; X(\omega),0\models \psi)$ of continuous semantics of $\psi$.
 \begin{lem}\label{lemm:isoformula}
    Suppose that $\tau_p(\omega)\geq6$.
    Then $\llbracket\phi_1\rrbracket_\omega$ has an isolated point $\tau_p(\omega)-5$ with positive probability.
    Moreover, $X(\omega),t\not\models\phi_1$ for $t\in[0,\tau_p(\omega)-5)\cup(\tau_p(\omega)-5,\tau_p(\omega)-3)$.
    In other words,
    \begin{align*}
        \begin{cases}
            X(\omega),t\not\models \phi_1 &\text{ for }0\leq t<\tau_p(\omega)-5,\\
            X(\omega),t\models \phi_1 &\text{ for }t=\tau_p(\omega)-5,\\
            X(\omega),t\not\models \phi_1 &\text{ for }\tau_p(\omega)-5<t<\tau_p(\omega)-3.
        \end{cases}
    \end{align*}
\end{lem}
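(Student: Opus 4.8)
The goal is to analyze $\llbracket\phi_1\rrbracket_\omega$ assuming $\tau_p(\omega)\ge 6$, where $\phi_1=\Box_{(1,2)}(\Diamond_{(1,4)}p\wedge\lnot\Diamond_{(1,3)}p)$. The plan is to work entirely in terms of $\tau_p(\omega)$, exploiting that for a Brownian motion with $B_p=[1,\infty)$ the time set $\llbracket p\rrbracket_\omega = \{t\ge0 ; X_t(\omega)\ge 1\}$ is (almost surely) a closed set whose infimum is exactly $\tau_p(\omega)$, and moreover, with positive probability, the path does not touch level $1$ before $\tau_p$ and does touch it \emph{strictly} immediately after (by the strong Markov property and the fact that Brownian motion started at $1$ immediately returns to $[1,\infty)$). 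So first I would fix the good event: $\tau_p(\omega)\ge 6$, together with the regularity event that $\inf\{t ; X_t(\omega)\ge1\}$ is attained and that $(\tau_p(\omega),\tau_p(\omega)+\varepsilon)\cap\llbracket p\rrbracket_\omega\ne\emptyset$ for every $\varepsilon>0$; this event has positive probability by Theorem~\ref{fact:hitting_time_density} (which gives $\Prob(\tau_p\ge 6)>0$) and standard Brownian path properties.

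\textbf{Step 1: translate the inner formula.} For a time $u\ge 0$, compute $X(\omega),u\models\Diamond_{(1,4)}p \iff \llbracket p\rrbracket_\omega\cap(u+1,u+4)\ne\emptyset$ and similarly $X(\omega),u\models\Diamond_{(1,3)}p\iff \llbracket p\rrbracket_\omega\cap(u+1,u+3)\ne\emptyset$. Since under the good event $\llbracket p\rrbracket_\omega\subset[\tau_p(\omega),\infty)$ and $\tau_p(\omega)$ is a right-accumulation point of $\llbracket p\rrbracket_\omega$, the set $\llbracket p\rrbracket_\omega\cap(u+1,u+4)$ is nonempty iff $u+4>\tau_p(\omega)$, i.e. $u>\tau_p(\omega)-4$; likewise $\llbracket p\rrbracket_\omega\cap(u+1,u+3)\ne\emptyset$ iff $u>\tau_p(\omega)-3$. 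Hence the inner conjunction $\Diamond_{(1,4)}p\wedge\lnot\Diamond_{(1,3)}p$ holds at $u$ iff $\tau_p(\omega)-4<u\le\tau_p(\omega)-3$; call this half-open interval $J(\omega)$.

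\textbf{Step 2: apply the outer box.} Now $X(\omega),t\models\phi_1\iff (t+1,t+2)\subset J(\omega)=(\tau_p(\omega)-4,\tau_p(\omega)-3]$. Since $J(\omega)$ has length exactly $1$, the only way an open interval of length $1$ can be contained in it is if it coincides with the open core: $(t+1,t+2)=(\tau_p(\omega)-4,\tau_p(\omega)-3)$, which forces $t+1=\tau_p(\omega)-4$, i.e. $t=\tau_p(\omega)-5$. For any other $t$ the interval $(t+1,t+2)$ sticks out of $J(\omega)$ on one side, so $\phi_1$ fails. This gives simultaneously that $\tau_p(\omega)-5$ is the unique point of $\llbracket\phi_1\rrbracket_\omega$ in a neighborhood — in particular an isolated point — and the claimed pattern on $[0,\tau_p(\omega)-5)$ and $(\tau_p(\omega)-5,\tau_p(\omega)-3)$: for $t<\tau_p(\omega)-5$ we have $t+1<\tau_p(\omega)-4$ so the left endpoint of $(t+1,t+2)$ lies below $\inf J(\omega)$, and for $\tau_p(\omega)-5<t<\tau_p(\omega)-3$ the interval $(t+1,t+2)$ has points $\ge t+1>\tau_p(\omega)-4$ but also runs past $\tau_p(\omega)-3$ once $t>\tau_p(\omega)-4$ — more carefully, for $\tau_p(\omega)-5<t<\tau_p(\omega)-4$ one checks $(t+1,t+2)\not\subset J(\omega)$ because $t+1>\tau_p(\omega)-4$ would be needed, contradiction, so actually the interval's left part dips below, while for $\tau_p(\omega)-4\le t<\tau_p(\omega)-3$ the right endpoint $t+2\ge\tau_p(\omega)-2>\tau_p(\omega)-3=\sup J(\omega)$.

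\textbf{Main obstacle.} The substantive point is justifying Step 1's equivalence "$\llbracket p\rrbracket_\omega\cap(u+1,u+4)\ne\emptyset\iff u+4>\tau_p(\omega)$": this needs that the hitting time is actually attained and that $\tau_p(\omega)$ is a right-limit point of the occupation set $\{X_t(\omega)\ge 1\}$, not merely its infimum. Both hold almost surely on the relevant event by the continuity of paths (closedness of $\llbracket p\rrbracket_\omega$, so $X_{\tau_p}(\omega)=1$) and by the strong Markov property applied at $\tau_p$ together with the fact that $0$ is regular for $[0,\infty)$ for Brownian motion started at $0$ — so I would invoke these standard facts, and this is where the "with positive probability" and the almost-sure qualifications in the statement come from. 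Everything else is the elementary interval bookkeeping of Steps 1--2.
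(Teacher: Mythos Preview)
Your proof is correct and follows essentially the same approach as the paper: both compute the truth set of the inner conjunction $\Diamond_{(1,4)}p\wedge\lnot\Diamond_{(1,3)}p$ on $[0,\tau_p-1)$ as the half-open interval $(\tau_p-4,\tau_p-3]$, then apply $\Box_{(1,2)}$ to isolate $t=\tau_p-5$ via the interval-containment condition $(t+1,t+2)\subset(\tau_p-4,\tau_p-3]$. One small point: your ``main obstacle'' invokes more machinery than is needed, since $B_p=[1,\infty)$ is closed and paths are continuous, so $\tau_p\in\llbracket p\rrbracket_\omega$ directly, and for the range that actually matters ($t<\tau_p-3$ forces $u\in(t+1,t+2)\subset(1,\tau_p-1)$, hence $u+1<\tau_p$) the point $\tau_p$ itself lies in $(u+1,u+4)$ whenever $u+4>\tau_p$ --- the strong Markov/regularity argument is therefore superfluous here, and the paper simply appeals to continuity and the definition of $\tau_p$.
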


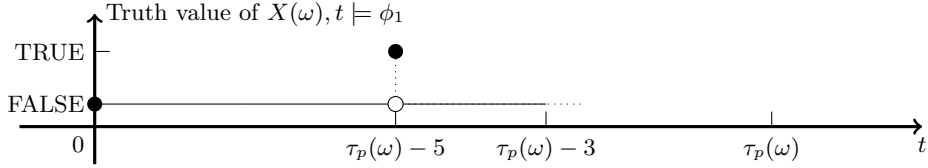
\begin{figure}[ht]
    \centering
    \begin{tikzpicture}
\draw[very thick,->] (-1,0)  --(11,0) node [below]{\footnotesize$t$};
\draw[very thick,->] (0,-0.5)--(0,1.5) node [right]{\footnotesize Truth value of $X(\omega),t\models \phi_1$};
\draw (0,.2)--(0,0) node[anchor=north east]{\footnotesize$0$};
\coordinate (a) at (9,0);
\coordinate (b) at (5,0);
\coordinate (c) at (3,0);
\coordinate (n) at (0,.2);
\coordinate (nh) at (.2,0);
\draw (0,1)node[left]{\footnotesize TRUE}--(.2,1) ;
\draw ($(a)+(n)$)--(a) node[anchor=north]{\footnotesize$\tau_p(\omega)$};
\draw ($(a)-(b)+(n)$)--($(a)-(b)$) node[anchor=north]{\footnotesize$\tau_p(\omega)-5$};
\draw ($(a)-(c)+(n)$)--($(a)-(c)$) node[anchor=north]{\footnotesize$\tau_p(\omega)-3$};
\draw (0,0.3)--(2,.3);
\draw (2,.3)--($(a)-(b)+(0,.3)$);
\draw [dotted] ($(a)-(b)+(0,.3)$)--($(a)-(b)+(0,1)$);
\draw ($(a)-(b)+(0,.3)$)--($(a)-(c)+(0,.3)$);
\draw[dotted] ($(a)-(b)+(0,0.3)$)--(6.5,.3);
\fill (0,0.3) circle[radius=0.1] node[anchor=east]{\footnotesize FALSE};
\filldraw[fill=white,draw=black] ($(a)-(b)+(0,0.3)$) circle[radius=0.1];
\fill ($(a)-(b)+(0,1)$) circle[radius=0.1];
\end{tikzpicture}
    \caption{The truth value of ``$X(\omega),t\models\phi_1$''.}
    \label{fig:phi_1_truth_value}
\end{figure}

\begin{proof}
    Note that $\tau_p(\omega)<\infty$ almost surely.
    Suppose $\tau_p(\omega)\geq6$. 
    Then $X(\omega),t\not\models p$ for $t< \tau_p(\omega)$ and $\inf\{t\geq0;X(\omega),t\models p\}=\tau_p(\omega)$.
    
    \begin{tikzpicture}
        \draw[shift={(0,2.5)},very thick,->] (0,-0.5)--(0,1.5) node [right]{\footnotesize Truth value of $X(\omega),t\models p$};
        \draw[shift={(0,2.5)}] (0,.2)--(0,0) node[anchor=north east]{\footnotesize$0$};
        \draw[shift={(0,2.5)},very thick,->] (-1,0)  --(11,0) node [below]{\footnotesize$t$};
        \coordinate[shift={(0,2.5)}] (n) at (0,.2);
        \coordinate[shift={(0,2.5)}] (a) at (9,0);
        \draw[shift={(0,2.5)}] (0,1)node[left]{\footnotesize TRUE}--(.2,1);
        \draw[shift={(0,2.5)}] (6,.2)--(6,0) node [below]{\footnotesize$6$};
        \draw[shift={(0,2.5)}] ($(a)+(n)$)--($(a)$) node [anchor=north] {\footnotesize$\tau_p(\omega)$};
        \draw[shift={(0,2.5)}] (0,.3)--(4,0.3)--($(a)+(0,.3)$);
        \draw[shift={(0,2.5)}] ($(a)+(0,.3)$)--($(a)+(0,1)$)--($(a)+(.1,0)+(0,1)$);
        \draw[shift={(0,2.5)},dotted] ($(a)+(.1,0)+(0,1)$)--(9.5,1);
        \fill[shift={(0,2.5)}] (0,.3)  circle[radius=0.1] node[left]{\footnotesize FALSE};
    \end{tikzpicture}

    From the definition of $\tau_p(\omega)$, if $t\leq\tau_p(\omega)-4$, there is no $s\in(t+1,t+4)$ such that  $X(\omega),s\models p$, which implies $X(\omega),t\not\models\Diamond_{(1,4)}p$.
    Again from the definition of $\tau_p(\omega)$, if $t\in(\tau_p(\omega)-4,\tau_p(\omega)-1)$, there exists some $s\in (t+1,t+4)$ such that $X(\omega),s\models p$.
    Thus we obtain
    
    \begin{align*}
    \begin{cases}
        X(\omega),t\not\models\Diamond_{(1,4)}p&\text{ for }t\leq \tau_p(\omega)-4,\\
        X(\omega),t\models\Diamond_{(1,4)}p&\text{ for }\tau_p(\omega)-4<t<\tau_p(\omega)-1.
    \end{cases}
    \end{align*}
    
    \begin{tikzpicture}
        \draw[shift={(0,3.5)},very thick,->] (0,-0.5)--(0,1.5) node [right]{\footnotesize Truth value of $X(\omega),t\models p$};
        \draw[shift={(0,3.5)}] (0,.2)--(0,0) node[anchor=north east]{$0$};
        \draw[shift={(0,3.5)},very thick,->] (-1,0)  --(11,0) node [below]{$t$};
        \coordinate[shift={(0,3.5)}] (n) at (0,.2);
        \coordinate[shift={(0,3.5)}] (a) at (9,0);
        \draw[shift={(0,3.5)}] (0,1)node[left]{\footnotesize TRUE}--(.2,1);
        \draw[shift={(0,3.5)}] (6,.2)--(6,0) node [below]{\footnotesize$6$};
        \draw[shift={(0,3.5)}] ($(a)+(n)$)--($(a)$) node [anchor=north] {\footnotesize$\tau_p(\omega)$};
        \draw[shift={(0,3.5)}] (0,.3)--(4,0.3)--($(a)+(0,.3)$);
        \draw[shift={(0,3.5)}] ($(a)+(0,.3)$)--($(a)+(0,1)$)--($(a)+(.1,0)+(0,1)$);
        \draw[shift={(0,3.5)},dotted] ($(a)+(.1,0)+(0,1)$)--(9.5,1);
        \fill[shift={(0,3.5)}] (0,.3)  circle[radius=0.1] node[left]{\footnotesize FALSE};

        \coordinate[shift={(0,2.5)}] (a) at (9,0);
        \fill[shift={(0,2.5)},lightgray] ($(a)-(2.5,0)+(1,0)$)--($(a)-(2.5,0)+(4,0)$)--($(a)-(2.5,0)+(4,0)+(0,.2)$)--($(a)-(2.5,0)+(1,0)+(0,.2)$)--cycle;
        \draw[shift={(0,2.5)},very thick,->] ($(a)-(4.5,0)$)--($(a)-(3,0)+(4,0)+(1,0)$);
        \draw[shift={(0,2.5)}] ($(a)-(2.5,0)+(0,.2)$)--($(a)-(2.5,0)$) node[below]{\footnotesize$t$};
        \draw[shift={(0,2.5)}] ($(a)-(2.5,0)+(1,0)$) node[below]{\footnotesize $t+1$};
        \draw[shift={(0,2.5)}] ($(a)-(2.5,0)+(4,0)$) node[below]{\footnotesize $t+4$};
        
        \draw[dotted] (9,3.5)--(9,2.5);
        \draw[dotted] (6.5,2.5)--(6.5,0);
        
        \draw[very thick,->] (0,-0.5)--(0,1.5) node [right]{\footnotesize Truth value of $X(\omega),t\models \Diamond_{(1,4)}p$};
        \draw (0,.2)--(0,0) node[anchor=north east]{$0$};
        \draw (0,1)node[left]{\footnotesize TRUE}--(.2,1);
        \draw[very thick,->] (-1,0)  --(11,0) node [below]{\footnotesize$t$};
        \coordinate (n) at (0,.2);
        \coordinate (a) at (9,0);
        \draw ($(a)+(n)$)--($(a)$) node [anchor=north west] {\footnotesize$\tau_p(\omega)$};
        \draw ($(a)-(4,0)+(n)$)--($(a)-(4,0)$) node[anchor=north]{\footnotesize$\tau_p(\omega)-4$};
        \draw ($(a)-(1,0)+(n)$)--($(a)-(1,0)$) node[anchor=north]{\footnotesize$\tau_p(\omega)-1$};
        \draw (0,.3)--($(a)-(4,0)+(0,.3)$);
        \draw[dotted] ($(a)-(4,0)+(0,.3)$)--($(a)-(4,0)+(0,1)$);
        \draw ($(a)-(4,0)+(0,1)$)--($(a)-(1,0)+(0,1)$);
        \draw[dotted] ($(a)-(1,0)+(0,1)$)--(8.5,1);
        \fill (0,.3)  circle[radius=0.1] node [left]{\footnotesize FALSE};
        \fill ($(a)-(4,0)+(0,.3)$) circle[radius=0.1];
        \filldraw[fill=white,draw=black] ($(a)-(4,0)+(0,1)$) circle[radius=0.1];
    \end{tikzpicture}
    
    Similarly, we can show that 
    \begin{align*}
    \begin{cases}
        X(\omega),t\models\lnot\Diamond_{(1,3)}p&\text{ for }0\leq t\leq\tau_p(\omega)-3,\\
            X(\omega),t\not\models\lnot\Diamond_{(1,3)}p&\text{ for }\tau_p(\omega)-3<t<\tau_p(\omega)-1.
    \end{cases}
    \end{align*}
    
    \begin{tikzpicture}
        \draw[very thick,->] (0,-0.5)--(0,1.5) node [right]{\footnotesize Truth value of $X(\omega),t\models\lnot \Diamond_{(1,3)}p$};
        \draw (0,.3)node[left]{\footnotesize FALSE}--(.2,.3);
        \draw (0,.2)--(0,0) node[anchor=north east]{\footnotesize$0$};
        \draw[very thick,->] (-1,0)  --(11,0) node [below]{\footnotesize$t$};
        \coordinate (n) at (0,.2);
        \coordinate (a) at (9,0);
        \draw ($(a)+(n)$)--($(a)$) node [anchor=north west] {\footnotesize$\tau_p$};
        \draw ($(a)-(3,0)+(n)$)--($(a)-(3,0)$) node[anchor=north]{\footnotesize$\tau_p-3$};
        \draw ($(a)-(1,0)+(n)$)--($(a)-(1,0)$) node[anchor=north]{\footnotesize$\tau_p-1$};
        \draw (0,1)--($(a)-(3,0)+(0,1)$);
        \draw[dotted] ($(a)-(3,0)+(0,1)$)--($(a)-(3,0)+(0,.3)$);
        \draw ($(a)-(3,0)+(0,.3)$)--($(a)-(1.5,0)+(0,.3)$)--($(a)-(1,0)+(0,.3)$);
        \draw[dotted] ($(a)-(1,0)+(0,.3)$)--(8.5,.3);
        \fill (0,1) circle[radius=0.1] node [left]{\footnotesize TRUE};
        \fill ($(a)-(3,0)+(0,1)$) circle[radius=0.1];
        \filldraw[fill=white,draw=black] ($(a)-(3,0)+(0,.3)$) circle[radius=0.1];
    \end{tikzpicture}
    
    Consequently, $X(\omega),t\models\Diamond_{(1,4)}p\wedge(\lnot\Diamond_{(1,3)}p)$ does not hold for any $t\in[0,\tau_p(\omega)-4]$ and $t\in(\tau_p(\omega)-3,\tau_p(\omega)-1)$, but it holds for $(\tau_p(\omega)-4,\tau_p(\omega)-3]$.
    Namely,
    \begin{align*}
    \begin{cases}
        X(\omega),t\not\models\Diamond_{(1,4)}p\wedge(\lnot\Diamond_{(1,3)}p)&\text{ for }0\leq t\leq \tau_p(\omega)-4,\\
        X(\omega),t\models \Diamond_{(1,4)}p\wedge(\lnot\Diamond_{(1,3)}p)&\text{ for }\tau_p-4(\omega)<t\leq\tau_p(\omega)-3,\\
        X(\omega),t\not\models \Diamond_{(1,4)}p\wedge(\lnot\Diamond_{(1,3)}p)&\text{ for }\tau_p(\omega)-3<t<\tau_p(\omega)-1.
    \end{cases}
    \end{align*}

    \begin{tikzpicture}
        \draw[very thick,->] (0,-0.5)--(0,1.5) node [right]{\footnotesize Truth value of 
 $X(\omega),t\models \Diamond_{(1,4)}p\wedge(\lnot\Diamond_{(1,3)}p)$};
        \draw (0,1)node[left]{\footnotesize TRUE}--(.2,1);
        \draw[very thick,->] (-1,0)  --(11,0) node [below]{\footnotesize$t$};
        \draw (0,.2)--(0,0) node[anchor=north east]{\footnotesize$0$};
        \coordinate (n) at (0,.2);
        \coordinate (a) at (9,0);
        \draw ($(a)+(0,.2)$)--($(a)$) node [anchor=north]{\footnotesize$\tau_p$};
        \foreach \x in {1,3,4}
        {
        \draw ($(a)-(\x,0)+(0,.2)$)--($(a)-(\x,0)$) node [anchor=north]{\tiny$\tau_p-\x$};
        }
        \draw (0,.3)--($(a)-(4,0)+(0,.3)$) ;
        \draw[dotted] ($(a)-(4,0)+(0,.3)$)--($(a)-(4,0)+(0,1)$);
        \draw ($(a)-(4,0)+(0,1)$)--($(a)-(3,0)+(0,1)$);
        \draw[dotted] ($(a)-(3,0)+(0,1)$)--($(a)-(3,0)+(0,.3)$);
        \draw ($(a)-(3,0)+(0,.3)$)--($(a)-(1,0)+(0.1,0)+(0,.3)$);
        \draw[dotted] ($(a)-(1,0)+(0,.3)$)--(8.5,.3);
        \fill (0,.3) circle[radius=0.1] node[left]{\footnotesize FALSE};
        \fill ($(a)-(4,0)+(0,.3)$) circle[radius=0.1];
        \filldraw[fill=white,draw=black] ($(a)-(4,0)+(0,1)$) circle[radius=0.1];
        \fill ($(a)-(3,0)+(0,1)$) circle[radius=0.1];
        \filldraw[fill=white,draw=black] ($(a)-(3,0)+(0,.3)$) circle[radius=0.1];
    \end{tikzpicture}
    
    To satisfy $X(\omega),t\models \phi_1$, it must holds that
    $X(\omega),s\models \Diamond_{(1,4)}p\wedge(\lnot\Diamond_{(1,3)}p)$
    for every $s\in (t+1,t+2)$.
    Then $X(\omega),t\models\phi_1$ does not holds for $t\in[0,\tau_p(\omega)-5)$ and $t\in(\tau_p(\omega)-5,\tau_p(\omega)-3)$ and holds on $t=\tau_p(\omega)-5$.
    We obtain the required claim since such an isolated point occurs whenever $\tau_p(\omega)\geq6$.

   \begin{tikzpicture}
        \draw[very thick,->] (0,-0.5)--(0,1.5) node [right]{\footnotesize Truth value of $X(\omega),t\models \Diamond_{(1,4)}p\wedge(\lnot\Diamond_{(1,3)}p)$};
        \draw (0,1)node[left]{\footnotesize TRUE}--(.2,1);
        \draw[very thick,->] (-1,0)  --(11,0) node [below]{$t$};
        \draw (0,.2)--(0,0) node[anchor=north east]{$0$};
        \coordinate (n) at (0,.2);
        \coordinate (a) at (9,0);
        \draw ($(a)+(0,.2)$)--($(a)$) node [anchor=north]{\footnotesize$\tau_p$};
        \foreach \x in {1,3,4,5}
        {
        \draw ($(a)-(\x,0)+(0,.2)$)--($(a)-(\x,0)$) node [anchor=north]{\footnotesize$\tau_p-\x$};
        }
        \draw (0,.3)--($(a)-(4,0)+(0,.3)$) ;
        \draw[dotted] ($(a)-(4,0)+(0,.3)$)--($(a)-(4,0)+(0,1)$);
        \draw ($(a)-(4,0)+(0,1)$)--($(a)-(3,0)+(0,1)$);
        \draw[dotted] ($(a)-(3,0)+(0,1)$)--($(a)-(3,0)+(0,.3)$);
        \draw ($(a)-(3,0)+(0,.3)$)--($(a)-(1,0)+(0.1,0)+(0,.3)$);
        \draw[dotted] ($(a)-(1,0)+(0,.3)$)--(8.5,.3);
        \fill (0,.3) circle[radius=0.1] node[left]{\footnotesize FALSE};
        \fill ($(a)-(4,0)+(0,.3)$) circle[radius=0.1];
        \filldraw[fill=white,draw=black] ($(a)-(4,0)+(0,1)$) circle[radius=0.1];
        \fill ($(a)-(3,0)+(0,1)$) circle[radius=0.1];
        \filldraw[fill=white,draw=black] ($(a)-(3,0)+(0,.3)$) circle[radius=0.1];
        \coordinate (d) at (0.5,-1);
        \coordinate (o) at ($(a)-(6,0)$);
        \foreach \i in {1,2,3}
        {
        \fill[lightgray] ($(o)+\i*(d)+(0,.2)+(1,0)$)--($(o)+\i*(d)+(1,0)$)--($(o)+\i*(d)+(2,0)$)--($(o)+\i*(d)+(0,.2)+(2,0)$)--cycle;
        \draw[very thick,->]($(o)-(0.5,0)+\i*(d)$)--($(o)+(3,0)+\i*(d)$);
        \draw ($(o)+\i*(d)+(0,.2)$)--($(o)+\i*(d)$) node[below]{\footnotesize$t$};
        \draw ($(o)+\i*(d)+(1,0)$) node[below]{\footnotesize$t+1$};
        \draw ($(o)+\i*(d)+(2,0)$) node[below]{\footnotesize$t+2$};
        \begin{scope}\clip ($(a)-(5,0)+(1,0)$)--($(a)-(5,0)+(1,0)+(0,-3)$)--($(a)-(5,0)+(2,0)+(0,-3)$)--($(a)-(5,0)+(2,0)$)--cycle;
            \fill[pattern=north east lines] ($(o)+\i*(d)+(0,.2)+(1,0)$)--($(o)+\i*(d)+(1,0)$)--($(o)+\i*(d)+(2,0)$)--($(o)+\i*(d)+(0,.2)+(2,0)$)--cycle;
        \end{scope}
        }
        \foreach \i in {1,2}
        {
        \draw[dotted] ($(a)-(5,0)+\i*(1,0)$)--($(a)-(5,0)+\i*(1,0)+(0,-3)$);
        }
        \draw[dotted] ($(a)-(5,0)$)--($(a)-(5,0)+(0,-4)$);

        \draw[shift={(0,-5)},very thick,->] (-1,0)  --(11,0) node [below]{$t$};
        \draw[shift={(0,-5)},very thick,->] (0,-0.5)--(0,1.5) node [above]{\footnotesize Truth value of $X(\omega),t\models \phi_1$};
        \draw[shift={(0,-5)}] (0,.2)--(0,0) node[anchor=north east]{$0$};
        \coordinate[shift={(0,-5)}] (a) at (9,0);
        \coordinate[shift={(0,-5)}] (b) at (5,0);
        \coordinate[shift={(0,-5)}] (c) at (3,0);
        \coordinate[shift={(0,-5)}] (n) at (0,.2);
        \coordinate[shift={(0,-5)}] (nh) at (.2,0);
        \draw[shift={(0,-5)}] (0,1)node[left]{\footnotesize TRUE}--(.2,1) ;
        \draw[shift={(0,-5)}] ($(a)+(n)$)--(a) node[anchor=north]{$\tau_p$};
        \draw[shift={(0,-5)}] ($(a)-(b)+(n)$)--($(a)-(b)$) node[anchor=north]{\footnotesize$\tau_p-5$};
        \draw[shift={(0,-5)}] ($(a)-(c)+(n)$)--($(a)-(c)$) node[anchor=north]{\footnotesize$\tau_p-3$};
        \draw[shift={(0,-5)}] (0,0.3)--($(a)-(b)+(0,.3)$);
        \draw[shift={(0,-5)}] [dotted] ($(a)-(b)+(0,.3)$)--($(a)-(b)+(0,1)$);
        \draw[shift={(0,-5)}] ($(a)-(b)+(0,.3)$)--($(a)-(c)+(0,.3)$);
        \draw[shift={(0,-5)},dotted] ($(a)-(b)+(0,0.3)$)--(6.5,.3);
        \fill[shift={(0,-5)}] (0,0.3) circle[radius=0.1] node[anchor=east]{\footnotesize FALSE};
        \filldraw[shift={(0,-5)},fill=white,draw=black] ($(a)-(b)+(0,0.3)$) circle[radius=0.1];
        \fill[shift={(0,-5)}] ($(a)-(b)+(0,1)$) circle[radius=0.1];     
    \end{tikzpicture}
\end{proof}

\begin{lem}\label{lemm:count_exam_conti_sem}
 $X(\omega),0\models\psi$ is equivalent to $\tau_p(\omega)\in(8,9)$ almost surely.
 In particular, $\Prob(X(\omega),0\models\psi)>0$.
\end{lem}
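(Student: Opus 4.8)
The plan is to unwind the nested operators of $\psi$ from the outside in, using Lemma~\ref{lemm:isoformula} as the description of the innermost nontrivial subformula $\phi_1$. First I would dispose of the prefix $(\lnot p)\wedge(\lnot\Diamond_{(0,8)}p)$: since $X_0(\omega)=0<1$ almost surely and the paths are continuous almost surely, on an event of probability one the statement $X(\omega),0\models(\lnot p)\wedge(\lnot\Diamond_{(0,8)}p)$ is equivalent to $X_t(\omega)<1$ for all $t\in[0,8)$, hence to $\tau_p(\omega)\geq 8$. In particular $\tau_p(\omega)\geq 6$, so Lemma~\ref{lemm:isoformula} applies and says that on $[0,\tau_p(\omega)-3)$ the time set $\bracketo{\phi_1}$ equals the single point $\{\tau_p(\omega)-5\}$.

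Next I would peel off $\phi_3=\Diamond_{(1,2)}\phi_2$, reducing $X(\omega),0\models\phi_3$ to the existence of $s\in(1,2)$ with $X(\omega),s\models\phi_2$, and then decode $\phi_2=(\Diamond_{(1,3)}\phi_1)\wedge(\lnot\Diamond_{(1,2)}\phi_1)\wedge(\lnot\Diamond_{(2,3)}\phi_1)$ at such an $s$. Using the partition $(s+1,s+3)=(s+1,s+2)\cup\{s+2\}\cup(s+2,s+3)$, the first conjunct demands that $\phi_1$ hold somewhere in $(s+1,s+3)$ while the other two forbid it on $(s+1,s+2)$ and on $(s+2,s+3)$, so $X(\omega),s\models\phi_2$ is equivalent to: $s+2$ is the unique point of $(s+1,s+3)$ at which $\phi_1$ holds. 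The geometric point is that for $s\in(1,2)$ one has $(s+1,s+3)\subseteq(2,5)\subseteq[0,\tau_p(\omega)-3)$, the last inclusion because $\tau_p(\omega)\geq 8$; hence by the first step the only candidate is $\tau_p(\omega)-5$, and $X(\omega),s\models\phi_2$ holds if and only if $\tau_p(\omega)-5=s+2$, i.e.\ $s=\tau_p(\omega)-7$. (If $\tau_p(\omega)-5\notin(s+1,s+3)$ the conjunct $\Diamond_{(1,3)}\phi_1$ fails at $s$; if $\tau_p(\omega)-5$ lies in $(s+1,s+3)$ but differs from $s+2$, one of the negated conjuncts fails.)

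Assembling the layers, $X(\omega),0\models\psi$ holds almost surely if and only if $\tau_p(\omega)\geq 8$ and some $s\in(1,2)$ satisfies $s=\tau_p(\omega)-7$, which amounts to $\tau_p(\omega)-7\in(1,2)$, i.e.\ $\tau_p(\omega)\in(8,9)$; the endpoints $\tau_p(\omega)\in\{8,9\}$ are correctly excluded because then the forced witness $s=\tau_p(\omega)-7\in\{1,2\}$ falls outside the open interval $(1,2)$. Since $\tau_p(\omega)$ has positive density on $[0,\infty)$ by Theorem~\ref{fact:hitting_time_density}, it follows that $\Prob(X(\omega),0\models\psi)=\Prob(\tau_p(\omega)\in(8,9))>0$.

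I expect the main obstacle to be precisely this interval bookkeeping in the middle step: one must verify that every window opened by the nested $\Diamond$'s and the $\Box$ stays inside the region $[0,\tau_p(\omega)-3)$ on which Lemma~\ref{lemm:isoformula} gives complete information about $\phi_1$, and one must handle the half-open versus open intervals with care, since it is exactly the openness of $(1,2)$, $(s+1,s+2)$ and $(s+2,s+3)$ that collapses the answer to $(8,9)$ rather than to a larger interval.
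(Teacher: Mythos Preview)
Your proposal is correct and follows essentially the same approach as the paper: both reduce the prefix to $\tau_p(\omega)\geq 8$, invoke Lemma~\ref{lemm:isoformula} to pin down $\bracketo{\phi_1}\cap[0,\tau_p(\omega)-3)=\{\tau_p(\omega)-5\}$, and then decode $\phi_2$ and $\phi_3$ by the same interval arithmetic to arrive at $\tau_p(\omega)\in(8,9)$. The paper organizes the computation by introducing auxiliary hitting times $\tau_1(\omega)=\inf\{t:X(\omega),t\models\phi_1\}$ and $\tau_2(\omega)=\inf\{t:X(\omega),t\models\phi_2\}$ and showing $\tau_1=\tau_p-5$, $\tau_2=\tau_p-7$, whereas you carry out the identical decoding directly at the level of the windows $(s+1,s+3)$; this is a cosmetic difference only.
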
 

\begin{proof}
    First, we note that  $X(\omega),0\models(\lnot p)\wedge(\lnot\Diamond_{(0,8)}p)$ is equivalent to $\tau_p(\omega)\geq8$.
    Since $\tau_p(\omega)$ has density, $X(\omega),0\models\psi$ implies $\tau_p(\omega)>8$ almost surely.
    
    Suppose that $\tau_p(\omega)>8$.
    Define
    \begin{align*}
        \tau_1(\omega):=\inf\{t; X(\omega),t\models\phi_1\},\\
        \tau_2(\omega):=\inf\{t; X(\omega),t\models\phi_2\}.
    \end{align*}
    Then, from Lemma~\ref{lemm:isoformula},
    \begin{align}
    \begin{cases}
        X(\omega),t\not\models\phi_1,&\text{ for }t\in[0,\tau_p(\omega)-5),\\
        X(\omega),t\models\phi_1,&\text{ at }t=\tau_p(\omega)-5,\\
        X(\omega),t\not\models\phi_1,&\text{ for }t\in(\tau_p(\omega)-5,\tau_p(\omega)-3).
    \end{cases}\label{eq:tau_1_truthvalue}
    \end{align}
    Hence $\tau_1(\omega)=\tau_p(\omega)-5$.
    Furthermore, $X(\omega),t\models\phi_2$ means
        \begin{align*}
            \begin{cases}
                X(\omega),s\not\models\phi_1,&\text{ for }s\in(t+1,t+2),\\
                X(\omega),s\models\phi_1,&\text{ at }s=t+2,\\
                 X(\omega),s\not\models\phi_1,&\text{ for }s\in(t+2,t+3).
            \end{cases}
        \end{align*}
        Then we can conclude from \eqref{eq:tau_1_truthvalue} that
        \begin{align*}
        \begin{cases}
            X(\omega),t\models\lnot\phi_2,&\text{ for }t\in[0,\tau_p(\omega)-7)\\
            X(\omega),t\models\phi_2,&\text{ at }t=\tau_p(\omega)-7,\\
            X(\omega),t\models\lnot\phi_2,&\text{ for }t\in(\tau_p(\omega)-7,\tau_p(\omega)-5).
        \end{cases}
        \end{align*}
        This means exactly $\tau_2(\omega)=\tau_p(\omega)-7$.
        
        Suppose that $\tau_p(\omega)\geq9$.
        Then $\tau_2(\omega)=\tau_p(\omega)-7\geq2$.
        Since $X(\omega),0\models\phi_3$ is equivalent to $\tau_2(\omega)\in(1,2 )$, $X(\omega),0\models\phi_3$ does not hold.
        Then $X(\omega),0\models\psi$ implies $\tau_p(\omega)<9$.
        Consequently, we obtain that $X(\omega),0\models\psi$ implies $\tau_p(\omega)\in(8,9)$ almost surely.
        
        Conversely, as we have seen that $\tau_p(\omega)>8$ implies $\tau_2(\omega)=\tau_p(\omega)-7$, $\tau_p(\omega)\in(8,9)$ implies $\tau_2(\omega)\in(1,2) $.
        Then $X(\omega),0\models\phi_3$, and together with $\tau_p>8$, we can conclude that $X(\omega),0\models\psi$.        
\end{proof}

\begin{lem}\label{lemm:equidisc}
    Let $n\geq2$, $\tau_p^{(n)}(\omega):=\inf\{t\in\N/n; X(\omega),t\models_n p\}$, and suppose that $\tau^{(n)}_p(\omega)\geq6$.
    Then it holds that
    \begin{align*}
    \begin{cases}
        X(\omega),t\not\models_n\phi_1&\text{ for }t=0,1/n,\cdots,\tau_p^{(n)}(\omega)-5-1/n,\\
	X(\omega),t\models_n\phi_1&\text{ for }t=\tau_p^{(n)}(\omega)-5,\tau_p^{(n)}(\omega)-5+1/n,\\
	X(\omega),t\not\models_n\phi_1&\text{ for }t=\tau_p^{(n)}(\omega)-5+2/n,\cdots,\tau_p^{(n)}(\omega)-2-2/n.
    \end{cases}
    \end{align*}
\end{lem}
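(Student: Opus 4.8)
The plan is to follow the proof of Lemma~\ref{lemm:isoformula} step by step, but now evaluating every sub-formula over the grid $\N/n$ instead of the continuum and keeping careful track of which multiples of $1/n$ fall inside each open interval $(a,b)$. Write $\tau:=\tau_p^{(n)}(\omega)$ and $H:=\{v\in\N/n ; X_v(\omega)\geq1\}$. Since $\tau$ is the least element of $H$, every grid point $u<\tau$ satisfies $X(\omega),u\not\models_n p$, and this plays exactly the role that the continuous hitting time played in Lemma~\ref{lemm:isoformula}; the hypothesis $\tau\geq6$ guarantees that all the grid points $\tau-5,\tau-4,\tau-3,\dots$ appearing below are genuinely in $[0,\infty)$.

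First I would compute the discrete truth values of the two diamonds in the ``safe region'' $u+1<\tau$. For such $u$ we have $\min H=\tau>u+1$, so $(u+1,u+4)$ meets $H$ iff $\tau\in(u+1,u+4)$, giving $X(\omega),u\models_n\Diamond_{(1,4)}p\iff u\geq\tau-4+1/n$, and likewise $X(\omega),u\models_n\Diamond_{(1,3)}p\iff u\geq\tau-3+1/n$. Combining,
\[
  X(\omega),u\models_n\Diamond_{(1,4)}p\wedge\lnot\Diamond_{(1,3)}p
  \iff u\in\{\tau-4+1/n,\ \tau-4+2/n,\ \dots,\ \tau-3\}\quad(\text{for }u+1<\tau).
\]
Note the ``true set'' on the right lies entirely inside the safe region, so the description is self-consistent; for $u+1\geq\tau$ the truth value is left undetermined, which will be harmless.

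Next I evaluate $\phi_1=\Box_{(1,2)}(\Diamond_{(1,4)}p\wedge\lnot\Diamond_{(1,3)}p)$. For $t\in\N/n$ the grid points of $(t+1,t+2)$ are exactly $t+1+1/n,\dots,t+2-1/n$ (there are $n-1\geq1$ of them because $n\geq2$), so $X(\omega),t\models_n\phi_1$ iff each such $u$ lies in the true set above. That window is contained in $\{\tau-4+1/n,\dots,\tau-3\}$ precisely when $t+1+1/n\geq\tau-4+1/n$ and $t+2-1/n\leq\tau-3$, i.e.\ $\tau-5\leq t\leq\tau-5+1/n$; for these two values every window point also satisfies $u+1<\tau$, so $\phi_1$ genuinely holds there. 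If $t\leq\tau-5-1/n$, the leftmost window point $u_0=t+1+1/n\leq\tau-4$ satisfies $u_0+1<\tau$ but $u_0\notin$ the true set, so $\phi_1$ fails. If $\tau-5+2/n\leq t\leq\tau-2-2/n$, I exhibit a window point $u$ with $u+1<\tau$ and $u>\tau-3$ (so $\lnot\Diamond_{(1,3)}p$ fails at $u$, hence so does the conjunction): when $t\leq\tau-3$ take the rightmost point $u=t+2-1/n$, and when $t\geq\tau-3+1/n$ take the leftmost point $u=t+1+1/n$; in each case the needed inequalities $u>\tau-3$ and $u+1<\tau$ follow from the bounds on $t$, and the two subcases exhaust all on-grid $t$. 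Assembling these three regions gives the three lines of the lemma.

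The main obstacle is combinatorial rather than conceptual: one must track exactly which lattice points land in each open interval and, crucially, make sure that the point chosen to witness the failure of $\Diamond_{(1,4)}p\wedge\lnot\Diamond_{(1,3)}p$ still lies in the safe region $u+1<\tau$, where the truth value is controlled solely by $\tau$---past that point the Brownian path could re-enter $[1,\infty)$ and break the clean characterization. The asymmetric cutoffs $\tau-5-1/n$, $\tau-5+2/n$ and $\tau-2-2/n$ in the statement are exactly what make both a ``left witness'' and a ``right witness'' available while keeping them inside $\{u ; u+1<\tau\}$; checking this uniformly in $n\geq2$, including the degenerate case $n=2$ where the window consists of the single point $t+3/2$, is the part that requires care.
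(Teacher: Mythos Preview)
Your proposal is correct and follows essentially the same route as the paper's own proof: both compute the discrete truth values of $\Diamond_{(1,4)}p$, $\Diamond_{(1,3)}p$, their conjunction, and finally $\phi_1$ step by step over the grid $\N/n$, obtaining exactly the true set $\{\tau-4+1/n,\dots,\tau-3\}$ for the conjunction and then sliding the $\Box_{(1,2)}$ window over it. Your treatment is in fact more careful than the paper's in one respect: you explicitly restrict attention to the ``safe region'' $u+1<\tau$, where the truth values are determined solely by $\tau$, and you verify that the failure witness $u$ in the third range can always be chosen inside this region---a point the paper's proof leaves implicit.
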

	
\begin{proof}
    By the definition of diamond operator, $X(\omega),t\models_n\Diamond_{(1,4)}p$ is equivalent to  
    \begin{align*}
        (\exists s\in\{t+1+1/n,t+1+2/n\cdots t+4-1/n\})[X,s\models_np].
    \end{align*}
    Then we observe from the definition  of $\tau_p^{(n)}(\omega)$ that
    \begin{align*}
        \begin{cases}
            X(\omega),t\not\models_n\Diamond_{(1,4)}p&\text{ for }t=0,1/n,\cdots,\tau_p^{(n)}(\omega)-4\\
            X(\omega),t\models_n\Diamond_{(1,4)}p&\text{ for }t=\tau_p^{(n)}(\omega)-4+1/n,\tau_p^{(n)}(\omega)-4+2/n, \cdots,\tau_p^{(n)}(\omega)-1-1/n.
        \end{cases}
    \end{align*}
    Similarly, we have 
    \begin{align*}
        \begin{cases}
            X(\omega),t\models_n \Diamond_{(1,3)}p&\text{ for }t=0,1/n,\cdots,\tau_p^{(n)}(\omega)-3\\
            X(\omega),t\not\models_n \Diamond_{(1,3)}p&\text{ for }t=\tau_p^{(n)}(\omega)-3+1/n,\tau_p^{(n)}(\omega)-3+2/n, \cdots,\tau_p^{(n)}(\omega)-1-1/n.
        \end{cases}
    \end{align*}
    Then we obtain
    \begin{align*}
        \begin{cases}
            X(\omega),t\not\models_n\Diamond_{(1,4)}p\wedge\lnot\Diamond_{(1,3)}p&\text{ for }t=0,\cdots,\tau_p^{(n)}(\omega)-4,\\
            X(\omega),t\models_n\Diamond_{(1,4)}p\wedge\lnot\Diamond_{(1,3)}p&\text{ for }t=\tau_p^{(n)}(\omega)-4+1/n,\cdots\tau_p^{(n)}(\omega)-3,\\
            X(\omega),t\not\models_n\Diamond_{(1,4)}p\wedge\lnot\Diamond_{(1,3)}p&\text{ for }t=\tau_p^{(n)}(\omega)-3+1/n,\cdots,\tau_p^{(n)}(\omega)-1-1/n.
        \end{cases}
    \end{align*}
    From the definition of Box operator, $X,t\models_n\phi_1$ is equivalent to 
    \begin{align*}
        (\forall s\in\{t+1+1/n,\cdots,t+2-1/n\})[X(\omega),t\models_n\Diamond_{(1,4)}p\wedge\lnot\Diamond_{(1,3)}p].
    \end{align*}
    Then we observe that
        \begin{align*}
    \begin{cases}
        X(\omega),t\not\models_n\phi_1&\text{ for }t=0,1/n,\cdots,\tau_p^{(n)}(\omega)-5-1/n,\\
	X(\omega),t\models_n\phi_1&\text{ for }t=\tau_p^{(n)}(\omega)-5,\tau_p^{(n)}(\omega)-5+1/n,\\
	X(\omega),t\not\models_n\phi_1&\text{ for }t=\tau_p^{(n)}(\omega)-5+2/n,\cdots,\tau_p^{(n)}(\omega)-2-2/n.
    \end{cases}\tag*{\begin{minipage}{25pt}
        \vphantom{$\tau_p^{(n)}$} \\
        \vphantom{$\tau_p^{(n)}$} \\
        \vphantom{$\tau_p^{(n)}$} \qedhere
    \end{minipage}}
    \end{align*}
\end{proof}

\begin{lem}\label{lemm:count_exam_disc_sem}
    Let $n\geq2$.
    Then $X(\omega),0\not\models_n\psi$ for every $\omega\in\Omega$.
\end{lem}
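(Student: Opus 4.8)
The plan is to push the syntactic decomposition of $\psi$ through the discrete semantics and exploit the single new phenomenon recorded in Lemma~\ref{lemm:equidisc}: on the grid $\N/n$ the formula $\phi_1$ holds not at one isolated instant, as in the continuous case (Lemma~\ref{lemm:isoformula}), but at the \emph{two consecutive} grid points $\tau_p^{(n)}(\omega)-5$ and $\tau_p^{(n)}(\omega)-5+1/n$; and this ``width two'' is exactly what the conjunction $(\lnot\Diamond_{(1,2)}\phi_1)\wedge(\lnot\Diamond_{(2,3)}\phi_1)$ inside $\phi_2$ cannot accommodate.

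First I would reduce to the interesting situation. Unwinding the discrete semantics, $X(\omega),0\models_n\psi$ would entail $X(\omega),0\models_n(\lnot p)\wedge(\lnot\Diamond_{(0,8)}p)$, i.e.\ $X(\omega),s\not\models_n p$ for all $s\in\{0,1/n,\dots,8-1/n\}$, which is $\tau_p^{(n)}(\omega)\geq 8$, together with $X(\omega),0\models_n\Diamond_{(1,2)}\phi_2$, i.e.\ $X(\omega),t\models_n\phi_2$ for some $t\in(1,2)\cap\N/n$. If $\tau_p^{(n)}(\omega)<8$ the first conjunct already fails, so it suffices to derive a contradiction under the assumption $\tau_p^{(n)}(\omega)\geq 8$ (the value $\infty$ being handled by the same direct computation, which shows $\Diamond_{(1,4)}p$, hence $\phi_1$, hence $\phi_2$, is then unsatisfiable on the grid). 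Since $\tau_p^{(n)}(\omega)\geq 8\geq 6$, Lemma~\ref{lemm:equidisc} describes the truth value of $\phi_1$ at every grid point up to $\tau_p^{(n)}(\omega)-2-2/n$; as $n\geq 2$ and $\tau_p^{(n)}(\omega)\geq 8$ one has $\tau_p^{(n)}(\omega)-2-2/n\geq 5$, so this description covers all grid points in $(2,5)$, which is the only range I will need.

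Next I would fix $t\in(1,2)\cap\N/n$ and unfold $\phi_2$ at $t$. By definition $X(\omega),t\models_n\phi_2$ demands (a) $\phi_1$ at $t+s$ for some $s\in(1,3)\cap\N/n$, (b) $\phi_1$ false at $t+s$ for every $s\in(1,2)\cap\N/n$, and (c) $\phi_1$ false at $t+s$ for every $s\in(2,3)\cap\N/n$; since $(1,3)\cap\N/n$ is the disjoint union of $(1,2)\cap\N/n$, $\{2\}$ and $(2,3)\cap\N/n$, conditions (b) and (c) force the witness of (a) to be $s=2$, that is, $\phi_1$ holds at $t+2$. But $t+2\in(3,4)\subset(2,5)$, so Lemma~\ref{lemm:equidisc} applies at $t+2$ and forces $t+2\in\{\tau_p^{(n)}(\omega)-5,\ \tau_p^{(n)}(\omega)-5+1/n\}$; in the first case $\phi_1$ also holds at $t+2+1/n=t+(2+1/n)$ with $2+1/n\in(2,3)\cap\N/n$, contradicting (c), and in the second case $\phi_1$ also holds at $t+2-1/n=t+(2-1/n)$ with $2-1/n\in(1,2)\cap\N/n$, contradicting (b); here both $2\pm 1/n$ lie strictly inside $(1,3)$ precisely because $n\geq 2$. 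Hence $X(\omega),t\not\models_n\phi_2$ for every $t\in(1,2)\cap\N/n$, so $X(\omega),0\not\models_n\phi_3=\Diamond_{(1,2)}\phi_2$, and therefore $X(\omega),0\not\models_n\psi$ for every $\omega\in\Omega$.

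The only genuine difficulty here is the grid bookkeeping: one must keep $(1,2)\cap\N/n$, $(2,3)\cap\N/n$ and $(1,3)\cap\N/n$ straight, check that the hypothesis $\tau_p^{(n)}(\omega)\geq 8$ places every instant touched by the argument inside the range where Lemma~\ref{lemm:equidisc} gives information, and observe that the two inequalities $2-1/n>1$ and $2+1/n<3$ --- both equivalent to $n\geq 2$ --- are exactly what make the neighbouring grid point fall into the forbidden zone. Everything else is a routine unfolding of the discrete semantics of $\Diamond$, $\Box$, $\lnot$ and $\wedge$.
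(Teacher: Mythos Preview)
Your proof is correct and follows essentially the same approach as the paper: reduce to $\tau_p^{(n)}(\omega)\geq 8$, invoke Lemma~\ref{lemm:equidisc} to see that $\phi_1$ holds at the two adjacent grid points $\tau_p^{(n)}(\omega)-5$ and $\tau_p^{(n)}(\omega)-5+1/n$, and note that this is incompatible with the requirement---forced by the conjunction defining $\phi_2$---that $\phi_1$ hold at $t+2$ alone within the window $(t+1,t+3)\cap\N/n$. Your explicit case split on which of the two adjacent points equals $t+2$, together with the check that $2\pm 1/n$ land in the forbidden subintervals precisely when $n\geq 2$, makes the bookkeeping slightly more transparent than the paper's version, but the argument is the same.
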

\begin{proof}
    Define $\tau_p^{(n)}(\omega)$ as Lemma~\ref{lemm:equidisc}.
    Since $X(\omega),0\models_n\psi$ implies $X(\omega),0\models_n(\lnot p)\wedge(\lnot\Diamond_{(0,8)}p)$,
    $\tau_p^{(n)}(\omega)\geq8$.
    Then we obtain from Lemma~\ref{lemm:equidisc} that
    \begin{align*}
    \begin{cases}
        X(\omega),t\not\models_n\phi_1&\text{ for }t=0,1/n,\cdots,\tau_p^{(n)}(\omega)-5-1/n,\\
	X(\omega),t\models_n\phi_1&\text{ for }t=\tau_p^{(n)}(\omega)-5,\tau_p^{(n)}(\omega)-5+1/n,\\
		X(\omega),t\not\models_n\phi_1&\text{ for }t=\tau_p^{(n)}(\omega)-5+2/n,\cdots,\tau_p^{(n)}(\omega)-2-2/n.
    \end{cases}
    \end{align*}
    From the definition of the discrete semantics, $X(\omega),t\models_n\phi_2$ is equivalent to 
    \begin{align*}
    \begin{cases}
        X(\omega),s\not\models_n\phi_1&\text{ for }s=t+1+1/n,\cdots,t+2-1/n,\\
	X(\omega),s\models_n\phi_1&\text{ for }s=t+2,\\
	X(\omega),s\not\models_n\phi_1&\text{ for }s=t+2+1/n,\cdots,t+3-1/n.
    \end{cases}
    \end{align*}
    In other words, for $X(\omega),t\models_n\phi_2$ to hold, $X(\omega),s\models_n\phi_1$ must hold exactly on $s=t+2$, and $X(\omega),s\models_n\phi_1$ must not hold for other $s\in\N$ such that $t+1+1/n\leq s\leq t+3-1/n$.
    However, $X(\omega),s\models_n\phi_1$ holds at two adjacent $s$ values, namely $s=\tau_p^{(n)}(\omega)-5$ and $s=\tau_p^{(n)}(\omega)-5+1/n$, and does not hold for other $s$ values such that $0\leq s\leq \tau^{(n)}_p(\omega)-2-2/n$.
    Therefore, $X(\omega),t\models_n\phi_2$ does not hold as long as $t+2\leq\tau^{(n)}_p(\omega)-2-2/n$, which implies $t\leq\tau^{(n)}_p(\omega)-4-2/n$.
    Since $\tau_p(\omega)\geq8$, $X(\omega),t\models_n\phi_2$ does not hold as long as $t\leq4-2/n$ and hence $X(\omega),0\not\models_n\Diamond_{(1,2)}\phi_2$.
\end{proof}

\begin{thm}
    Put $\psi$ as Lemma~\ref{lemm:count_exam_conti_sem}.
    Then $\Prob(\omega; X(\omega),0\models_n\phi_3)$ does not converges to $\Prob(\omega; X(\omega),0\models\phi_3)$.
\end{thm}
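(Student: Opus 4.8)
The plan is to read the failure of convergence off directly from the two lemmas just proved, so that almost no further work is required; I would phrase everything in terms of the formula $\psi$ of \eqref{eq:MTL_psi}, which was singled out as the counterexample in the discussion preceding the theorem (recall that $\psi=(\lnot p)\wedge(\lnot\Diamond_{(0,8)}p)\wedge\phi_3$, so this is exactly the comparison announced in the statement). All of the delicate combinatorics concerning the truth set $\bracketo{\phi_1}$ has already been absorbed into Lemmas~\ref{lemm:isoformula}--\ref{lemm:count_exam_disc_sem}, and only a short assembly remains.

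For the continuous semantics I would invoke Lemma~\ref{lemm:count_exam_conti_sem}: up to a $\Prob$--null set, $\{\omega\in\Omega ; X(\omega),0\models\psi\}$ coincides with $\{\omega\in\Omega ; \tau_p(\omega)\in(8,9)\}$, and since $\tau_p$ has a strictly positive density on $[0,\infty)$ by Theorem~\ref{fact:hitting_time_density}, we get $\Prob(\tau_p\in(8,9))>0$, hence $\Prob(\omega\in\Omega ; X(\omega),0\models\psi)>0$. For the discrete semantics I would invoke Lemma~\ref{lemm:count_exam_disc_sem}: for every integer $n\geq2$ and \emph{every} $\omega\in\Omega$ we have $X(\omega),0\not\models_n\psi$, so $\{\omega\in\Omega ; X(\omega),0\models_n\psi\}=\emptyset$ and therefore $\Prob(\omega\in\Omega ; X(\omega),0\models_n\psi)=0$ for all $n\geq2$. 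Combining the two, $\lim_{n\to\infty}\Prob(\omega\in\Omega ; X(\omega),0\models_n\psi)=0$, which is strictly smaller than $\Prob(\omega\in\Omega ; X(\omega),0\models\psi)$; in fact the discrete probability does not merely fail to converge to the continuous value, it is constantly $0$ for $n\geq2$, so the discretization cannot detect this event at all.

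At this final step there is essentially no obstacle left; the genuine difficulty has already been handled in the earlier lemmas, and what I would emphasise in the write-up is the mechanism behind them. In continuous time the set $\bracketo{\phi_1}$ possesses the \emph{isolated} point $\tau_p(\omega)-5$ (Lemma~\ref{lemm:isoformula}), whereas on any grid $\N/n$ the formula $\phi_1$ is forced to hold at \emph{two adjacent} grid points (Lemma~\ref{lemm:equidisc}), which is incompatible with the ``exactly one point in the middle'' pattern demanded by $\phi_2$, and hence propagated into $\phi_3$. The conjuncts $(\lnot p)\wedge(\lnot\Diamond_{(0,8)}p)$ serve only to confine attention to paths whose $\tau_p$ is large enough for this isolated-point picture to hold, and Theorem~\ref{fact:hitting_time_density} is precisely what guarantees that the resulting continuous event $\{\tau_p\in(8,9)\}$ carries positive probability, thereby witnessing the nonconvergence.
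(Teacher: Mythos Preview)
Your proposal is correct and follows essentially the same approach as the paper's own proof: invoke Lemma~\ref{lemm:count_exam_conti_sem} to obtain $\Prob(\omega; X(\omega),0\models\psi)=\Prob(\tau_p\in(8,9))>0$, invoke Lemma~\ref{lemm:count_exam_disc_sem} to obtain $\Prob(\omega; X(\omega),0\models_n\psi)=0$ for all $n\geq2$, and conclude nonconvergence. Your added explanatory paragraph about the isolated-point mechanism is helpful commentary but not part of the formal argument, which matches the paper's two-line assembly.
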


\begin{proof}
    From Lemma~\ref{lemm:count_exam_conti_sem}, we have $\Prob(\omega; X(\omega),0\models\psi)=\Prob(\omega; \tau_p(\omega)\in(8,9))>0$.
    On the other hand, from Lemma~\ref{lemm:count_exam_disc_sem}, we have $\Prob(\omega;X(\omega),0\models_n\psi)=0$ for every $n$ larger than $2$.
    Therefore, $\Prob(\omega;X(\omega),0\models_n\psi)$ never converges to $\Prob(\omega;X(\omega),0\models\psi)$.
\end{proof}

\begin{rem}
    In the above counterexample, the subscripted intervals for the diamond operator are restricted to be open. If we restrict the intervals to be left-- or right--open, we can construct similar counterexamples. However, it remains an open problem whether we can create a counterexample when we restrict all intervals to be closed.
\end{rem}

\section{Discretization of MTL--formula: Convergence Result of\texorpdfstring{\\}{} MTL--formula without until and nested temporal operators}\label{sec:flat_MTL}

In the previous section, we presented a counterexample of an MTL--formula, illustrating a case where its probability in discrete semantics does not converge to that in continuous semantics.
In contrast, in this section, we establish the convergence of such probabilities by restricting MTL--formulas in which no until operator and nested diamond and box operators appear.
We refer to these restricted formulas as $\flat$MTL--formulas.
While we discussed MTL--formulas for a Brownian motion in the preceding sections, we will now present the convergence result for general one-dimensional stochastic differential equations:

\begin{align}
    \begin{cases}
dX_t=b(X_t)dt+\sigma(X_t)dW_t,\\
X_0=\xi\in\R.
\end{cases}\label{eq:SDE_with_drift}
\end{align}

We impose the following conditions on the SDE to establish the convergence result for $\flat$MTL--formulas \eqref{eq:SDE_with_drift}.
These conditions are also sufficient to ensure the solution's existence, uniqueness, and absolute continuity.(see Appendix~\ref{append:SDE}):
\begin{asm}\label{cond:SDE_with_drift}\hfill
	\begin{enumerate}[(i)]
		\item For every compact set $ K\subset\R$, $\inf\sigma(K)>0$.
		\item $\sigma$ is Lipschitz continuous.
		\item $b$ is bounded and Borel measurable.
	\end{enumerate}
\end{asm}

Now let us define $\flat$MTL--formulas rigorously.
\begin{defi}[Syntax of $\flat$MTL--formula]\label{defi:syntax_flat_MTL}
	Let $AP$ be a finite set of atomic formulas.
	We define the syntax of $\flat$MTL by the following induction.
	\begin{enumerate}[(i)]\label{defi:syntax_flat_MTL_general}
		\item All atomic formulas are $\flat$MTL--formulas.
		\item If $\phi$ is an $\flat$MTL--formula, $\lnot \phi$ is an $\flat$MTL--formula.
		\item If $\phi_1$ and $\phi_2$ are $\flat$MTL--formulas, then $\phi_1\wedge\phi_2$ is an $\flat$MTL--formula.
		\item If $p$ is a propositional formula and $I$ is a positive interval on $[0,\infty)$, then $\Diamond_I p$ is an $\flat$MTL--formula.  We assume that $I = \langle S, T \rangle$ where $0 \leq S < T$ in which $T$ can be infinite $\infty$ and the boundaries of $I$ can be either open or closed.
	\end{enumerate}
Here, we define the propositional formula as follows:
\begin{enumerate}[(a)]
     \item All atomic formulas are propositional formulas.
    \item If $p$ is a propositional formula, $\lnot p$ is a propositional formula.
    \item If $p_1$ and $p_2$ are propositional formulas, then $p_1\wedge p_2$ is a propositional formula.
\end{enumerate}
\end{defi}

\begin{rem}
    For a MTL--formula $\phi$, $\square_I \phi$ is represented by $\neg \diamond_I \neg \phi$ using De Morgan duality.
    We focus $\square$ and $\diamond$ operators and does not treat the until operator.
    The condition of convergence for the until operator is a future work.
\end{rem}

The semantics of $\flat$MTL are given in the same way as MTL--formulas.
\begin{defi}[Semantics of $\flat$MTL--formulas]\label{defi:semantics_flat_MTL}
Let $B_i,\ i=1,\cdots,k$ be Borel sets on $\R$ and $AP=\{a_i ; i=1,\cdots,k\}$ be the set of $k$ atomic formulas.
The semantics of $\flat$MTL--formulas are defined inductively as follows.
    \begin{enumerate}[(i)]
        \item $X(\omega),t\models a_i \Leftrightarrow X_t(\omega)\in B_i$ for $i=1,\cdots,k$.
        \item $X(\omega),t\models \phi_1\wedge \phi_2$ is equivalent to $X(\omega),t\models \phi_1\text{ and } X(\omega),t\models\phi_2$.
        \item $X(\omega),t\models \Diamond_{I}p$ is equivalent to $(\exists s\in I)[X(\omega),t+s\models p]$.
    \end{enumerate}
\end{defi}

We will show the convergence result for $\flat$MTL in Section~\ref{sec:general_flat_convergence}. 
We show the convergence of the probability by showing the convergence of the indicator function of $\flat$MTL--formulas.
Let us define the indicator functions for MTL--formulas as follows:
\begin{defi}\label{defi:indicator_func}
    Let $\phi$ be an $\flat$MTL--formula and define random indicator functions  $\chi_{\phi}(\omega,t)$ and $\chi^{(n)}_{\phi}(t)$ as
	\begin{align*}
		\chi_{\phi}(\omega,t)&:=\begin{cases}
		1&\text{ if }X(\omega),t\models\phi\\
		0&\text{ if }X(\omega),t\not\models\phi,
		\end{cases}\\
		\chi^{(n)}_{\phi}(\omega,t)&:=\begin{cases}
		1&\text{ if }X(\omega),\Lambda_n(t)\models_n\phi\\
		0&\text{ if }X(\omega),\Lambda_n(t)\not\models_n\phi,
		\end{cases}
	\end{align*}
	where $\Lambda_n(t):=\frac{\lfloor nt\rfloor}{n}$.
\end{defi}

The convergence of the indicator function for a formula implies the convergence of the probability of the formula.
More precisely, our proof of the convergence is based on the following lemma:
\begin{lem}\label{lemm:prob_convergence_from_indicator}
	Suppose that $\chi^{(n)}_\phi(\omega,t)\rightarrow\chi_\phi(\omega,t)$ almost surely.
	Then $\Prob(\omega; X(\omega),\Lambda_n(t)\models_n\phi)\rightarrow\Prob(\omega; X(\omega),t\models\phi)$ as $n\rightarrow\infty$.
\end{lem}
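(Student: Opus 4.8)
The plan is to reduce the claimed convergence of probabilities to an application of the dominated convergence theorem (or, equivalently here, the bounded convergence theorem) for the random variables $\chi^{(n)}_\phi(\omega,t)$. First I would observe that for a fixed $t$, by the very definitions in Definition~\ref{defi:indicator_func}, we have the identities
\begin{align*}
    \Prob(\omega; X(\omega),\Lambda_n(t)\models_n\phi)&=\int_\Omega \chi^{(n)}_\phi(\omega,t)\,\Prob(d\omega)=\E[\chi^{(n)}_\phi(\cdot,t)],\\
    \Prob(\omega; X(\omega),t\models\phi)&=\int_\Omega \chi_\phi(\omega,t)\,\Prob(d\omega)=\E[\chi_\phi(\cdot,t)].
\end{align*}
These are legitimate: the integrands are $\{0,1\}$-valued, and the sets $\{\omega; X(\omega),\Lambda_n(t)\models_n\phi\}=\llbracket\phi\rrbracket_n(\Lambda_n(t))$ and $\{\omega; X(\omega),t\models\phi\}=\llbracket\phi\rrbracket(t)$ are $\F$-measurable by the measurability results of Section~\ref{sec:proof_of_measurability} (the discrete semantics being measurable directly from the $\sigma$-algebra axioms, and the continuous semantics by the theorem proved there).

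Next, the hypothesis of the lemma states precisely that $\chi^{(n)}_\phi(\omega,t)\to\chi_\phi(\omega,t)$ almost surely, i.e.\ there is $N\in\F$ with $\Prob(N)=0$ such that pointwise convergence holds for every $\omega\notin N$. Since each $\chi^{(n)}_\phi(\cdot,t)$ is bounded by the constant function $1$, which is integrable on the probability space $(\Omega,\F,\Prob)$, the dominated convergence theorem applies and yields
\begin{align*}
    \E[\chi^{(n)}_\phi(\cdot,t)]\overset{n\to\infty}{\longrightarrow}\E[\chi_\phi(\cdot,t)].
\end{align*}
Combining this with the two identities above gives $\Prob(\omega; X(\omega),\Lambda_n(t)\models_n\phi)\to\Prob(\omega; X(\omega),t\models\phi)$, which is the assertion.

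There is essentially no serious obstacle here; the only points requiring a word of care are bookkeeping ones. One must make sure that the limit random variable $\chi_\phi(\cdot,t)$ is itself measurable so that $\E[\chi_\phi(\cdot,t)]$ is defined — this is where the theorem of Section~\ref{sec:proof_of_measurability} is invoked — and one should note that almost-sure convergence of $\{0,1\}$-valued functions means that, off a null set, $\chi^{(n)}_\phi(\omega,t)$ is eventually equal to $\chi_\phi(\omega,t)$, so domination by $1$ is trivially available and no genuine tail estimate is needed. Thus the content of the lemma is purely the translation of ``almost sure convergence of indicators'' into ``convergence of the corresponding measures of events,'' and the real work is deferred to establishing the hypothesis $\chi^{(n)}_\phi\to\chi_\phi$ a.s., which is taken up in Section~\ref{sec:general_flat_convergence}.
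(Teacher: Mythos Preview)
Your proposal is correct and follows essentially the same approach as the paper: identify the two probabilities with $\E[\chi^{(n)}_\phi(\cdot,t)]$ and $\E[\chi_\phi(\cdot,t)]$, dominate by the constant $1$, and apply Lebesgue's dominated convergence theorem. The paper's proof is slightly terser but uses exactly these ingredients.
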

\begin{proof}
    From the definition of $\chi_\phi(\omega,t)$ and $\chi_\phi^{(n)}(\omega,t)$, $\chi_\phi(\omega,t)=1$ and $\chi^{(n)}_\phi(\omega,t)=1$ is equivalent to $X(\omega),t\models\phi$ and $X(\omega),\Lambda_n(t)\models_n\phi$, respectively.
    Then $\Prob(\omega\in\Omega; X(\omega),t\models\phi)=\E[\chi_\phi(\omega,t)]$ and $\Prob(\omega\in\Omega ;X(\omega),\Lambda_n(t)\models_n\phi)=\E[\chi_\phi^{(n)}(\omega,t)]$.
    Since $\chi_\phi(\omega,t)\leq1$, 
    $\chi_\phi^{(n)}(\omega,t)\leq1$, and 
    $\E[1]=1$, we can apply \emph{Lebesgue's dominated convergence theorem} (see Theorem 1.34 in \cite{ouc.200337561119660101}) to observe 
    $\E[\chi_\phi^{(n)}(\omega,t)]\rightarrow\E[\chi_\phi(\omega,t)]$.
\end{proof}

\subsection{\texorpdfstring{The case of $\Diamond_{\lrangle{S,T}}p$ with $p$ corresponding to a union of intervals}{The case of a diamond operator on an atomic p corresponding to a union of intervals}}

Before proving the convergence of general $\flat$MTL--formulas, we first show the convergence for a particular type of $\flat$MTL--formulas.
In the subsequent subsection, we will present the proof for the convergence of $\flat$MTL--formulas in the general case.
In this subsection, we consider $\flat$MTL--formulas of the form $\Diamond_{\lrangle{S, T}}p$, where $p$ is a propositional formula.
Here, $\lrangle{S,T}$ denotes a positive interval on $[0,\infty)$, specifically, $\lrangle{S,T}$ represents an interval with endpoints $S$ and $T$ such that $0\leq S<T$.
Note that the interval $\lrangle{S,T}$ can be open, left open, right open, or closed.
In the proof of convergence in this case, we utilize deep insights from stochastic calculus, namely, the notion of \emph{local maxima and local minima} of SDE (see Definition~\ref{defi:local_extreme}), and \emph{the dense property of the zero set} of SDE (see Lemma~\ref{lemm:local_extreme}).

To prove the convergence in the particular case of $\Diamond_{\lrangle{S,T}}p$, we will introduce certain conditions on the propositional formula $p$.

\begin{defi}\label{cond:atomicdisjoint}
Let $B_1,\cdots,B_n$ be a finite family of Borel sets on $\R$.
A pair $B_i,B_j$ is said to be \emph{separated} when $\overline{B_i}\cap\overline{B_j}=\emptyset$.
We say the set $\{B_1,\cdots,B_n\}$ is \emph{pairwise separated} when all pairs of different elements are separated.
\end{defi}

Now we prove the following theorem:

\begin{thm}\label{theo:pairwise_disjoint_diamond}
Let $X$ be the strong solution of \eqref{eq:SDE_with_drift} satisfying Assumption~\ref{cond:SDE_with_drift}.
Let $p$ be an MTL--formula such that $X(\omega),t\models p$ is equivalent to $X_t\in B_p$, where 
\begin{align}
    B_p:=\bigcup_{i=1}^k\lrangle{x_i,y_i}\label{eq:union_of_intervals}
\end{align}
is a union of pairwise separated positive intervals $\{\lrangle{x_i,y_i}; i=1,\cdots,k\}$ on $\R$.
Here $B_p$ possibly equals the empty set or $\R$. 
Define $X(\omega),t\models_n p$ similarly.
Define $\phi:=\Diamond_{\langle S,T\rangle}p$, and $\psi:=\Box_{\lrangle{S,T}}p$ where $\langle S,T\rangle$ is a positive interval on $[0,\infty)$.
Then the following statements hold:
\begin{align*}
    \chi_\phi^{(n)}(\omega)&\overset{n\rightarrow\infty}{\longrightarrow}\chi_\phi(\omega),\asure,\\
    \chi_\psi^{(n)}(\omega)&\overset{n\rightarrow\infty}{\longrightarrow}\chi_\psi(\omega),\asure
\end{align*}
In particular,
\begin{align*}
    \Prob(\omega; X(\omega),\Lambda_n(t)\models_n\phi)\overset{n\rightarrow\infty}{\rightarrow}\Prob(\omega; X(\omega),t\models\phi),\\
    \Prob(\omega; X(\omega),\Lambda_n(t)\models_n\psi)\overset{n\rightarrow\infty}{\rightarrow}\Prob(\omega; X(\omega),t\models\psi).
\end{align*}
\end{thm}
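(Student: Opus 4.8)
By Lemma~\ref{lemm:prob_convergence_from_indicator} it suffices to prove that $\chi^{(n)}_\phi(\omega)\to\chi_\phi(\omega)$ and $\chi^{(n)}_\psi(\omega)\to\chi_\psi(\omega)$ almost surely. Since $\Box_{\lrangle{S,T}}=\lnot\Diamond_{\lrangle{S,T}}\lnot$ in both the continuous and the discrete semantics, one has $\chi_\psi=1-\chi_{\Diamond_{\lrangle{S,T}}\lnot p}$ and $\chi^{(n)}_\psi=1-\chi^{(n)}_{\Diamond_{\lrangle{S,T}}\lnot p}$, and $\lnot p$ corresponds to the Borel set $\R\setminus B_p$, which is again (when it is neither $\emptyset$ nor $\R$) a pairwise separated union of nondegenerate intervals whose endpoints lie in the \emph{same} finite set $E:=\{x_1,y_1,\dots,x_k,y_k\}$. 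So it is enough to establish $\chi^{(n)}_\phi(\omega)\to\chi_\phi(\omega)$ a.s.\ for $\phi=\Diamond_{\lrangle{S,T}}p$.

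\textbf{The structural fact.} Put $U:=\{u\geq0; X_u(\omega)\in\bigcup_{i=1}^k(x_i,y_i)\}$, an open subset of $[0,\infty)$. Applying Lemma~\ref{lemm:local_extreme} (with Definition~\ref{defi:local_extreme}) to each of the finitely many levels in $E$, one obtains that, almost surely, the path $u\mapsto X_u(\omega)$ has \emph{no local extremum whose value lies in $E$}; equivalently, whenever $X_{u_0}(\omega)\in E$, every neighbourhood of $u_0$ contains times where $X$ is strictly above and times where $X$ is strictly below $X_{u_0}(\omega)$. Because the intervals $\lrangle{x_i,y_i}$ are nondegenerate and pairwise separated, this forces $U\subseteq\bracketo{p}\subseteq\overline U$ almost surely, with the symmetric statement for $\{u; X_u(\omega)\notin B_p\}$; in fact the accumulation of $U$ (resp.\ of its complement) at a boundary time is \emph{bilateral}. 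This is exactly the regularity violated by the grazing path used in the counterexample of Section~\ref{sec:discretization}, and it is the only place where uniform ellipticity, Assumption~\ref{cond:SDE_with_drift}(i), is used.

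\textbf{The two directions.} Fix $\omega$ outside the null set above and fix $t$. If $X(\omega),t\models\phi$, choose $u_0\in\bracketo{p}\cap(t+\lrangle{S,T})$; by the bilateral accumulation there is a nonempty open interval $J\subseteq U\subseteq\bracketo{p}$ which, after shrinking, satisfies $J-t\subseteq(S,T)$. Since $\Lambda_n(t)\to t$ and the mesh $1/n\to0$, for all large $n$ the grid $\{\Lambda_n(t)+k/n; k/n\in\lrangle{S,T}\}$ meets $J$, so $X(\omega),\Lambda_n(t)\models_n\phi$ and $\chi^{(n)}_\phi(\omega)=1$. If instead $X(\omega),t\not\models\phi$, then $X_u(\omega)\notin B_p$ for every $u\in t+\lrangle{S,T}$; the structural fact rules out $X_{t+S}(\omega),X_{t+T}(\omega)\in\overline{B_p}$, since a boundary value at either endpoint would, by the local-extremum-free property, pull the path into some $(x_i,y_i)\subseteq B_p$ at times lying inside $t+\lrangle{S,T}$. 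Hence by continuity $X_u(\omega)\notin B_p$ on $(t+S-\delta,\,t+T+\delta)$ for some $\delta=\delta(\omega,t)>0$ (with the obvious modification when $T=\infty$), and for $n$ large every grid point $\Lambda_n(t)+k/n$ with $k/n\in\lrangle{S,T}$ lies in this interval, so $X(\omega),\Lambda_n(t)\not\models_n\phi$ and $\chi^{(n)}_\phi(\omega)=0$. Together, $\chi^{(n)}_\phi(\omega)\to\chi_\phi(\omega)$ a.s.; Lemma~\ref{lemm:prob_convergence_from_indicator} gives the convergence of probabilities for $\phi$, and the same argument run for $\lnot p$ gives it for $\psi$.

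\textbf{Main obstacle.} The only non-routine ingredient is the almost sure structural fact of the second step: it asserts that an elliptic SDE, in contrast to the pathology exploited in Section~\ref{sec:discretization}, cannot touch one of the finitely many critical levels $x_i,y_i$ at a local extremum, and this is precisely what Lemma~\ref{lemm:local_extreme} supplies. Everything else — the bookkeeping of whether each endpoint of $\lrangle{S,T}$ and of each $\lrangle{x_i,y_i}$ is open or closed, the handling of $k=0$ in the discrete grid, and the case $T=\infty$ — is a mechanical case analysis that does not affect the skeleton above.
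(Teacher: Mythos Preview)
Your overall shape is close to the paper's: both reduce to almost-sure convergence of the indicator functions, both invoke Lemma~\ref{lemm:local_extreme} to get the regularity $\bracketo{p}\subset\overline{\internal\bracketo{p}}$ (the paper packages this as Lemmas~\ref{lemm:propositional_inter}, \ref{lemm:interval_propositional_inter}, \ref{lemm:propositional_denseness}), and both finish with Lemma~\ref{lemm:prob_convergence_from_indicator}. The reduction of $\psi$ to $\phi$ via $\lnot p$ is also fine.

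There is, however, a genuine gap in your ``structural fact'': the \emph{bilateral} accumulation of $U$ at every boundary time does \emph{not} follow from Lemma~\ref{lemm:local_extreme}. That lemma yields only that no level set $\mathcal{L}^{x_i}$ has an isolated point and that local extrema are strict; together these exclude local extrema at the critical levels, but they do \emph{not} exclude one-sided behaviour. Concretely, if $g$ is the left endpoint of an excursion interval of $X-x_i$ (such points exist almost surely), then $X_g=x_i$ while $X_u-x_i$ keeps a fixed sign on $(g,g+\varepsilon)$; thus $U$ accumulates at $g$ from one side only. Your forward step (``after shrinking, $J-t\subset(S,T)$'') and your backward step (``the structural fact rules out $X_{t+S},X_{t+T}\in\overline{B_p}$'') both fail exactly when $u_0=t+S$ or $u_0=t+T$ happens to be such an excursion endpoint.

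The paper closes this gap with a different ingredient that you do not invoke: the law of $X_s$ has a density for every $s>0$ (Proposition~\ref{prop:density_SDE}), so by Lemma~\ref{lemm:closedboundary_clt_interval} one has $\Prob(X_{t+S}\in E)=\Prob(X_{t+T}\in E)=0$ for fixed $t$. Once $t+S,t+T\notin\partial\bracketo{p}$ almost surely, unilateral accumulation (your $\bracketo{p}\subset\overline U$) suffices, because any witness $u_0$ may be taken in the open interval $(t+S,t+T)$; this is exactly the argument of Lemma~\ref{lemm:indicator_convergence}. In short, the missing idea is absolute continuity of $X_{t+S}$ and $X_{t+T}$, not a sharper pathwise statement.
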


The key to proving the convergence of MTL--formulas with the diamond operator lies in the following inclusions:
\begin{align}
	\bracketo{\phi}&\subset\overline{\internal\bracketo{\phi}}\ \text{almost surely},\label{eq:phi_closure_of_int}\\
	\bracketo{\lnot\phi}&\subset\overline{\internal\bracketo{\lnot\phi}}\ \text{almost surely},\label{eq:notphi_closure_of_int}
\end{align}
where the time set $\bracketo{\phi}$ of MTL--formula $\phi$ is defined in Definition~\ref{defi:time_set}.

To show Theorem~\ref{theo:pairwise_disjoint_diamond}, we will first prove a simplified version of the theorem in which the propositional formula corresponds to an interval.
First, let us show \eqref{eq:phi_closure_of_int} and \eqref{eq:notphi_closure_of_int} in this case:

\begin{lem}\label{lemm:interval_propositional_inter}
Let $X=\{X_t\}_{t\geq0}$ be the strong solution of the SDE~\eqref{eq:SDE_with_drift} satisfying Assumption~\ref{cond:SDE_with_drift}.
Let $p$ be a propositional formula defined as $X(\omega),t\models p\Leftrightarrow X_t(\omega)\in \lrangle{y_1,y_2}$, where $\lrangle{y_1,y_2}$ is a positive interval on $\R$.
Then $p$ satisfies \eqref{eq:phi_closure_of_int} and \eqref{eq:notphi_closure_of_int} almost surely.
Namely, 
\begin{align*}
    \bracketo{p}&\subset\overline{\internal\bracketo{p}},\asure,\\
    \bracketo{\lnot p}&\subset\overline{\internal\bracketo{\lnot p}},\asure
\end{align*}
\end{lem}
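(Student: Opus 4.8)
The plan is to analyze the time set $\bracketo{p} = \{t \geq 0 : X_t(\omega) \in \lrangle{y_1,y_2}\}$ directly via the path behavior of $X$. The key point is that for an open interval $(y_1, y_2)$ the set $\{t : X_t(\omega) \in (y_1,y_2)\}$ is automatically open (by continuity of $t \mapsto X_t(\omega)$), hence equals its own interior, so the inclusion $\bracketo{p} \subset \overline{\internal\bracketo{p}}$ is trivial in that case. The real work is at the endpoints: when $\lrangle{y_1,y_2}$ is half-open or closed, a time $t$ with $X_t(\omega) = y_1$ or $X_t(\omega) = y_2$ may lie in $\bracketo{p}$ but not obviously in $\overline{\internal\bracketo{p}}$. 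So the heart of the matter is to show that, almost surely, every time at which the path touches a level $y_i$ is a limit of times at which the path lies strictly inside $(y_1,y_2)$ — and symmetrically for $\lnot p$, every touching time is a limit of times strictly outside $[y_1,y_2]$.

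First I would reduce to a statement about level sets: for a fixed level $c \in \R$, consider the zero set $Z_c(\omega) := \{t \geq 0 : X_t(\omega) = c\}$ of the shifted process $X_t - c$. I would invoke the dense-zero-set property of SDE solutions under Assumption~\ref{cond:SDE_with_drift} (the result cited in the excerpt as Lemma~\ref{lemm:local_extreme} together with Definition~\ref{defi:local_extreme} on local maxima and minima): almost surely, $X$ has no point of local extremum at level $c$ in the sense that whenever $X_t = c$, the path takes values both $> c$ and $< c$ in every right-neighborhood (and every left-neighborhood) of $t$. Because $\sigma$ is bounded below on compacts and $b$ is bounded, the solution behaves locally like a time-changed Brownian motion, which crosses any level it touches; this is exactly the content of the cited lemma. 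Granting this, if $X_t(\omega) = y_1$ then in any neighborhood of $t$ there are times $s$ with $X_s(\omega) \in (y_1, y_1 + \varepsilon) \subset (y_1, y_2)$ for small $\varepsilon$, hence $s \in \internal\bracketo{p}$; similarly there are nearby times with $X_s(\omega) < y_1$, hence $s \in \internal\bracketo{\lnot p}$. The same argument applies at $y_2$. Combining the open-interior case with the endpoint analysis then gives both inclusions.

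Concretely the steps are: (1) observe $\{t : X_t(\omega) \in (y_1,y_2)\}$ is open and $\{t : X_t(\omega) \notin [y_1,y_2]\}$ is open, by path continuity; (2) from the cited dense-zero / no-local-extremum property, fix the almost-sure event $\Omega_0$ on which for each of the (at most two) relevant levels $c \in \{y_1, y_2\}$, every time $t$ with $X_t(\omega) = c$ is an accumulation point both of $\{s : X_s(\omega) > c\}$ and of $\{s : X_s(\omega) < c\}$; (3) on $\Omega_0$, take any $t \in \bracketo{p}$ — if $X_t(\omega) \in (y_1,y_2)$ then $t \in \internal\bracketo{p}$ already, while if $X_t(\omega)$ equals an included endpoint, approximate $t$ by interior times using step (2); this yields $\bracketo{p} \subset \overline{\internal\bracketo{p}}$; (4) run the mirror-image argument for $\lnot p$, whose truth set is the complement level set together with the excluded endpoints, again using step (2). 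The main obstacle is step (2): one must be careful that the no-local-extremum property is applied to the correct finite collection of levels and that it gives approximation from the side landing \emph{inside} $(y_1,y_2)$ as well as the side landing \emph{outside} $[y_1,y_2]$ — i.e.\ that the path genuinely oscillates across each level, not merely touches it. This is precisely where Assumption~\ref{cond:SDE_with_drift}(i), $\inf\sigma(K) > 0$ on compacts, is essential, since a degenerate diffusion could rest at a level. Everything else is elementary topology of the real line.
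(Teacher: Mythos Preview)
Your proposal is correct and rests on the same key input as the paper, namely Lemma~\ref{lemm:local_extreme}: from the dense-in-itself property of each level set together with strictness of all local extrema, one deduces that almost surely no local extremum occurs at a prescribed level $c$, hence every time with $X_t=c$ is approximated on both sides of $c$. The paper organizes the argument differently. It first isolates the half-line case as a separate lemma (Lemma~\ref{lemm:propositional_inter}, for atoms of the form $X_t\in\langle y,\infty)$), and then writes the interval formula as $p\equiv a\wedge\lnot b$ with $a,b$ two half-line atoms; the proof of Lemma~\ref{lemm:interval_propositional_inter} is then a topological combination argument showing that \eqref{eq:phi_closure_of_int} and \eqref{eq:notphi_closure_of_int} pass from $a$ and $\lnot b$ to $a\wedge\lnot b$, using that $y_1<y_2$ so that a boundary point of one formula lies in the interior of the other's time set. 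Your direct route---using path continuity at the touching time to force nearby values into $(y_1,y_1+\varepsilon)\subset(y_1,y_2)$---is shorter and avoids the intermediate half-line lemma; the paper's factorization, by contrast, produces a reusable building block and makes the later extension to finite unions of pairwise separated intervals (Lemma~\ref{lemm:propositional_denseness}) a clean induction on the same combination pattern.
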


To prove this lemma, we have to introduce {\it local minima} and {\it local maxima} of $X$:
\begin{defi}\label{defi:local_extreme}\hfill
\begin{enumerate}[(i)]
    \item Let $f:[0,\infty)\rightarrow\R$ be a given function.
		A number $t\geq0$ is called a {\it point of local maximum}, if there exists a number $\delta>0$ with $f(s)\leq f(t)$ valid for every $s\in[(t-\delta)^+,t+\delta]$; and a {\it point of strict local maximum}, if there exists a number $\delta>0$ with $f(s)<f(t)$ valid for  every $s\in[(t-\delta)^+,t+\delta]\setminus\{t\}$.
    \item Let $f:[0,\infty)\rightarrow\R$ be a given function.
		A number $t\geq0$ is called a {\it point of local minimum}, if there exists a number $\delta>0$ with $f(s)\geq f(t)$ valid for every $s\in[(t-\delta)^+,t+\delta]$; and a {\it point of strict local minimum}, if there exists a number $\delta>0$ with $f(s)>f(t)$ valid for every $s\in[(t-\delta)^+,t+\delta]\setminus\{t\}$.
\end{enumerate}
\end{defi}

To show Lemma~\ref{lemm:interval_propositional_inter}, we use some pathological property of the SDE:
Under Assumption~\ref{cond:SDE_with_drift}, the SDE has the quite zig--zag sample path.
Then, once it reaches a region associated with an atomic proposition, it immediately hits the region with high frequency.

 \begin{lem}\label{lemm:local_extreme}
Let $X=\{X_t\}_{t\geq0}$ be the strong solution of the SDE \eqref{eq:SDE_with_drift} satisfying Assumption~\ref{cond:SDE_with_drift}. 
Then, the following statements hold (see~\ref{append:SDE} for the proof):
\begin{enumerate}[(i)]
\item Let $a\in\R$ and put 
    \begin{align}
        \mathcal{L}^a_\omega:=\{t\geq0; X_t(\omega)=a\},\hspace{10pt}\ \omega\in\Omega.
    \end{align}
    Then $\mathcal{L}^a_\omega$ is dense--in--itself almost surely.
    \item For almost every $\omega\in\Omega$, the set of points of local maximum and local minimum for the path $t\mapsto X(\omega)$ is dense in $[0,\infty)$, and all local maxima and local minima are strict.
\end{enumerate}
 \end{lem}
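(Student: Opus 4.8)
The plan is to reduce both statements to classical path properties of Brownian motion via a scale change followed by a time change, exploiting the non-degeneracy in Assumption~\ref{cond:SDE_with_drift}(i). First I would introduce the \emph{scale function}
$s(x):=\int_0^x\exp\!\left(-\int_0^y\frac{2b(z)}{\sigma^2(z)}\,dz\right)dy$,
which is well defined, strictly increasing, and $C^1$ with a locally Lipschitz derivative: indeed $b$ is bounded by Assumption~\ref{cond:SDE_with_drift}(iii) and $\sigma^2$ is locally bounded away from $0$ by Assumption~\ref{cond:SDE_with_drift}(i), so $\frac{2b}{\sigma^2}$ is locally integrable. Applying the generalized Itô (Itô--Tanaka) formula --- which is needed because $b$ is only Borel measurable, so $s''$ exists merely almost everywhere --- the drift of $X$ is annihilated, and $Y_t:=s(X_t)$ becomes a continuous local martingale with $\langle Y\rangle_t=\int_0^t s'(X_u)^2\sigma^2(X_u)\,du$.

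Next I would apply the Dambis--Dubins--Schwarz theorem to obtain a standard Brownian motion $\beta$ with $Y_t=s(X_0)+\beta_{A_t}$, where $A_t:=\langle Y\rangle_t$. The crucial point is that $A$ is a homeomorphism of $[0,\infty)$ onto $[0,A_\infty)$: its integrand $s'(X_u)^2\sigma^2(X_u)$ is continuous in $u$ along the continuous path $X$ and strictly positive --- since $\sigma(x)>0$ for every $x$ (Assumption~\ref{cond:SDE_with_drift}(i) applied to singletons) and $s'>0$ --- so $A$ is $C^1$ and strictly increasing. As $s$ is a strictly increasing homeomorphism onto its range and $A$ is a homeomorphism, we obtain for each fixed $a\in\R$ the equivalence $X_t(\omega)=a\iff\beta_{A_t(\omega)}=s(a)-s(X_0(\omega))$, and more generally $t$ is a point of (strict) local maximum or minimum of $t\mapsto X_t(\omega)$ if and only if $A_t(\omega)$ is a point of (strict) local maximum or minimum of $u\mapsto\beta_u$; moreover, since $s$ is injective, two local maxima of $X$ have equal value iff the corresponding local maxima of $\beta$ do.

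With this dictionary, both claims reduce to well-known facts about $\beta$. For (i), the level set $\{u\ge0;\beta_u=c\}$ is almost surely closed and dense-in-itself (the level sets of Brownian motion are perfect), and pulling back by the homeomorphism $A^{-1}$ preserves being dense-in-itself, so $\mathcal{L}^a_\omega$ is dense-in-itself almost surely --- trivially on the event that $X$ never reaches $a$, where it is empty. For (ii), almost surely the points of local maximum and minimum of $\beta$ are dense in $[0,\infty)$: for every interval $[q_1,q_2]$ with rational endpoints the maximum of $\beta$ over $[q_1,q_2]$ is attained in the interior, since $\sup_{0\le u\le q_2-q_1}(\beta_{q_1+u}-\beta_{q_1})>0$ and, by time reversal, $\sup_{0\le v\le q_2-q_1}(\beta_{q_2-v}-\beta_{q_2})>0$ almost surely, and a countable union over such intervals yields a dense set of local extrema (and symmetrically for minima). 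Moreover all local extrema of $\beta$ are strict: for two disjoint intervals with rational endpoints the difference of the two suprema of $\beta$ over them has an atomless law (the supremum of the increment over the later interval is independent of the past and atomless), so almost surely no two local maxima of $\beta$ share a value; a non-strict local maximum would force two distinct local maxima with the same value, a contradiction, and likewise for minima. Transporting all of this back through $A^{-1}$ and $s^{-1}$ gives (ii) for $X$.

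The step I expect to be the main obstacle is the analytic bookkeeping for the scale function when $b$ is merely Borel measurable: verifying that $s$ is $C^1$ with absolutely continuous derivative, that the Itô--Tanaka formula applies and exactly cancels the drift, and that the resulting path-dependent (random) time change $A$ is genuinely a homeomorphism and not merely increasing --- this is precisely where the three parts of Assumption~\ref{cond:SDE_with_drift} enter, and it is also where the appendix's existence/uniqueness/absolute-continuity machinery is reused. A secondary point deserving care is that the Brownian facts must be invoked in their ``almost surely, simultaneously for all points and all rational intervals'' form, so that the exceptional null sets can be transported along the time change; this is routine but should be stated explicitly.
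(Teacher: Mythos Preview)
Your proposal is correct and follows essentially the same route as the paper: remove the drift via the scale function, apply Dambis--Dubins--Schwarz to the resulting continuous local martingale, observe that the quadratic variation is a strictly increasing continuous time change (by Assumption~\ref{cond:SDE_with_drift}(i)), and then transport the classical Brownian path properties (perfect level sets; dense, strict local extrema) back through the homeomorphism and the strictly monotone scale. The only cosmetic difference is that the paper outsources the scale-change step to \cite[5.5.13]{MR1121940} rather than invoking It\^o--Tanaka directly, and it cites the Brownian facts from \cite[2.9.7, 2.9.12]{MR1121940} instead of sketching them as you do; your extra care about the regularity of $s$ under merely Borel $b$ and about collecting the null sets is well placed but does not constitute a different argument.
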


\begin{lem}\label{lemm:propositional_inter}
Let $X=\{X_t\}_{t\geq0}$ be the strong solution of the SDE~\eqref{eq:SDE_with_drift} satisfying Assumption~\ref{cond:SDE_with_drift}.
Let us define an atomic formula $a$ as $X(\omega),t\models a\Leftrightarrow X_t(\omega)\in \langle y,\infty)$, where $\langle y,\infty) $ is half-line with open or closed endpoint $y\in\R$.
Then $a$ satisfies \eqref{eq:phi_closure_of_int} and \eqref{eq:notphi_closure_of_int} almost surely.
Namely,
\begin{align*}
    \bracketo{a}&\subset\overline{\internal\bracketo{a}},\asure,\\
    \bracketo{\lnot a}&\subset\overline{\internal\bracketo{\lnot a}},\asure
\end{align*}
\end{lem}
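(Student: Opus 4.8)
The plan is to reduce this to a purely point‑set‑topological statement about the path $t\mapsto X_t(\omega)$, fed by the two sample‑path properties in Lemma~\ref{lemm:local_extreme}. Write $U:=\{t\geq 0; X_t(\omega)>y\}$, $V:=\{t\geq 0; X_t(\omega)<y\}$ and $L:=\mathcal{L}^y_\omega=\{t\geq 0; X_t(\omega)=y\}$, so that $[0,\infty)$ is the disjoint union $U\sqcup V\sqcup L$ and, by continuity of the path, $U$ and $V$ are open. According as the endpoint of $\langle y,\infty)$ is open or closed, $\bracketo{a}$ equals $U$ or $U\cup L$, and correspondingly $\bracketo{\lnot a}$ equals $V\cup L$ or $V$. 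Whenever $\bracketo{a}$ (resp. $\bracketo{\lnot a}$) is the open set $U$ (resp. $V$), it coincides with its own interior and the desired inclusion is immediate; so the whole content is the case of a set of the form $U\cup L$ (or, symmetrically, $V\cup L$).

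First I would fix a full‑measure event $\Omega_0$ on which both conclusions of Lemma~\ref{lemm:local_extreme} hold: $\mathcal{L}^y_\omega$ is dense‑in‑itself and every point of local maximum or local minimum of $t\mapsto X_t(\omega)$ is strict. On $\Omega_0$ the key claim is $L\subseteq\overline{U}$ (and, mirror‑wise, $L\subseteq\overline{V}$). To prove it, suppose $t\in L$ but $t\notin\overline{U}$; then there is $\delta>0$ with $X_s(\omega)\leq y$ for all $s\in[(t-\delta)^+,t+\delta]$, and since $X_t(\omega)=y$ this makes $t$ a point of local maximum of the path in the sense of Definition~\ref{defi:local_extreme}. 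By Lemma~\ref{lemm:local_extreme}(ii) this local maximum is strict, so there is $\delta'>0$ with $X_s(\omega)<y$ on $[(t-\delta')^+,t+\delta']\setminus\{t\}$; no point of that punctured neighbourhood lies in $L$, i.e. $t$ is an isolated point of $\mathcal{L}^y_\omega$, contradicting Lemma~\ref{lemm:local_extreme}(i). Hence $L\subseteq\overline{U}$, and the same argument with local minima gives $L\subseteq\overline{V}$.

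To finish, note that $U$ is open and $U\subseteq U\cup L$, hence $U\subseteq\internal(U\cup L)$ and therefore $\overline{U}\subseteq\overline{\internal(U\cup L)}$; combining with $L\subseteq\overline{U}$ yields $U\cup L\subseteq\overline{\internal(U\cup L)}$, which is exactly \eqref{eq:phi_closure_of_int} or \eqref{eq:notphi_closure_of_int} for whichever of $\bracketo{a},\bracketo{\lnot a}$ has the form $U\cup L$; the companion inclusion for the open one was already observed. The boundary case $t=0$ is handled uniformly by the one‑sided neighbourhoods built into Definition~\ref{defi:local_extreme}, and the case $L=\emptyset$ is vacuous. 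The only non‑elementary ingredient is Lemma~\ref{lemm:local_extreme} itself — the zig‑zag behaviour and the dense‑in‑itself zero set of the SDE under Assumption~\ref{cond:SDE_with_drift}, whose proof is deferred to Appendix~\ref{append:SDE}; granted it, everything here is topological, and the only step demanding care is making the contradiction airtight, namely invoking \emph{both} that a touching point of level $y$ from below would be a strict local maximum \emph{and} that $\mathcal{L}^y_\omega$ has no isolated point.
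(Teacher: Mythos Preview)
Your proof is correct and follows essentially the same approach as the paper: fix a full-measure set on which the path is continuous, $\mathcal{L}^y_\omega$ is dense-in-itself, and all local extrema are strict; observe that the open preimage case is trivial; and for points on the level set $L$, argue by contradiction that a point of $L$ not accumulated by $U$ (resp.\ $V$) would be a local maximum (resp.\ minimum), hence strict, hence isolated in $\mathcal{L}^y_\omega$, contradicting Lemma~\ref{lemm:local_extreme}(i). Your packaging via the single claim $L\subseteq\overline{U}$ (and symmetrically $L\subseteq\overline{V}$), followed by $U\subseteq\internal(U\cup L)$, is slightly cleaner than the paper's two separate endpoint cases, but the mathematical content is identical.
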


\begin{proof}
    Put $\tilde{\Omega}$ be the set of $\omega\in\Omega$ with the following properties:
    \begin{enumerate}[(i)]
        \item The map $t\mapsto X_t(\omega)$ is continuous,
        \item $\mathcal{L}^y_\omega:=\{t\geq0 ; X_t(\omega)=y\}$ is dense--in--itself,
        \item  the set of local maximum and local minimum of $t\mapsto X_t(\omega)$  is dense in $[0,\infty)$, and 
        \item  all the local minima and the local maxima are strict
    \end{enumerate}
    Then there exists some $\hat{\Omega}\in\F$ such that $\hat{\Omega}\subset\tilde{\Omega}$ and $\Prob(\hat{\Omega})=1$ because of Definition~\ref{defi:SDE_with_drift}, Lemma~\ref{lemm:local_extreme}.
    Indeed, let $\Omega_1,\Omega_1,\Omega_3,\Omega_4\in\F$ be the sets of $\omega$ such that (i)--(iv) holds respectively and $\Prob(\Omega_1)=\Prob(\Omega_2)=\Prob(\Omega_3)=\Prob(\Omega_4)=1$, then $\Prob(\bigcap_{i=1}^4\Omega_i)=1$ follows from Remark~\ref{rema:almost_sure_concerves}.
    From now on, let us prove that the formula $a$ satisfies \eqref{eq:phi_closure_of_int} and \eqref{eq:notphi_closure_of_int} for all $\omega\in\hat{\Omega}$.
    
        Let $\langle y,\infty)$ be the left-closed interval $[y,\infty)$.
    The statement $t\in\llbracket \lnot a\rrbracket_\omega$ is equivalent to $X_t(\omega)<y$.
    Since $t\mapsto X_t(\omega)$ is continuous, the set
    \begin{align*}
        \bracketo{\lnot a}=\{t\geq0 ;X_t(\omega)<y\}
    \end{align*}
    is an open set, and therefore inclusion \eqref{eq:notphi_closure_of_int} holds clearly.
    On the other hand, due to the continuity of $t\mapsto X_t(\omega)$, if $X_t(\omega)>y$, then it implies that $t\in\text{ int }\llbracket a\rrbracket_\omega$.
    Hence, it remains to show that $X_t(\omega)=y$ implies $t\in\overline{\internal\bracketo{a}}$.
    Suppose $X_t(\omega)=y$ and $t\notin\overline{\text{int}{\llbracket a\rrbracket_\omega}}$. Then, there exists $\varepsilon>0$ such that $(t-\varepsilon,t+\varepsilon)\cap\text{int}\llbracket a\rrbracket_\omega=\emptyset$.
    Since $(\exists s\in(t-\varepsilon,t+\varepsilon))[X_s(\omega)>y]$ implies $(t-\varepsilon,t+\varepsilon)\cap\text{int}\llbracket a\rrbracket_\omega\neq\emptyset$, it follows that $(\forall s\in(t-\varepsilon,t+\varepsilon))[X_s(\omega)\leq y]$.
    By applying (iii), we can conclude that $t$ is a strict local maximum, i.e., $(\forall s\in(t-\varepsilon,t+\varepsilon)\setminus\{t\})[X_s(\omega)<y]$, and thus, $t$ is an isolated point of $\{t\geq0; X_t(\omega)=y\}$.
    However, this contradicts (ii).
    Therefore, we obtain the inclusion \eqref{eq:phi_closure_of_int}.
    
    On the other hand, consider $\langle y,\infty)$ as the left-open interval $(y,\infty)$.
    Now, $t\in\llbracket a\rrbracket_\omega$ is equivalent to $X_t(\omega)>y$.
    Since $t\mapsto X_t(\omega)$ is continuous, the set
    \begin{align*}
        \bracketo{a}=\{t\geq0 ; X_t(\omega)>y\}
    \end{align*}
    is an open set, and thus inclusion \eqref{eq:phi_closure_of_int} holds clearly.
    Moreover, due to the continuity of $t\mapsto X_t(\omega)$, if $X_t(\omega)<y$, then it implies $t\in\text{ int }\llbracket\lnot a\rrbracket_\omega$.
    Therefore, we need to show the inclusion \eqref{eq:notphi_closure_of_int} when $t\in\llbracket \lnot a \rrbracket_\omega$ and $X_t(\omega)=y$.
    Suppose $t\notin\overline{\text{int}{\llbracket\lnot a\rrbracket_\omega}}$, which implies the existence of $\varepsilon>0$ such that $(t-\varepsilon,t+\varepsilon)\cap\text{int}\llbracket \lnot a\rrbracket_\omega=\emptyset$.
    If $(\exists s\in(t-\varepsilon,t+\varepsilon))[X_s(\omega)<y]$, then it implies $(t-\varepsilon,t+\varepsilon)\cap\text{int}\llbracket \lnot a\rrbracket_\omega\neq\emptyset$. Consequently, it holds that $(\forall s\in(t-\varepsilon,t+\varepsilon))[X_s(\omega)\geq y]$.
    By applying~\ref{lemm:local_extreme}--(ii), we can deduce that $t$ is a strict local minimum, i.e., $(\forall s\in(t-\varepsilon,t+\varepsilon)\setminus\{t\})[X_s(\omega)>y]$, which means $t$ is an isolated point of $\llbracket \lnot a\rrbracket_\omega$.
    However, this contradicts (ii).
    Thus, we establish \eqref{eq:notphi_closure_of_int}.        
\end{proof}

\begin{proof}[Proof of Lemma~\ref{lemm:interval_propositional_inter}]
    Let us define atomic formulas $a,b$ as
    \begin{align}
        X(\omega),t\models a&\Leftrightarrow X_t(\omega)\in\langle y_1,\infty),\label{eq:interval_a}\\
        X(\omega),t\models b&\Leftrightarrow X_t(\omega)\in\langle y_2,\infty)\label{eq:interval_b},
    \end{align}
    where left endpoints $y_1,y_2$ can be open or closed and satisfy $y_1<y_2$.
    Then we can define $X(\omega),t\models p$ is equivalent to $X(\omega),t\models a\wedge \lnot b$ and hence it is enough to show that $a\wedge \lnot b$ satisfies \eqref{eq:phi_closure_of_int} and \eqref{eq:notphi_closure_of_int} almost surely.
    To see this, let $\tilde{\Omega}$ be the set of $\omega\in\Omega$ with the following properties
    \begin{enumerate}[(i)]
        \item[(i)] The map $t\mapsto X_t(\omega)$ is continuous,
        \item[(ii)] the formula $a$ satisfies \eqref{eq:phi_closure_of_int} and \eqref{eq:notphi_closure_of_int}, and
        \item[(iii)] the formula $b$ satisfies \eqref{eq:phi_closure_of_int} and \eqref{eq:notphi_closure_of_int}. 
    \end{enumerate}
    Then Definition~\ref{defi:SDE_with_drift} and Lemma~\ref{lemm:propositional_inter} imply that there exists some $\hat{\Omega}\in\F$ such that $\Prob(\hat{\Omega})=1$ and every $\omega\in\hat{\Omega}$ satisfies (i)--(iii). Lemma~\ref{lemm:propositional_inter}.
    Now let us show \eqref{eq:phi_closure_of_int} and \eqref{eq:notphi_closure_of_int} for every $\omega\in\hat{\Omega}$.
    \begin{description}
        \item[\eqref{eq:phi_closure_of_int}] Given that
        \begin{align*}
             \bracketo{a\wedge \lnot b}\subset(\internal\bracketo{a}\cap\internal\bracketo{\lnot b})\cup\partial\bracketo{a}\cup\partial\bracketo{\lnot b},
         \end{align*}
         and 
         \begin{align*}
            \text{int}\llbracket a\rrbracket_\omega\cap\internal\llbracket \lnot b\rrbracket_\omega=\text{int}\llbracket a\wedge \lnot b\rrbracket_\omega\subset\overline{\internal\bracketo{a\wedge\lnot b}},
        \end{align*}
         then it is enough to show
         \begin{align*}
             &\bracketo{a\wedge\lnot b}\cap\partial\llbracket a\rrbracket_\omega\subset \overline{\internal\bracketo{a\wedge\lnot b}}\\
             &\bracketo{a\wedge\lnot b}\cap\partial\bracketo{\lnot b}\subset\overline{\internal\bracketo{a\wedge\lnot b}}.
         \end{align*}
        
        Suppose that $t\in\llbracket a\wedge\lnot b\rrbracket_\omega\cap\partial{\bracketo{a}}$.
        If $O$ is a neighborhood of $t$,  $O\cap\internal\bracketo{a}\neq\emptyset$ because $O\cap\bracketo{a}\neq\emptyset$ and (ii) hold.
        Since the path $t\mapsto X(\omega)$ is continuous,  $t\in \partial\bracketo{a}$ implies $X_t(\omega)=y_1<y_2$ and hence $t\in\internal\bracketo{\lnot b}$.
        Let $\varepsilon>0$.
        Since $(t-\varepsilon,t+\varepsilon)\cap\internal\bracketo{\lnot b}$ is a neighborhood of $t$, 
     \begin{align*}
         (t-\varepsilon,t+\varepsilon)\cap\internal\bracketo{a}\cap\internal\bracketo{\lnot b}=(t-\varepsilon,t+\varepsilon)\cap\internal\bracketo{a\wedge\lnot b}\neq\emptyset,\asure,
     \end{align*}
     and hence $t\in\overline{\internal\bracketo{a\wedge\lnot b}}$.
        
        The same argument can be applied when $t\in\bracketo{a\wedge\lnot b}\cap\partial\bracketo{\lnot b}$.
        Thus, we have shown \eqref{eq:phi_closure_of_int}.
        \item[\eqref{eq:notphi_closure_of_int}] 
         Suppose that $t\in\bracketo{\lnot a\vee b}$ and let $O$ be a neighborhood of $t$.
        Since $t\in\bracketo{\lnot a}$ or $t\in\bracketo{b}$, (ii) and (iii) imply that $O\cap\internal\bracketo{\lnot a}\neq\emptyset$ or $O\cap \internal\bracketo{b}\neq\emptyset$ holds.
        Since $\internal\bracketo{\lnot a}\cup\internal\bracketo{b}\subset\internal(\bracketo{\lnot a}\cup\bracketo{b})$, it holds that $O\cap\internal\bracketo{\lnot a\vee b}=O\cap\internal(\bracketo{\lnot a}\cup\bracketo{b})\neq\emptyset$.
        Then $t\in\overline{\internal\bracketo{\lnot a\vee b}}$.
        \qedhere
    \end{description}        
\end{proof}

We can interpret the boundary $\partial\bracketo{\phi}$ of time set $\bracketo{\phi}$ as the time that the indicator function $\chi_\phi(\omega,t)$ in Definition~\ref{defi:indicator_func} changes its value.
The following lemma shows that the boundary $\partial \bracketo{\phi}$ of every MTL--formula $\phi$ has Lebesgue measure zero almost surely if $X_t$ has a density for all $t>0$ and $AP$ is distinct in the sense of the following definition.
\begin{defi}
A Borel set $B$ on $\R$ is \emph{distinct} if its boundary $\partial B$ has Lebesgue measure zero.
    An atomic formula $a\in AP$ is \emph{distinct} when the corresponding set $B_a$ is distinct.
    The set of $AP$ of atomic formulas is said to be \emph{distinct} when all $a\in AP$ are distinct.
\end{defi}
\begin{lem}\label{lemm:closedboundary_clt_interval}
	Consider the case of the distinct set $AP$ of atomic formulas.
	Let $(\Omega,\F,\Prob)$ be a complete probability space.
	Suppose that $X$ is an almost surely continuous stochastic process such that $X_t$ has a density for every $t\in(0,\infty)$.
	Then, for every MTL--formula $\phi$ there exists some measurable set $K\in\F\otimes\B([0,\infty))$ such that 
	\begin{enumerate}[(i)]
		\item[(i)] $\{t;(\omega,t)\in K\}$ is almost surely closed,
		\item[(ii)] $\{t;(\omega,t)\in K\}$ has Lebesgue measure zero almost surely, 
		\item[(iii)] $\Prob(\{\omega;(\omega,t)\in K\})=0$ for every $t\in(0,\infty)$, and
		\item[(iv)] $\partial\bracketo{\phi}\subset\{t;(\omega,t)\in K\}$.
	\end{enumerate}
\end{lem}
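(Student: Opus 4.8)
The plan is to construct the set $K$ (write it $K_\phi$) by induction on the structure of the MTL--formula $\phi$, maintaining all of (i)--(iv) at every step. Since $\Diamond_I\psi=\top\U_I\psi$ and $\Box_I\psi=\lnot\Diamond_I\lnot\psi$, it suffices to handle atomic formulas, $\lnot$, $\wedge$ and $\U_I$. For an atomic formula $a$ with associated Borel set $B_a$, put $K_a:=\{(\omega,t)\in\Omega\times[0,\infty) ; X_t(\omega)\in\partial B_a\}$. Because $(\Omega,\F,\Prob)$ is complete and the paths are almost surely continuous, $X$ is measurable in the sense of Assumption~\ref{cond:measurability} by Remark~\ref{rem:conti_measurable}, and $\partial B_a$ is closed, so $K_a\in\F\otimes\B([0,\infty))$. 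For an $\omega$ with continuous path the section $\{t ; X_t(\omega)\in\partial B_a\}$ is closed, giving (i), and since the preimages of $\internal B_a$ and of $\R\setminus\overline{B_a}$ under a continuous map are open we get $\partial\bracketo{a}\subset\{t ; X_t(\omega)\in\partial B_a\}$, which is (iv). Distinctness of $a$ gives $\mu(\partial B_a)=0$ for the Lebesgue measure $\mu$, so the density $f_t$ of $X_t$ yields $\Prob(X_t\in\partial B_a)=\int_{\partial B_a}f_t\,d\mu=0$ for every $t>0$, which is (iii); Tonelli's theorem then gives $\E\big[\mu(\{t ; X_t(\omega)\in\partial B_a\})\big]=\int_0^\infty\Prob(X_t\in\partial B_a)\,dt=0$, so the sections are $\mu$--null almost surely, which is (ii).

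For negation, $\bracketo{\lnot\phi}=[0,\infty)\setminus\bracketo{\phi}$ has the same boundary as $\bracketo{\phi}$, so $K_{\lnot\phi}:=K_\phi$ works verbatim. For conjunction, $\bracketo{\phi_1\wedge\phi_2}=\bracketo{\phi_1}\cap\bracketo{\phi_2}$ and $\partial(A\cap B)\subset\partial A\cup\partial B$, so $K_{\phi_1\wedge\phi_2}:=K_{\phi_1}\cup K_{\phi_2}$ works; a finite union preserves (i) and (ii) (a finite union of almost surely closed, resp.\ almost surely $\mu$--null, sections is again of that kind), (iii) by subadditivity of $\Prob$, and (iv) by the displayed inclusions together with the inductive hypothesis.

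The until operator is the crux. Take $I=\lrangle{a,b}$; I describe $I=[a,b]$ with $0\le a<b<\infty$, the other interval forms being analogous and the case $b=\infty$ obtained by dropping every term below that mentions $b$. Set
\[
K_{\phi_1\U_I\phi_2}:=K_{\phi_1}\cup\bigcup_{c\in\{a,b\}}\big(\{(\omega,t) ; (\omega,t+c)\in K_{\phi_1}\}\cup\{(\omega,t) ; (\omega,t+c)\in K_{\phi_2}\}\big),
\]
intersected with $\Omega\times[0,\infty)$. Each summand is a forward time--shift (by $a$ or $b\ge 0$) of $K_{\phi_1}$ or $K_{\phi_2}$, so (i)--(iii) transfer exactly as in the Boolean cases, using translation invariance of $\mu$ for (ii) and the fact that $t>0$ forces $t+c>0$ for (iii). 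Property (iv) will follow, via the inductive (iv) for $\phi_1$ and $\phi_2$, once we establish
\[
\partial\bracketo{\phi_1\U_{[a,b]}\phi_2}\subset\partial\bracketo{\phi_1}\cup\bigcup_{c\in\{a,b\}}\big((\partial\bracketo{\phi_1}-c)\cup(\partial\bracketo{\phi_2}-c)\big).
\]
To prove this, fix $\omega$ with continuous path, write $U:=\bracketo{\phi_1}$, $V:=\bracketo{\phi_2}$, $W:=\bracketo{\phi_1\U_{[a,b]}\phi_2}$, and introduce the reaching time $\rho(s):=\inf\{r>s ; X(\omega),r\not\models\phi_1\}$. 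By Lemma~\ref{lem:tau_rcontinous} $\rho$ is right--continuous in $s$; it is moreover nondecreasing, it is constant on each connected component of $\internal U$, and at a point $s\in\internal U$ its value $\rho(s)$ (the left end of a ``gap'' of $U$) lies in $\partial U$. Reading off the semantics, $t\in W$ if and only if $t\in U$ and $V\cap[t+a,\min(t+b,\rho(t))]\neq\emptyset$ (this window being empty when $t+a>\rho(t)$). Now let $t_0$ lie outside the right--hand side of the claimed inclusion; I show $t_0\notin\partial W$. If $t_0\notin U$, then since $t_0\notin\partial U$ a neighbourhood of $t_0$ misses $U$, hence misses $W$, so $t_0\in\internal\bracketo{\lnot(\phi_1\U_I\phi_2)}$. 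If $t_0\in\internal U$, then near $t_0$ the map $\rho$ equals the constant $r:=\rho(t_0)\in\partial U$ and, since $t_0\notin(\partial U-a)\cup(\partial U-b)$, also $t_0+a\neq r\neq t_0+b$; hence near $t_0$ the condition $t\in W$ reduces to one of: the trivially false condition (when $r<t_0+a$); $V\cap[t+a,t+b]\neq\emptyset$ (when $t_0+b<r$); or $V\cap[t+a,r]\neq\emptyset$ with $r$ held fixed (when $t_0+a<r<t_0+b$). In each case this is a diamond--type membership of $V$ in a window whose endpoints move continuously among $\{t+a,t+b,r\}$, and an elementary inspection — comparing with the open condition obtained by opening the window — shows its topological boundary is contained in $(\partial V-a)\cup(\partial V-b)\cup\{r-a,r-b\}$; since $r\in\partial U$, the two points $r-a,r-b$ lie in $(\partial U-a)\cup(\partial U-b)$, which we have excluded. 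Thus $t_0\in\internal W\cup\internal\bracketo{\lnot(\phi_1\U_I\phi_2)}$, and the inclusion is proved, completing the induction.

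The step I expect to be hardest is exactly this last one for the until operator: enumerating how the truth value of $\phi_1\U_I\phi_2$ can toggle as $t$ varies, where the only non--routine input is the local constancy of the reaching time $\rho$ on $\internal\bracketo{\phi_1}$ together with the fact that its values there are boundary points of $\bracketo{\phi_1}$. The atomic, Boolean, $\Diamond$ and $\Box$ cases — which is all the subsequent convergence results actually use — are straightforward by comparison.
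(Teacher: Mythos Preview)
Your approach and the paper's coincide on the overall structure: both induct on the formula, take $K_a=\{(\omega,t);X_t(\omega)\in\partial B_a\}$ for atomic $a$ (the paper also throws in the null set $N\times[0,\infty)$ of discontinuous paths, which is cosmetic), reuse $K_\phi$ for $\lnot\phi$, take $K_{\phi_1}\cup K_{\phi_2}$ for $\phi_1\wedge\phi_2$, and derive (ii) from (iii) via Fubini/Tonelli. For the temporal step the paper actually treats only the diamond operator $\Diamond_{\langle S,T\rangle}\phi$, relying on its Lemma~\ref{lemm:boundformula_clt} for the inclusion $\partial\bracketo{\Diamond_{\langle S,T\rangle}\phi}\subset(\partial\bracketo{\phi}\ominus S)\cup(\partial\bracketo{\phi}\ominus T)$ and setting $K_{\Diamond_{\langle S,T\rangle}\phi}$ to be the union of the two shifts of $K_\phi$; it never carries out the full until case, even though the statement says ``every MTL--formula'' (this suffices for the paper's later use, which is confined to $\flat$MTL).

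Your handling of $\phi_1\U_{[a,b]}\phi_2$ via the reaching time $\rho$ and the three--way split on the position of $r=\rho(t_0)$ relative to $[t_0+a,t_0+b]$ is therefore a genuine addition, and your boundary inclusion specialises to the paper's diamond lemma when $\phi_1=\top$ (then $U=[0,\infty)$, $\rho\equiv\infty$, and only the branch $r>t_0+b$ survives). The argument is correct: local constancy of $\rho$ on $\internal U$ with value in $\partial U$ (or $\infty$) is exactly what makes the window $[t+a,\min(t+b,\rho(t))]$ rigid near $t_0$, and each resulting branch is then a one-- or two--sided diamond condition on $V$ whose boundary analysis is the same computation as in Lemma~\ref{lemm:boundformula_clt}. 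It would strengthen the write--up to spell out that ``elementary inspection'' explicitly, since that is where all the content of the until step lies, and to note the $\rho(t_0)=\infty$ edge case (which falls harmlessly into the $r>t_0+b$ branch).
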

For this proposal, in the following lemma, we show that the boundary of the time set of the form $\Diamond_{\lrangle{S,T}}\phi$ is restricted to the shift of the boundary of the form $\phi$.
\begin{lem}\label{lemm:boundformula_clt}
    Let $\phi$ be an MTL--formula and $\langle S,T\rangle$ be a positive interval on $[0,\infty)$.
    Then it holds almost surely that 
    \begin{align}
	\partial \llbracket\Diamond_{\langle S,T\rangle}\phi\rrbracket_\omega\subset [(\partial\llbracket\phi\rrbracket_\omega\ominus S)\cup(\partial\llbracket\phi\rrbracket_\omega\ominus T)],
    \end{align}
    where $\partial\llbracket\phi\rrbracket_\omega\ominus S:=\{t-S;t\in \partial\llbracket\phi\rrbracket_\omega\}\cap[0,\infty)$ and $\partial\llbracket\phi\rrbracket_\omega\ominus T:=\{t-T;t\in \partial\llbracket\phi\rrbracket_\omega\}\cap[0,\infty)$.
\end{lem}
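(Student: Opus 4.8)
The plan is to argue topologically for a fixed $\omega$. Write $A := \bracketo{\phi}$ and $G := \bracketo{\Diamond_{\langle S,T\rangle}\phi}$; by the semantics recalled in Remark~\ref{rem:diamond_box}, $G = \{t \ge 0 : (t + \langle S,T\rangle)\cap A \neq \emptyset\}$, where $t + \langle S,T\rangle$ is the translate of the interval $\langle S,T\rangle$ by $t$, i.e.\ the interval with endpoints $t+S$ and $t+T$ and the same boundary type as $\langle S,T\rangle$. I will establish the contrapositive statement: if $t+S \notin \partial A$ and $t+T \notin \partial A$, then $t \notin \partial G$. This yields $\partial G \subseteq (\partial A \ominus S)\cup(\partial A \ominus T)$, in fact for every $\omega$, hence certainly almost surely. (When $T = \infty$ the $T$-term is vacuous and the argument below only simplifies, so I assume $T < \infty$; the case $t = 0$, where boundaries are taken in $[0,\infty)$, needs only cosmetic care.)

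The first step is a monotonicity observation: if $(t + \langle S,T\rangle) \cap \internal A \neq \emptyset$ then $t \in \internal G$. To see this, fix a witness $s \in \langle S,T\rangle$ with $t+s \in \internal A$ (if $s$ is an excluded endpoint of $\langle S,T\rangle$, replace it by a nearby point of $\langle S,T\rangle$, which still lands in the open set $\internal A$); then for $t'$ near $t$ we still have $s \in \langle S,T\rangle$ and $t'+s \in \internal A \subseteq A$, so $t' \in G$.

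Now assume $t+S \notin \partial A$ and $t+T \notin \partial A$, so each of $t+S$, $t+T$ lies in $\internal A$ or in $\R\setminus\overline A$. If one of them, say $t+S$, lies in $\internal A$, then $(t+\langle S,T\rangle)$ meets $\internal A$ (at $t+S$ itself if $S\in\langle S,T\rangle$, at a nearby interior point of $\langle S,T\rangle$ otherwise), so $t\in\internal G$ by the monotonicity step and we are done. The remaining, and main, case is $t+S, t+T \in \R\setminus\overline A$. Here I would pick $\delta>0$ with $(t+S-\delta,t+S+\delta)$ and $(t+T-\delta,t+T+\delta)$ both disjoint from $A$, shrinking $\delta$ if necessary. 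For $|t'-t|<\delta$ the two endpoints $t'+S$, $t'+T$ lie in these $A$-free neighbourhoods, whence $(t'+\langle S,T\rangle)\cap A = (t'+S,t'+T)\cap A = [t+S+\delta,\,t+T-\delta]\cap A$, a set that does \emph{not} depend on $t'$. Therefore $t'\in G$ iff $[t+S+\delta,t+T-\delta]\cap A\neq\emptyset$ iff $t\in G$, i.e.\ $G$ is constant on $(t-\delta,t+\delta)$ and $t\notin\partial G$. (If $T-S$ is too small for $[t+S+\delta,t+T-\delta]$ to be well defined after shrinking, the two $A$-free neighbourhoods already cover $t'+\langle S,T\rangle$, so $(t'+\langle S,T\rangle)\cap A=\emptyset$ near $t$ and the conclusion still holds.)

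I expect the last case to be the crux: the naive approach only produces a point $t+s \in \partial A$ with $s$ somewhere in the open part $(S,T)$, which is useless for the claim; what actually does the work is that ruling out $\partial A$ at \emph{both} endpoints $t+S$ and $t+T$ pins the moving interval $t'+\langle S,T\rangle$ to a fixed ``core'' $[t+S+\delta, t+T-\delta]$, forcing $G$ to be locally constant near $t$. Turning this into a clean argument, including the degenerate short-interval sub-case and the $t$ near $0$ boundary bookkeeping, is the only real work.
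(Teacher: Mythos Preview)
Your proof is correct and takes a genuinely different route from the paper's. The paper argues directly: starting from $t\in\partial G$ it first shows $(t+S,t+T)\cap A=\emptyset$, then extracts a sequence $t_n\to t$ lying in $G$ and, according to whether the sequence approaches from below or from above, pins down a point of $A$ in $[t_n+S,t+S]$ (resp.\ $[t+T,t_n+T]$), forcing $t+S\in\partial A$ (resp.\ $t+T\in\partial A$). Your argument is the contrapositive: assuming both $t+S$ and $t+T$ avoid $\partial A$, you case-split on interior versus exterior and, in the main case, exhibit a $\delta$-neighbourhood of $t$ on which $(t'+\langle S,T\rangle)\cap A$ equals the fixed set $[t+S+\delta,t+T-\delta]\cap A$, so $G$ is locally constant. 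This is arguably cleaner, since it sidesteps the left/right sequence bookkeeping and handles all four interval types at once; the paper's version, on the other hand, is slightly more constructive in that it tells you \emph{which} endpoint lands in $\partial A$ given the direction of approach. Two minor remarks: the parenthetical about ``if $s$ is an excluded endpoint'' in your monotonicity step is unnecessary, since by hypothesis $s\in\langle S,T\rangle$ already; and the ``$T-S$ too small'' sub-case cannot occur once you shrink $\delta$ below $(T-S)/2$, so it can be dropped.
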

\begin{proof}
    Let $\langle S,T\rangle$ be the closed interval $[S,T]$.
    Suppose that $t\in\partial \llbracket\Diamond_{\langle S,T\rangle}\phi\rrbracket_\omega$.
    Then it is clear that $(t+S,t+T)\cap\llbracket\phi\rrbracket_\omega=\emptyset$.
    If not, there exists some neighborhood of $t$ whose every element $s$ satisfies $(s+S,s+T)\cap\llbracket\phi\rrbracket_\omega\neq\emptyset$ and hence $t\notin\partial \llbracket\Diamond_{\langle S,T\rangle}\phi\rrbracket_\omega$.
    Again from $t\in\partial \llbracket\Diamond_{\langle S,T\rangle}\phi\rrbracket_\omega$, one of the following two statement holds:
    \begin{enumerate}[(i)]
        \item[(i)] There exist some sequence $t_n,\ n=1,2,3,\cdots$ in $\llbracket\Diamond_{[S,T]}\phi\rrbracket_\omega$ such that $\sup_{n}t_n=t$.
        \item[(ii)] There exist some sequence $t_n,\ n=1,2,3,\cdots$ in $\llbracket\Diamond_{[S,T]}\phi\rrbracket_\omega$ such that $\inf_{n}t_n=t$.
    \end{enumerate}
    It is enough to show $S+t\in\partial\bracketo{\phi}$ or $T+t\in\partial\bracketo{\phi}$ for (i) and (ii).
    \begin{enumerate}[(i)]
        \item Since $(S+t,T+t)\subset\llbracket\lnot\phi\rrbracket_\omega$, $(S+t,S+t+\varepsilon)\cap\llbracket\lnot\phi\rrbracket_\omega\neq\emptyset$ for every positive $\varepsilon$.
        Together with $\llbracket \phi\rrbracket_\omega\cap [S+t_n,T+t_n]\neq\emptyset$, $(S+t,T+t)\subset\llbracket\lnot\phi\rrbracket_\omega$ also implies $[S+t_n,S+t]\cap\llbracket\phi\rrbracket_\omega\neq\emptyset$ for every $n\in\N$.
        Then $(S+t-\varepsilon,S+t]\cap\llbracket\phi\rrbracket_\omega\neq\emptyset$ for every positive $\varepsilon$.
        \item We can show $(T+t-\varepsilon,T+t)\cap\llbracket\lnot\phi\rrbracket_\omega\neq\emptyset$ and $[T+t,T+t+\varepsilon)\cap\llbracket\phi\rrbracket_\omega\neq\emptyset$ by showing $[T+t,T+t_n]\cap\llbracket\phi\rrbracket_\omega\neq\emptyset$.
        Indeed, $(S+t,T+t)\cap\llbracket\phi\rrbracket_\omega=\emptyset$ and $[S+t_n,T+t_n]\cap\llbracket\phi\rrbracket_\omega\neq\emptyset$ implies $[T+t,T+t_n]\cap\llbracket\phi\rrbracket_\omega\neq\emptyset$.
    \end{enumerate}
    Thus, we show the statement when $\langle S,T\rangle$ is closed.
    We can prove the case of $\langle a,b\rangle=(S,T),[S,T),[S,T)$ in exactly the same way.        
\end{proof}

\begin{proof}[Proof of Lemma~\ref{lemm:closedboundary_clt_interval}]
    Since the map $t\mapsto X_t(\omega)$ is almost surely continuous, there exists some $N\in\F$ such that
    $\Prob(N)=0$ and $t\mapsto X_t(\omega)$ is continuous whenever $\omega\notin N$.
    Suppose $a$ is an atomic formula and $t\in\partial\llbracket a\rrbracket_\omega$. For any positive $\varepsilon$, we can find $s$ and $s'$ in the interval $(t-\varepsilon,t+\varepsilon)$ such that $X_s(\omega)\in B_a$ and $X_{s'}(\omega)\notin B_a$. This is because $t$ is a boundary point of the satisfaction set $\llbracket a\rrbracket_\omega$.
    Since the mapping $t\mapsto X_t(\omega)$ is continuous for $\omega\notin N$, it follows that $X_t(\omega)$ lies on the boundary $\partial B_a$.
    In other words, $X_t(\omega)$ is located on the boundary of the set defined by the atomic formula $a$.
    Put $K:=\{(\omega,t); X_t(\omega) \in\partial B_a \}\cup N\times[0,\infty)$ and $K_\omega:=\{t ; (\omega,t)\in K\}$.
    Hence, we get $\partial \llbracket a\rrbracket_\omega\subset K_\omega$.
    Since $t\mapsto X_t(\omega)$ is continuous almost surely and $X_t$ has a density for every $t>0$, $K$ is measurable, $K_\omega$ is almost surely closed, 
    \begin{align*}
        \Prob(\{\omega;(\omega,t)\in K\})\leq\Prob(\omega ;X_t(\omega)\in\partial B_a)+\Prob(N)=0\hspace{10pt}\forall t\in (0,\infty).
    \end{align*}
    Then it holds that
    \begin{align}
        \int_{[0,\infty)}\left\{\int_\Omega \1_K(\omega,t) \Prob(d\omega)\right\} dt=0.\label{eq:closezero}
    \end{align}
    By using Fubini's Theorem (see Theorem 8.8 in \cite{ouc.200337561119660101}), we have
    \begin{align*}
        \int_\Omega\left\{\int_{[0,\infty)} \1_K(\omega,t) dt\right\} \Prob(d\omega)=0,
    \end{align*}
    which implies that $K_\omega$ has Lebesgue measure zero almost surely (see (b) of Theorem 1.39 in \cite{ouc.200337561119660101}).
    When $K$ corresponds to a formula $\phi$ with (i)-(iv), then $K$ also satisfies (i)-(iv) for$\lnot\phi$, since $\partial\llbracket \lnot \phi\rrbracket_\omega =\partial\llbracket \phi\rrbracket_\omega$.
    When $K_1$ and $K_2$ satisfy (i)-(iv) for $\phi_1$ and $\phi_2$ respectively, $K_1\cup K_2$ satisfies (i)-(iv) for $\phi_1\wedge\phi_2$, since $\{t;(\omega,t)\in K_1\cup K_2\}$ is closed, $\Prob(\omega; (\omega,t)\in K_1\cup K_2)=0$ for $t\in(0,\infty)$, and $\partial \llbracket \phi_1\wedge \phi_2\rrbracket_\omega\subset \{t;(\omega,t)\in K_1\cup K_2\}$.
    Suppose that $K$ satisfies (i)-(iv) for $\phi$.
    We show that $\{(\omega,t); t\in[(K_\omega\ominus S)\cup(K_\omega\ominus T)]\}$ satisfies (i)--(iv) for $\Diamond_{\langle S,T\rangle}\phi$.
    \begin{enumerate}[(i)]
        \item Since $K_\omega$ is closed almost surely, $(K_\omega\ominus S)$ and $(K_\omega\ominus T)$ are almost surely closed and then $(K_\omega\ominus S)\cup(K_\omega\ominus T)$ is closed almost surely.
        \item From \eqref{eq:closezero}, it holds that 
    \begin{align}
        \int_{[0,\infty)}\left\{\int_\Omega \1_{\{t\in K_\omega\ominus S\}}(\omega,t) \Prob(d\omega)\right\} dt=\int_{[S,\infty)}\left\{\int_\Omega \1_{\{t\in K_\omega\}}(\omega,t) \Prob(d\omega) \right\} dt=0,\nonumber\\
        \int_{[0,\infty)}\left\{\int_\Omega \1_{\{t\in K_\omega\ominus T\}}(\omega,t)\ \Prob(d\omega)\right\} dt=\int_{[T,\infty)}\left\{\int_\Omega \1_{\{t\in K_\omega\}}(\omega,t) \Prob(d\omega) \right\} dt=0.\nonumber
    \end{align}
    By Fubini's theorem, we have
    \begin{align*}
        \int_\Omega\left\{ \int_{[0,\infty)}\1_{\{t\in K_\omega\ominus S\}}(\omega,t) dt\right\} \Prob(d\omega)&=0,\\
        \int_\Omega\left\{ \int_{[0,\infty)}\1_{\{t\in K_\omega\ominus T\}}(\omega,t) dt\right\} \Prob(d\omega)&=0,
    \end{align*}
    which implies that $\{(\omega,t); t\in(K_\omega\ominus S)\cup(K_\omega\ominus T) \}$ has Lebesgue measure zero almost surely.
    \item When $t>0$, we have
    \begin{align*}
        &\Prob(\omega ; t\in(K_\omega\ominus S)\cup(K_\omega\ominus T))\\
        \leq&\Prob(\omega ; t\in(K_\omega\ominus S))+\Prob(\omega ;t\in (K_\omega\ominus T))\\
        \leq&\Prob(\omega ; t+S\in K_\omega)+\Prob(\omega ;t+T\in K_\omega)=0.
    \end{align*}
    \item From Lemma~\ref{lemm:boundformula_clt}, $\partial \llbracket\Diamond_{\langle S,T\rangle}\phi\rrbracket_\omega\subset [(\partial\llbracket\phi\rrbracket_\omega\ominus S)\cup(\partial\llbracket\phi\rrbracket_\omega\ominus T)]\subset [(K_\omega\ominus S)\cup(K_\omega\ominus T)]$ almost surely. \qedhere
    \end{enumerate}        
\end{proof}

In the following lemma, we give a sufficient condition for convergence of the indicator function of the formula with diamond or box operator.
\begin{lem}\label{lemm:indicator_convergence}
Let $X$ be the solution of SDE \eqref{eq:SDE_with_drift} satisfying Assumption~\ref{cond:SDE_with_drift}.
Define an MTL--formula $p$ as 
\begin{align*}
    X(\omega),t\models p\Leftrightarrow X_t(\omega)\in B_p
\end{align*}
 for some distinct set $B_p$ on $\R$.
 Let $\langle S,T\rangle$ be a positive interval on $[0,\infty)$.
 If $p$ satisfies \eqref{eq:phi_closure_of_int} and \eqref{eq:notphi_closure_of_int}, the following statements hold:
\begin{enumerate}[(i)]
    \item Define $\phi:=\Diamond_{\langle S,T\rangle}p$.
    Then $\chi^{(n)}_{\phi}(\omega,t)\rightarrow\chi_{\phi}(\omega,t)$ for every $t\in[0,\infty)$.
    \item Define $\psi:=\Box_{\lrangle{S,T}}p$.
    Then $\chi^{(n)}_{\psi}(\omega,t)\rightarrow\chi_{\psi}(\omega,t)$ for every $t\in[0,\infty)$.
\end{enumerate}
Here, $\chi_{\phi}^{(n)}(\omega,t)$, $\chi_{\psi}^{(n)}(\omega,t)$, $\chi_{\phi}(\omega,t)$, and $\chi_{\psi}(\omega,t)$ are the indicator functions defined in Definition~\ref{defi:indicator_func}.
\end{lem}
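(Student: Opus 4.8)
\emph{Reducing (ii) to (i).} In both the continuous and the discrete semantics $\Box_{\lrangle{S,T}}p=\lnot\Diamond_{\lrangle{S,T}}\lnot p$, so $\chi^{(n)}_\psi=1-\chi^{(n)}_{\Diamond_{\lrangle{S,T}}\lnot p}$ and $\chi_\psi=1-\chi_{\Diamond_{\lrangle{S,T}}\lnot p}$; it therefore suffices to prove (i) with $p$ replaced by $\lnot p$. This is legitimate: $\lnot p$ corresponds to $B_p^{c}$, which is distinct because $\partial B_p^{c}=\partial B_p$, and $\lnot p$ satisfies \eqref{eq:phi_closure_of_int} and \eqref{eq:notphi_closure_of_int}, since for $\lnot p$ these two inclusions are exactly \eqref{eq:notphi_closure_of_int} and \eqref{eq:phi_closure_of_int} for $p$. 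So I only need (i).

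\emph{Setup for (i).} Fix $t$ and assume $t+S>0$ (the case $t=S=0$ is similar and easier: the discrete query grid cannot protrude below $0$). Since $p$ is propositional, $X(\omega),u\models p$ and $X(\omega),u\models_n p$ both mean $X_u(\omega)\in B_p$, so $\chi_\phi(\omega,t)=1\iff\bracketo{p}\cap(t+\lrangle{S,T})\neq\emptyset$ and $\chi^{(n)}_\phi(\omega,t)=1\iff\bracketo{p}\cap Q_n\neq\emptyset$, where $Q_n:=(\Lambda_n(t)+\lrangle{S,T})\cap(\N/n)$. As both indicators take values in $\{0,1\}$, it is enough to show that for $\Prob$-a.e.\ $\omega$ the value of $\chi^{(n)}_\phi(\omega,t)$ is eventually equal to $\chi_\phi(\omega,t)$. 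I run the argument on the full-measure event on which $u\mapsto X_u(\omega)$ is continuous, $p$ satisfies \eqref{eq:phi_closure_of_int}--\eqref{eq:notphi_closure_of_int}, and $X_{t+S}(\omega),X_{t+T}(\omega)\notin\partial B_p$; the last has probability one because $X_{t+S},X_{t+T}$ have densities and $B_p$ is distinct, which lets me use $\overline{B_p}\setminus\partial B_p=\internal B_p$ at the two times $t+S$ and $t+T$.

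\emph{The two inclusions.} Suppose $\chi_\phi(\omega,t)=1$ and pick $r\in t+\lrangle{S,T}$ with $X_r(\omega)\in B_p$. If $r$ is an included endpoint of $t+\lrangle{S,T}$, then $X_r(\omega)\in\internal B_p$ by the chosen event, so by continuity $X_{r'}(\omega)\in B_p$ for some $r'$ in the open interval $(t+S,t+T)$; hence we may take $r\in(t+S,t+T)$. Then \eqref{eq:phi_closure_of_int} together with openness of $(t+S,t+T)$ yields an open subinterval $J\subset\bracketo{p}\cap(t+S,t+T)$; for $n$ large the grid $\N/n$ is $\tfrac1n$-dense and $\Lambda_n(t)\in(t-\tfrac1n,t]$, so there is a grid point $q\in J$ with $q<t+T-\tfrac1n$, and $s:=q-\Lambda_n(t)$ then lies in $(S,T)\cap(\N/n)\subset\lrangle{S,T}\cap(\N/n)$ with $\Lambda_n(t)+s=q\in\bracketo{p}$, so $\chi^{(n)}_\phi(\omega,t)=1$. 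Conversely, if $\chi_\phi(\omega,t)=0$ then $\bracketo{p}$ is disjoint from $t+\lrangle{S,T}$, in particular from $(t+S,t+T)$. For $n$ large, every $q\in Q_n$ with $q>t+S$ in fact lies in $t+\lrangle{S,T}$ (either $q\in(t+S,t+T)$, or $q=t+T$ is an included endpoint), hence $q\notin\bracketo{p}$; and every $q\in Q_n$ with $q\le t+S$ lies in $(t+S-\tfrac1n,t+S]$. If some subsequence $q_{n_k}\in Q_{n_k}\cap\bracketo{p}$ had $q_{n_k}\le t+S$, then $q_{n_k}\to t+S$ and $X_{q_{n_k}}(\omega)\in B_p$ would force $X_{t+S}(\omega)\in\overline{B_p}\setminus\partial B_p=\internal B_p$, and hence $X_{r'}(\omega)\in B_p$ for some $r'\in(t+S,t+T)$, contradicting $\chi_\phi(\omega,t)=0$. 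So $Q_n\cap\bracketo{p}=\emptyset$ for $n$ large, i.e.\ $\chi^{(n)}_\phi(\omega,t)=0$. This proves (i), hence (ii).

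\emph{Main obstacle.} The delicate point --- and why the naive argument ``the discrete queries lie inside $t+\lrangle{S,T}$'' is false --- is that $\Lambda_n(t)=\lfloor nt\rfloor/n\le t$, so $Q_n$ can stick out below the left endpoint $t+S$ by up to $1/n$. Ruling out that this overhang produces a spurious ``true'' in the limit is exactly what needs the dichotomy on whether $t+S\in\overline{\bracketo{p}}$ together with the null-set exclusion $X_{t+S}(\omega)\notin\partial B_p$; the remaining case distinctions (open versus closed endpoints, and $T=\infty$, where there is no right endpoint to control) are routine.
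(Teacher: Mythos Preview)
Your proof is correct and follows essentially the same strategy as the paper: for $\chi_\phi=1$ you use \eqref{eq:phi_closure_of_int} to produce an open subinterval of $\bracketo{p}\cap(t+S,t+T)$ containing grid points for large $n$; for $\chi_\phi=0$ you exploit that $X_{t+S}\notin\partial B_p$ a.s.\ (by absolute continuity of $X_{t+S}$ and distinctness of $B_p$) to rule out spurious hits from the left overhang of the shifted grid. The paper phrases the second direction slightly differently, invoking Lemma~\ref{lemm:closedboundary_clt_interval} to conclude $t+S,t+T\notin\partial\bracketo{p}$ a.s.\ and then obtaining an $\varepsilon$-enlargement $(t+S-\varepsilon,t+T+\varepsilon)\subset\bracketo{\lnot p}$ which absorbs the whole shifted grid; for propositional $p$ this is the same mechanism, since $\partial\bracketo{p}\subset\{u:X_u(\omega)\in\partial B_p\}$ by path continuity. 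Your reduction of (ii) to (i) via $\Box=\lnot\Diamond\lnot$ is a clean alternative to the paper's parallel treatment of the two cases.
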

\begin{proof}
    First, let us show that $\lrangle{t+S,t+T}\cap\bracketo{p}\neq\emptyset$ implies $(t+S,t+T)\cap\internal\bracketo{p}\neq\emptyset$ almost surely.
If $\partial\bracketo{p} \cap \langle t+S,t+T\rangle = \emptyset$ and $\lrangle{t+S,t+T}\cap\bracketo{p}\neq\emptyset$, then it follows that $X(\omega),s\models p$ for all $s\in\lrangle{t+S,t+T}$, and therefore, $\lrangle{t+S,t+T}\subset\bracketo{p}$. Consequently, $(t+S,t+T)\cap\internal\bracketo{p}\neq\emptyset$.
Next, suppose $\partial\bracketo{p} \cap \lrangle{t+S,t+T} \neq \emptyset$ and $\lrangle{t+S,t+T}\cap\bracketo{p}\neq\emptyset$.
Since $\{X_t\}_{t\geq0}$ satisfies Assumption~\ref{cond:SDE_with_drift}, $X_t$ has a density for $t>0$ by~\ref{prop:density_SDE}.
When $t+S>0$, we have $t+T>t+S>0$, and Lemma~\ref{lemm:closedboundary_clt_interval} implies that $t+S$ and $t+T$ do not belong to $\partial\bracketo{p}$ almost surely.
Thus, we have $\partial\bracketo{p}\cap(t+S,t+T)\neq\emptyset$, which implies $\bracketo{p}\cap(t+S,t+T)\neq\emptyset$.
Therefore, we conclude from \eqref{eq:phi_closure_of_int} that $(t+S,t+T)\cap\internal\bracketo{p}\neq\emptyset$.
If $t+S=0$, $\partial \bracketo{p}$ intersects the open set of the form $[0,t+T)$ or $(0,t+T)$ on $[0,\infty)$.
Then, from \eqref{eq:phi_closure_of_int}, we can conclude
$\internal\bracketo{p}\cap(t+S,t+T)\neq\emptyset$.

We can show in similar way that $\lrangle{t+S,t+T}\cap\bracketo{\lnot p}\neq\emptyset$ implies $(t+S,t+T)\cap\internal\bracketo{\lnot p}\neq\emptyset$ almost surely.

Now let us prove (i) and (ii).
Suppose $X(\omega),t\models\phi$.
Since $(t+S,t+T)\cap\internal\bracketo{p}$ is a nonempty open set, there exists $s\in(\Lambda_n(t)+S,\Lambda_n(t)+T)\cap\N/n$ such that $X(\omega),s\models_np$ for sufficiently large $n$.
Hence, $X,\Lambda_n(t)\models_n\phi$.
By applying the same argument, we can show from \eqref{eq:notphi_closure_of_int} that if $X(\omega),t\not\models\psi$, then $X(\omega),\Lambda_n(t)\not\models_n\psi$ for sufficiently large $n$.

On the other hand, suppose $X(\omega),t\not\models\phi$.
Then $\bracketo{p}\cap(S,T)=\emptyset$ and $\partial\bracketo{p}\cap(S,T)=\emptyset$.
If $t+S>0$, according to~\ref{prop:density_SDE} and Lemma~\ref{lemm:closedboundary_clt_interval}, $t+S$ and $t+T$ do not belong to $\partial\bracketo{p}$ almost surely.
Thus, there exists $\varepsilon>0$ such that $(t+S-\varepsilon,t+T-\varepsilon)\subset\bracketo{\lnot p}$, and hence $(\Lambda_n(t)+S,\Lambda_n(t)+T)\cap\bracketo{p}=\emptyset$ for sufficiently large $n$.
If $t+S=0$, since $\Lambda_n(t)=t=0$, it holds that
\begin{align*}
&X(\omega),\Lambda_n(t)\not\models_n \phi\Leftrightarrow X(\omega),0\not\models_n \Diamond_{\lrangle{0,T}}p,\\
&X(\omega),t\not\models \phi\Leftrightarrow X(\omega),0\not\models \Diamond_{\lrangle{0,T}}p.
\end{align*}
Then it is clear that $X(\omega),t\not\models \phi$ implies $X(\omega),\Lambda_n(t)\not\models_n \phi$.
Now we have shown that $X(\omega),t\not\models_n\phi$ for sufficiently large $n$.
The same argument can be applied to prove that if $X(\omega),t\models\psi$, then $X(\omega),\Lambda_n(t)\models_n\psi$ for sufficiently large $n$.
\end{proof}

\begin{lem}\label{lemm:propositional_denseness}
    Suppose that a propositional formula $p$ satisfies the conditions introduced in the statement of Theorem~\ref{theo:pairwise_disjoint_diamond}.
    Specifically, let $\lrangle{x_i,y_i},\ i=1,\cdots,k$ be pairwise disjoint positive intervals, and define $B_p:=\bigcup_{i=1}^k\lrangle{x_i,y_i}$.
    Define a propositional formula $p,p_1,\cdots,p_k$ by 
    \begin{align*}
        &X(\omega),t\models p\Leftrightarrow X_t(\omega)\in B_p,\\
        &X(\omega),t\models_n p\Leftrightarrow X_t(\omega)\in B_p,\\
        &X(\omega),t\models p_i\Leftrightarrow X_t(\omega)\in \lrangle{x_i,y_i} \quad \text{for }i=1\cdots,k,\\
        &X(\omega),t\models_n p_i\Leftrightarrow X_t(\omega)\in \lrangle{x_i,y_i} \quad \text{for }i=1\cdots,k.
    \end{align*}
    
    Then $p$ satisfies \eqref{eq:phi_closure_of_int} and \eqref{eq:notphi_closure_of_int}.
    Namely,
    \begin{align*}
        \bracketo{p}&\subset\overline{\internal\bracketo{p}},\asure,\\
        \bracketo{\lnot p}&\subset\overline{\internal\bracketo{\lnot p}},\asure
    \end{align*}
\end{lem}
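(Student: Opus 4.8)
The plan is to reduce everything to the single-interval case already settled in Lemma~\ref{lemm:interval_propositional_inter}. First I would dispose of the degenerate cases: if $B_p=\emptyset$ then $\bracketo{p}=\emptyset$ and $\bracketo{\lnot p}=[0,\infty)=\internal\bracketo{\lnot p}$, and if $B_p=\R$ the roles are reversed, so both inclusions are immediate. In the remaining case write $\bracketo{p}=\bigcup_{i=1}^k\bracketo{p_i}$ and $\bracketo{\lnot p}=\bigcap_{i=1}^k\bracketo{\lnot p_i}$. By Lemma~\ref{lemm:interval_propositional_inter} each $p_i$ satisfies $\bracketo{p_i}\subset\overline{\internal\bracketo{p_i}}$ and $\bracketo{\lnot p_i}\subset\overline{\internal\bracketo{\lnot p_i}}$ almost surely, so by Remark~\ref{rema:almost_sure_concerves} and almost-sure path continuity I may fix a single full-measure event $\hat\Omega$ on which all of these inclusions, together with continuity of $t\mapsto X_t(\omega)$, hold simultaneously. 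All of the following is then a deterministic argument on a fixed $\omega\in\hat\Omega$.

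For \eqref{eq:phi_closure_of_int} the argument is short: monotonicity of the interior gives $\internal\bracketo{p_i}\subset\internal\bracketo{p}$ for each $i$, hence $\bracketo{p}=\bigcup_i\bracketo{p_i}\subset\bigcup_i\overline{\internal\bracketo{p_i}}\subset\overline{\internal\bracketo{p}}$, using that a finite union of closures is contained in the closure of the union.

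The substantive part is \eqref{eq:notphi_closure_of_int}, and this is where pairwise separatedness is essential. Fix $t\in\bracketo{\lnot p}$, so $X_t(\omega)\notin B_p$. If $X_t(\omega)\notin\overline{B_p}=\bigcup_i\overline{\lrangle{x_i,y_i}}$, then by continuity an entire neighbourhood of $t$ avoids $\overline{B_p}$, so $t\in\internal\bracketo{\lnot p}$ and we are done. Otherwise $X_t(\omega)\in\overline{\lrangle{x_i,y_i}}\setminus\lrangle{x_i,y_i}$ for exactly one index $i$, uniqueness holding because the closures $\overline{\lrangle{x_j,y_j}}$ are pairwise disjoint. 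Since the finitely many closed sets $\overline{\lrangle{x_j,y_j}}$, $j\neq i$, all miss the point $X_t(\omega)$, continuity yields a neighbourhood $U$ of $t$ with $X_s(\omega)\notin\overline{\lrangle{x_j,y_j}}$ for every $s\in U$ and every $j\neq i$. Now $t\in\bracketo{\lnot p_i}\subset\overline{\internal\bracketo{\lnot p_i}}$, so for each $\varepsilon>0$ with $(t-\varepsilon,t+\varepsilon)\subset U$ there is a point $t'\in(t-\varepsilon,t+\varepsilon)\cap\internal\bracketo{\lnot p_i}$, hence a smaller neighbourhood $V\subset U$ of $t'$ on which $X_s(\omega)\notin\lrangle{x_i,y_i}$; combined with the defining property of $U$ this gives $V\subset\bracketo{\lnot p}$, so $t'\in\internal\bracketo{\lnot p}$. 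Letting $\varepsilon\downarrow 0$ shows $t\in\overline{\internal\bracketo{\lnot p}}$, which finishes the proof.

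I expect the only delicate point to be the bookkeeping in the last paragraph: one must use the hypothesis $\overline{B_i}\cap\overline{B_j}=\emptyset$ twice — once to single out a unique offending interval, and once to extract the uniform neighbourhood $U$ on which all the \emph{other} intervals are strictly avoided, so that a witness of $\internal\bracketo{\lnot p_i}$ near $t$ is automatically a witness of $\internal\bracketo{\lnot p}$. Everything else is soft point-set topology together with the already-established single-interval lemma.
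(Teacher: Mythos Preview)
Your proof is correct and follows essentially the same approach as the paper: dispose of the degenerate cases, invoke Lemma~\ref{lemm:interval_propositional_inter} for each $p_i$, handle \eqref{eq:phi_closure_of_int} by monotonicity of the interior under unions, and for \eqref{eq:notphi_closure_of_int} use pairwise separatedness together with path continuity to isolate the unique offending interval and reduce to the single-interval case. The only cosmetic difference is that you identify the relevant index $i$ via the value $X_t(\omega)\in\partial\lrangle{x_i,y_i}$, whereas the paper phrases it through $t\in\partial\bracketo{\lnot p_i}$; these are equivalent.
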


\begin{proof}
    First note that 
    \begin{align*}
        &X(\omega),t\models p\Leftrightarrow X(\omega),t\models \bigvee_{i=1}^k p_i,\\
        &X(\omega),t\models_n p\Leftrightarrow X(\omega),t\models_n\bigvee_{i=1}^k p_i,
    \end{align*}
    where $\bigvee_{i=1}^kp_i=p_1\vee p_2\vee\cdots\vee p_k$.
    If $B_p=\emptyset$ or $B_p=\R$, clearly $\bracketo{p}=\emptyset$ or $\bracketo{p}=[0,\infty)$, respectively.
    Hence \eqref{eq:phi_closure_of_int} and \eqref{eq:notphi_closure_of_int} holds.
    Otherwise, From Lemma~\ref{lemm:interval_propositional_inter}, every $p_i$ satisfies \eqref{eq:phi_closure_of_int} and \eqref{eq:notphi_closure_of_int} almost surely.
     Now we show that $\bracketo{p}(=\bracketo{\bigvee_{i=1}^kp_i})$ satisfies \eqref{eq:phi_closure_of_int} and \eqref{eq:notphi_closure_of_int}.
    \begin{description}
        \item[\eqref{eq:phi_closure_of_int}] Let $t\in\bracketo{\bigvee_{i=1}^kp_i}$ and $O$ be a neighborhood of $t$.
        Since $\bracketo{\bigvee_{i=1}^kp_i}=\bigcup_{i=1}^k\bracketo{p_i}$, there exists some $i\in\{1,\cdots,k\}$ such that $t\in\bracketo{p_i}$.
         Since $\bracketo{p_i}$ satisfies \eqref{eq:phi_closure_of_int} almost surely and $\internal\bracketo{p_i}\subset\internal\bracketo{\bigvee_{i=1}^kp_i}$, $O\cap\internal\bracketo{\bigvee_{i=1}^kp_i}\neq\emptyset$ almost surely.
        Then $\bigvee_{i=1}^kp_i$ satisfies \eqref{eq:phi_closure_of_int} almost surely.
        \item[\eqref{eq:notphi_closure_of_int}] Let $t\in\bracketo{\lnot \bigvee_{i=1}^kp_i}$.
    If $t\in\internal\bracketo{\lnot \bigvee_{i=1}^kp_i}$, then for any neighborhood $O$ of $t$, we have $O\cap\internal\bracketo{\lnot \bigvee_{i=1}^kp_i}\neq\emptyset$.
    Thus, it suffices to show that $O\cap\internal\bracketo{\lnot\bigvee_{i=1}^kp_i}\neq\emptyset$ for any neighborhood $O$ of $t$ whenever $t\in\partial\bracketo{\lnot \bigvee_{i=1}^kp_i}$.
    Since $\internal\bracketo{\lnot \bigvee_{i=1}^kp_i}=\internal(\bigcap_{i=1}^k\bracketo{\lnot p_i})=\bigcap_{i=1}^k\internal\bracketo{\lnot p_i}$, there must exist some $i\in{1,\cdots,k}$ such that $t\in\partial\bracketo{\lnot p_i}$.
    Indeed, if $t\in\internal\bracketo{\lnot p_i}$ for every $i$, then $t\in\internal\bracketo{\lnot\bigvee_{i=1}^k p_i}$.
    Since $t\mapsto X_t(\omega)$ is continuous almost surely and $\lrangle{x_1,y_1},\cdots\lrangle{x_k,y_k}$ are pairwise separated, we have $X_t(\omega)\in [x_j,y_j]^C$ when $j\neq i$ almost surely, and hence $t\in\internal\bracketo{\lnot p_j}$ for $i\neq j$ almost surely.
    Therefore, $t\in \bigcap_{j\neq i}\internal\bracketo{\lnot p_j}$.
    Then, $(t-\delta,t+\delta)\cap[0,\infty)\subset\bigcap_{j\neq i}\internal\bracketo{\lnot p_j}$ for sufficiently small $\delta>0$.
    Now, since $p_i$ satisfies \eqref{eq:phi_closure_of_int}, we have $(t-\delta,t+\delta)\cap\internal\bracketo{\lnot p_i}\neq\emptyset$.
    Hence, $(t-\delta,t+\delta)\cap\internal\bracketo{\bigvee_{j=1}^k\lnot p_j}=(t-\delta,t+\delta)\cap{\bigcap_{j=1}^k\internal\bracketo{\lnot p_j}}=(t-\delta,t+\delta)\cap\internal\bracketo{\lnot p_i}\neq\emptyset$. \qedhere
    \end{description}        
\end{proof}

\begin{proof}[Proof of Theorem~\ref{theo:pairwise_disjoint_diamond}]
    From the condition on $B_p$, it is a distinct set.
    Then Lemma~\ref{lemm:indicator_convergence} and Lemma~\ref{lemm:propositional_denseness} implies the almost sure convergence of $\chi^{(n)}{\phi}(\omega,t)$ and $\chi^{(n)}{\psi}(\omega,t)$ for every $t\in[0,\infty)$.
    Finally, Lemma~\ref{lemm:prob_convergence_from_indicator} can be employed to show the convergence of probability.        
\end{proof}

\subsection{\texorpdfstring{The case of $\flat$MTL--formulas}{The case of flat MTL-formulas}}\label{sec:general_flat_convergence}
Now, we prove the convergence result for general $\flat$MTL--formulas.
Let $X$ be the solution of SDE \eqref{eq:SDE_with_drift} with Assumption~\ref{cond:SDE_with_drift}.
Henceforth, we discuss under the following assumption:
\begin{asm}\label{cond:pairwise_disjoint}
    For every propositional formula $p$,
    \begin{align}
        X(\omega),t\models p\Leftrightarrow X_t(\omega)\in B_p,
    \end{align}
    for some $B_p$, which is a union of pairwise separated positive intervals on $\R$.
    Here $B_p$ may possibly be $\emptyset$ or $\R$.
\end{asm}
\begin{rem}
We give some examples of setting so that every propositional formula satisfies Assumption~\ref{cond:pairwise_disjoint}.
Let $B_1,\cdots,B_k$ be positive intervals on $\R$ such that
\begin{enumerate}[(i)]
    \item[(i)] $\bigcup_{i=1}^kB_i=\R$.
    \item[(ii)] $B_i\cap B_j=\emptyset$ if $i\neq j$.
\end{enumerate}
We define the semantics of atomic formulas $AP:=\{a_1,\cdots,a_k\}$ as $X(\omega),t\models a_i\Leftrightarrow X_t(\omega)\in B_i$ for $i=1,\cdots,k$.
Then, every propositional formula clearly satisfies Assumption~\ref{cond:pairwise_disjoint}.
Given this fact, if a propositional formula $p$ satisfies Assumption~\ref{cond:pairwise_disjoint}, $\lnot p$ also satisfies Assumption~\ref{cond:pairwise_disjoint}.
\end{rem}

Under these settings, we show the following statement.

\begin{thm}\label{theo:flat_MTL_convergence}
Suppose that $\{X_t\}_{t\geq0}$ is the solution of SDE \eqref{eq:SDE_with_drift} with Assumption~\ref{cond:SDE_with_drift}.
    Let $AP$ be the set of atomic formulas such that every propositional formula satisfies Assumption~\ref{cond:pairwise_disjoint}.
	Let $\phi$ be a $\flat$MTL--formula.
	Then $\chi^{(n)}_\phi(\omega,t)\rightarrow\chi_\phi(\omega,t)$ almost surely for every $t\in[0,\infty)$.
        In particular, $\Prob(\omega; X(\omega),\Lambda_n(t)\models_n\phi)\rightarrow\Prob(\omega; X(\omega),t\models\phi)$ for all $t\in[0,\infty)$.
\end{thm}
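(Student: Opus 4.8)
The plan is to prove Theorem~\ref{theo:flat_MTL_convergence} by structural induction on the $\flat$MTL--formula $\phi$, exploiting the fact that, by Definition~\ref{defi:syntax_flat_MTL}, every $\flat$MTL--formula is a Boolean combination (formed using $\lnot$ and $\wedge$ only) of \emph{basic} formulas of just two shapes: an atomic formula $a\in AP$, or a formula $\Diamond_I p$ with $p$ a propositional formula and $I=\lrangle{S,T}$ a positive interval on $[0,\infty)$ --- in particular no temporal operator ever occurs inside another. Thus it suffices to establish the almost sure convergence $\chi^{(n)}_\phi(\omega,t)\to\chi_\phi(\omega,t)$ for the two kinds of basic formulas and then to check that this convergence is stable under $\lnot$ and $\wedge$; once this is done, Lemma~\ref{lemm:prob_convergence_from_indicator} converts it into the asserted convergence of probabilities.

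For an atomic formula $\phi=a$ I would note $\chi^{(n)}_a(\omega,t)=\1_{B_a}(X_{\Lambda_n(t)}(\omega))$ and $\chi_a(\omega,t)=\1_{B_a}(X_t(\omega))$. When $t=0$, $\Lambda_n(0)=0$ and the two sides agree. When $t>0$, $\Lambda_n(t)\to t$ and the almost sure path continuity of the SDE gives $X_{\Lambda_n(t)}(\omega)\to X_t(\omega)$ for almost every $\omega$; since under Assumption~\ref{cond:pairwise_disjoint} the set $B_a$ is a finite union of pairwise separated intervals, $\partial B_a$ is finite, hence Lebesgue--null, so $B_a$ is distinct, and since $X_t$ has a density for $t>0$ (Assumption~\ref{cond:SDE_with_drift} together with~\ref{prop:density_SDE}) we get $\Prob(X_t\in\partial B_a)=0$. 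Hence $\1_{B_a}$ is almost surely continuous at $X_t(\omega)$ and $\chi^{(n)}_a(\omega,t)\to\chi_a(\omega,t)$ almost surely.

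For a basic formula $\phi=\Diamond_{\lrangle{S,T}}p$ with $p$ propositional, Assumption~\ref{cond:pairwise_disjoint} says $X(\omega),t\models p\iff X_t(\omega)\in B_p$ with $B_p$ a union of pairwise separated positive intervals, which is exactly the hypothesis imposed on $p$ in Theorem~\ref{theo:pairwise_disjoint_diamond}; applying that theorem gives $\chi^{(n)}_\phi(\omega,t)\to\chi_\phi(\omega,t)$ almost surely for every $t$ (and likewise for $\Box_{\lrangle{S,T}}p=\lnot\Diamond_{\lrangle{S,T}}\lnot p$, since $\lnot p$ is again propositional of the same kind by the remark following Assumption~\ref{cond:pairwise_disjoint}). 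I expect this diamond case to be the only place where real analytic work is hidden, and it has already been carried out --- via the oscillatory behaviour of the SDE paths established in Lemma~\ref{lemm:local_extreme} --- inside Theorem~\ref{theo:pairwise_disjoint_diamond}; after this step no further stochastic calculus is needed.

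It then remains to close off the Boolean induction. For negation, $\chi^{(n)}_{\lnot\phi}=1-\chi^{(n)}_\phi$ and $\chi_{\lnot\phi}=1-\chi_\phi$, so convergence is inherited on the same full-measure set. For conjunction, $\chi^{(n)}_{\phi_1\wedge\phi_2}=\chi^{(n)}_{\phi_1}\chi^{(n)}_{\phi_2}$ and $\chi_{\phi_1\wedge\phi_2}=\chi_{\phi_1}\chi_{\phi_2}$, so if the convergences for $\phi_1$ and $\phi_2$ hold off null sets $N_1,N_2$, the product converges off $N_1\cup N_2$, which is null by Remark~\ref{rema:almost_sure_concerves}. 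Because a $\flat$MTL--formula has only finitely many basic subformulas and finitely many Boolean connectives, the induction terminates, yielding $\chi^{(n)}_\phi(\omega,t)\to\chi_\phi(\omega,t)$ almost surely for every $t\in[0,\infty)$, and Lemma~\ref{lemm:prob_convergence_from_indicator} finishes the proof. The main obstacle is genuinely concentrated in the diamond base case, now reduced to Theorem~\ref{theo:pairwise_disjoint_diamond}; the atomic case needs only continuity plus the density of $X_t$, and the Boolean steps only careful bookkeeping of the exceptional null sets.
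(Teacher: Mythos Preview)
Your proposal is correct and follows essentially the same route as the paper: decompose $\phi$ into a finite Boolean combination of basic formulas (atomic or $\Diamond_{\lrangle{S,T}}p$ with $p$ propositional), establish almost sure convergence of the indicator for each basic formula via path continuity plus density of $X_t$ in the atomic case and via Theorem~\ref{theo:pairwise_disjoint_diamond} in the diamond case, and then close under the Boolean connectives before invoking Lemma~\ref{lemm:prob_convergence_from_indicator}. The only cosmetic difference is that the paper takes \emph{propositional} (rather than atomic) formulas as base formulas and encodes the Boolean combination via a single function $\bigodot_{i=1}^k\colon\{0,1\}^k\to\{0,1\}$ instead of handling $\lnot$ and $\wedge$ inductively; the content is identical.
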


\begin{rem}
    Under Assumption~\ref{cond:pairwise_disjoint}, every $\flat$MTL--formula $\phi$ is the following form:
    \begin{align*}
        \phi = \bigodot_{i=1}^k \phi_i
    \end{align*}
    where $\bigodot_{i=1}^k:\{0,1\}^k\rightarrow \{0,1\}$ is a Boolean combination and every $\phi_i$ satisfies one of the following:
    \begin{itemize}
        \item $\phi_i$ is a propositional formula with Assumption~\ref{cond:pairwise_disjoint}.
        \item $\phi_i = \Diamond_I p$ with a positive interval $I$ and a propositional formula $p$ satisfying Assumption~\ref{cond:pairwise_disjoint}.
        \item $\phi_i = \Box_I p$ with a positive interval $I$ and a propositional formula $p$ satisfying Assumption~\ref{cond:pairwise_disjoint}.
    \end{itemize}
    In other words, $\flat$MTL is an MTL in which every temporal operator is constructed using a diamond operator, and no temporal operator is nested.
    On the other hand, the counterexample presented in Section~\ref{sec:discretization} is an MTL--formula with the quadruple nested diamond operator.
    Then, the convergence result for MTL--formulas with diamond operators depends on the nesting level.
\end{rem}

\begin{lem}\label{lemm:propositional_indicator_convergence}
    Put Assumption~\ref{cond:SDE_with_drift} and Assumption~\ref{cond:pairwise_disjoint}.
     Let $p$ be a propositional formula.
     Then $\chi^{(n)}_p(\omega,t)\rightarrow\chi_p(\omega,t)$ almost surely, for every $t\in(0,\infty)$.
     In particular, $\Prob(\omega; X(\omega),\Lambda_n(t)\models_np)\rightarrow\Prob(\omega; X(\omega),t\models p)$.
 \end{lem}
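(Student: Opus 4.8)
The plan is to reduce the statement to the elementary fact that $\1_{B_p}$ is locally constant (hence continuous) at every point outside $\partial B_p$, and to combine this with the almost sure continuity of the sample paths together with the existence of a density for $X_t$. First I would observe that, for a propositional formula $p$, the discrete and continuous semantics agree pointwise: $X(\omega),s\models p\iff X_s(\omega)\in B_p\iff X(\omega),s\models_n p$ for every $s\in\N/n$. Hence, by Definition~\ref{defi:indicator_func},
\begin{align*}
\chi_p(\omega,t)=\1_{B_p}(X_t(\omega)),\qquad \chi^{(n)}_p(\omega,t)=\1_{B_p}\bigl(X_{\Lambda_n(t)}(\omega)\bigr),
\end{align*}
where $\Lambda_n(t)=\lfloor nt\rfloor/n\to t$ as $n\to\infty$. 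So the claim is exactly that $\1_{B_p}(X_{\Lambda_n(t)}(\omega))\to\1_{B_p}(X_t(\omega))$ almost surely.

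Next I would fix $t\in(0,\infty)$. Under Assumption~\ref{cond:pairwise_disjoint}, $B_p$ is a finite union of pairwise separated positive intervals, so $\partial B_p$ is a finite set and in particular has Lebesgue measure zero. Since $X$ is the solution of \eqref{eq:SDE_with_drift} under Assumption~\ref{cond:SDE_with_drift}, the law of $X_t$ is absolutely continuous for $t>0$ by~\ref{prop:density_SDE}; therefore $\Prob(X_t\in\partial B_p)=0$. Combining this with the almost sure continuity of $s\mapsto X_s(\omega)$ (Definition~\ref{defi:SDE_with_drift}) via Remark~\ref{rema:almost_sure_concerves}, there is a set $\Omega_t\in\F$ with $\Prob(\Omega_t)=1$ such that, for every $\omega\in\Omega_t$, the path $s\mapsto X_s(\omega)$ is continuous and $X_t(\omega)\notin\partial B_p$.

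Finally, for $\omega\in\Omega_t$: since $X_t(\omega)\notin\partial B_p$, either $X_t(\omega)\in\internal B_p$ or $X_t(\omega)\in\internal(B_p^C)$, so $\1_{B_p}$ is constant on some neighborhood $U$ of $X_t(\omega)$. As $\Lambda_n(t)\to t$ and the path is continuous, $X_{\Lambda_n(t)}(\omega)\to X_t(\omega)$, hence $X_{\Lambda_n(t)}(\omega)\in U$ for all large $n$ and $\1_{B_p}(X_{\Lambda_n(t)}(\omega))=\1_{B_p}(X_t(\omega))$. Thus $\chi^{(n)}_p(\omega,t)\to\chi_p(\omega,t)$ for every $\omega\in\Omega_t$, i.e.\ almost surely, and the convergence $\Prob(\omega; X(\omega),\Lambda_n(t)\models_n p)\to\Prob(\omega; X(\omega),t\models p)$ follows immediately from Lemma~\ref{lemm:prob_convergence_from_indicator}. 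I do not expect any real obstacle here; the only step that genuinely uses the hypotheses is forcing $X_t\notin\partial B_p$ almost surely, which is precisely where the density of $X_t$ (valid for $t>0$, hence the restriction to $t\in(0,\infty)$) and the fact that $\partial B_p$ is finite come in.
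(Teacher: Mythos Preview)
Your proof is correct and follows essentially the same line as the paper's: both arguments reduce to showing that, almost surely, $X_t(\omega)\notin\partial B_p$ (equivalently $t\notin\partial\bracketo{p}$) via the density of $X_t$ for $t>0$ and the fact that $\partial B_p$ is Lebesgue-null, and then use path continuity together with $\Lambda_n(t)\to t$ to conclude that $\chi^{(n)}_p(\omega,t)=\chi_p(\omega,t)$ for all large $n$. The only cosmetic difference is that the paper packages the boundary-avoidance step by invoking Lemma~\ref{lemm:closedboundary_clt_interval} (formulated for $\partial\bracketo{p}$), whereas you carry out the direct state-space version of that same argument.
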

 
\begin{proof}
    First note that $X(\omega),t\models p$ is equivalent to $X_t(\omega)\in B_p$ for some $B_p\subset\R$.
    Let $t=0$. 
    Then $\Lambda_n(0)=0$ and hence $X(\omega),\Lambda_n(0)\models_n p$ is equivalent to $X(\omega),0\models B_p$.
    Next, let $t>0$.
   By the definition of indicator functions , $\chi_p(\omega,t)=1$ is equivalent to $X_t(\omega)\in B_p$ and $\chi^{(n)}_p(\omega,t)=1$ is equivalent to $X_{\Lambda_n(t)}(\omega)\in B_p$.
   From Assumption~\ref{cond:pairwise_disjoint}, $B_p$ is distinct, then Lemma~\ref{lemm:closedboundary_clt_interval} implies that $t\notin \partial\bracketo{p}$ almost surely.
   Then almost surely there exists some $\varepsilon>0$ such that $\chi_p(\omega,s)=\chi_p(\omega,t)$ for every $s\in (t-\varepsilon,t+\varepsilon)\cap[0,\infty)$.
   Then it holds almost surely that $\chi^{(n)}_p(\omega,t)=\chi_p(\omega,\Lambda_n(t))=\chi_p(\omega,t)$ for sufficiently large $n$.
\end{proof}

\begin{proof}[Proof of Theorem~\ref{theo:flat_MTL_convergence}]
    Fix $t\in[0,\infty)$.
    It is clear that $\phi$ is Boolean combination of $\{\phi_i,i=1,\cdots,k\}$, where $\phi_i$ is a propositional formula or formula of the form $\Diamond_{\lrangle{S,T}}p$ where $\lrangle{S,T}$ is positive interval and $p$ is propositional formula.
    Then there exists some function $\bigodot_{i=1}^k:\{0,1\}^k\longrightarrow\{0,1\}$ such that
    \begin{align}
        \chi_\phi(\omega,t)&=\bigodot_{i=1}^k\chi_{\phi_i}(\omega,t),\label{eq:general_flat_chi}\\
        \chi^{(n)}_\phi(\omega,t)&=\bigodot_{i=1}^k\chi^{(n)}_{\phi_i}(\omega,t).\label{eq:general_flat_chi_dis}
    \end{align}
    From Assumption~\ref{cond:pairwise_disjoint} and Lemma~\ref{lemm:propositional_denseness}, every propositional formula satisfies \eqref{eq:phi_closure_of_int} and \eqref{eq:notphi_closure_of_int}.
    Then we can apply Lemma~\ref{lemm:indicator_convergence} and Lemma~\ref{lemm:propositional_indicator_convergence} to show that $\chi^{(n)}_{\phi_i}(\omega,t)$ converges almost surely to $\chi_{\phi_i}(\omega,t)$ for every $i=1,\cdots,k$.
    Then, almost surely, there exists some large $N\in\N$ such that $\chi^{(n)}_{\phi_i}(\omega,t)=\chi_{\phi_i}(\omega,t)$ for $n\geq N$ and $i=1,\cdots,k$.
    Therefore the left--hand side of \eqref{eq:general_flat_chi_dis} converges to the left side of \eqref{eq:general_flat_chi} almost surely.
    Once we have shown the almost sure convergence of \eqref{eq:general_flat_chi_dis} to \eqref{eq:general_flat_chi}, one can apply Lemma~\ref{lemm:prob_convergence_from_indicator} to see the convergence of the probability.    
\end{proof}

\begin{rem}
    In Assumption~\ref{cond:pairwise_disjoint}, we represent every propositional formula as a union of pairwise separated sets.
    It is one of the vital assumptions in our result.
    Let $X:=\{X_t\}_{t\geq0}$ be one-dimensional Brownian motion starting at zero.
    Consider the case that $a,b$ are atomic formulas such that $X(\omega),\models a\Leftrightarrow X_t\in B_a$ and $X(\omega),t\models B_b$, where $B_a:=(-\infty,1)$ and $B_b:=(1,\infty)$.
    Let $p:=\lnot(a\wedge b)$ and $\phi:=\Diamond_{\lrangle{S,T}}p$.
    Since $\overline{B_a}\cap \overline{B_b}\neq\emptyset$, $B_a$ and $B_b$ is not separated.
    Since $X_0\neq1$ and $X_t$ has density for all $t>0$, $\Prob(\omega;X_t(\omega)=1)=0$ for all $t\in[0,\infty)\cap\Q$.
    Hence the sigma-additivity of probability measure implies $\Prob(\exists t\in[0,\infty)\cap\Q,\ X_t=1)=0$.
    Then $\Prob(\omega; X(\omega),\Lambda_n(0)\models_n \Diamond_{\lrangle{S,T}}p)=0$ for every $n,S,T$ and then $\Prob(\omega; X(\omega),\Lambda_n(0)\models_n \phi)=0$.
    On the other hand, if $\tau_1(\omega):=\inf\{t\geq0; X_t(\omega)=1\}\in(S,T)$ then $X(\omega),0\models\phi$.
    From Theorem~\ref{fact:hitting_time_density} and the monotonicity of probability measures, then $\Prob(\omega; X(\omega),0\models\phi)\geq\Prob(\tau_1\in(S,T))>0$ and then $\Prob(\omega; X(\omega),0\models_n\phi)$ does not converge to $\Prob(\omega; X(\omega),0\models\phi)$.
\end{rem}

\subsection{Example of convergence}
Let us show an example of an MTL--formula for standard Brownian motion to illustrate the convergence result Theorem~\ref{theo:flat_MTL_convergence}.
Consider the following settings:
\begin{itemize}
    \item $X := \{X_t\}_{t \geq 0}$ is a standard Brownian motion starting at $x \in \R$.
    \item $p$ is an atomic formula defined as
    \begin{align}
        &X(\omega), t \models p \iff X_t(\omega) \in (x, \infty),& t \in [0, \infty),\\
        &X(\omega), t \models_n p \iff X_t(\omega) \in (x, \infty),& t \in \N/n.
    \end{align}
    \item $\phi = \lnot (\Diamond_{(0, 1)} p)$.
\end{itemize}
It is well--known that the path of standard Brownian motion has a quite zig--zag sample path:
\begin{fact}[see 2.7.18 in \cite{MR1121940}]\label{fact:signBM}
	With probability one, the path $t \mapsto \{X_t(\omega) - x\}$ changes its sign infinitely many times in any time interval $[0,\epsilon],\ \epsilon>0$.
\end{fact}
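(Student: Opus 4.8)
The plan is to reduce the statement to a property of a standard Brownian motion started at the origin and then apply Blumenthal's $0$--$1$ law together with the reflection symmetry $B \leftrightarrow -B$ of Brownian motion. Since $X$ is a standard Brownian motion starting at $x$, the process $B_t := X_t - x$ is a standard Brownian motion starting at $0$, and ``$t \mapsto X_t - x$ changes sign'' is exactly ``$B$ changes sign''. So it suffices to prove that, almost surely, for every $\epsilon > 0$ the path $t \mapsto B_t$ changes sign infinitely often in $[0,\epsilon]$.

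First I would establish the key oscillation fact: almost surely, for every $\delta > 0$ there exist $s, s' \in (0,\delta)$ with $B_s > 0$ and $B_{s'} < 0$. To see this, set
\[
A^+ := \bigcap_{n \geq 1} \Bigl\{ \sup_{0 \leq s \leq 1/n} B_s > 0 \Bigr\}.
\]
Because the paths of $B$ are continuous, each event $\{\sup_{0 \leq s \leq 1/n} B_s > 0\}$ is measurable with respect to the $\sigma$--algebra $\F_{1/n}$ of the natural filtration $(\F_t)_{t\geq0}$ of $B$, and these events are decreasing in $n$; hence $A^+ = \bigcap_{m \geq n}\{\sup_{0\le s\le1/m} B_s > 0\}$ is $\F_{1/n}$--measurable for every $n$, so $A^+$ lies in the germ $\sigma$--algebra $\F_{0+} = \bigcap_{t > 0}\F_t$. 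By Blumenthal's $0$--$1$ law (see \cite{MR1121940}), $\Prob(A^+) \in \{0,1\}$. On the other hand $\Prob\bigl(\sup_{0\le s\le1/n} B_s > 0\bigr) \geq \Prob(B_{1/n} > 0) = \frac12$ for all $n$, and by continuity of probability from above $\Prob(A^+) = \lim_n \Prob(\sup_{0\le s\le1/n} B_s > 0) \geq \frac12$; therefore $\Prob(A^+) = 1$. Applying the identical argument to $-B$, which is again a standard Brownian motion started at $0$, gives $\Prob(A^-) = 1$ for $A^- := \bigcap_{n\geq1}\{\inf_{0\leq s\leq1/n} B_s < 0\}$, and the oscillation fact follows on $A^+ \cap A^-$.

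It then remains to run a pathwise zero--crossing argument on the full-probability event on which $t \mapsto B_t$ is continuous and both $A^+$ and $A^-$ hold. Fix $\epsilon > 0$. Using the oscillation fact repeatedly, I would construct a strictly decreasing sequence $t_1 > s_1 > t_2 > s_2 > \cdots$ in $(0,\epsilon)$ with $B_{t_k} > 0$ and $B_{s_k} < 0$ for every $k$: given $t_k$ pick $s_k \in (0, t_k)$ with $B_{s_k} < 0$, and given $s_k$ pick $t_{k+1} \in (0, s_k)$ with $B_{t_{k+1}} > 0$. By the intermediate value theorem $B$ vanishes and changes sign somewhere inside each of the pairwise disjoint intervals $[t_{k+1}, s_k] \subset (0,\epsilon)$, so $B$ changes sign infinitely often in $[0,\epsilon]$. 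As the exceptional null set does not depend on $\epsilon$, the conclusion holds simultaneously for all $\epsilon > 0$.

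The main obstacle is the measurability bookkeeping needed to put the ``$B$ is positive arbitrarily close to $0$'' event into $\F_{0+}$: one must express it as a countable intersection of a \emph{decreasing} family of events, the $n$--th of which is $\F_{1/n}$--measurable, so that Blumenthal's $0$--$1$ law applies; once this is arranged, the lower bound $\Prob(A^+) \geq \frac12$ and the $0$--$1$ dichotomy finish the probabilistic part, and the remaining zero--crossing argument is an elementary consequence of path continuity.
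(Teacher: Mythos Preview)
Your argument is correct: the reduction to $B_t = X_t - x$, the use of Blumenthal's $0$--$1$ law on the germ event $A^+$ together with the lower bound $\Prob(A^+)\geq\tfrac12$, the reflection to obtain $A^-$, and the intermediate-value construction of infinitely many sign changes are all sound. Note, however, that the paper does not give its own proof of this statement at all; it simply records it as a fact with the citation to \cite[2.7.18]{MR1121940} and uses it. Your proof is precisely the standard one indicated in that reference, so there is nothing to compare beyond observing that you have supplied the details the paper elected to outsource.
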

Then there exists $t \in I$ such that $X_t(\omega) \in (x, \infty)$ almost surely, which implies 
\begin{align*}
    (\exists t \in (0,1)) [X(\omega), t \models p]
\end{align*}
almost surely.
Therefore we have $\Prob( \omega \in \Omega ; X(\omega), 0 \models \phi) = 0$.
On the other hand, the joint density function of $X$ has the following form (see Chapter~2 in \cite{MR1121940}):
\begin{align}
    &\Prob(X_{t_1} \in (-\infty, x], X_{t_2} \in (-\infty, x], \cdots, X_{t_n} \in (-\infty, x])\nonumber\\
    = &\int_{-\infty}^x \int_{-\infty}^x \cdots \int_{-\infty}^x p(t_1; x, y_1) p(t_2 - t_1; y_1, y_2) \cdots p(t_n - t_{n-1}; y_{n-1}, y_n)dy_1 dy_2\cdots dy_n,\label{eq:joint_function_BM}
\end{align}
where $t_1 < t_2 < \cdots < t_n$ and
\begin{align*}
    p(t, x, y) = \frac{1}{\sqrt{2 \pi t}}\exp\left\{-\frac{(x-y)^2}{2t} \right\}, \hspace{10pt} t > 0,\ x,y\in\R. 
\end{align*}
Since $p(t, x, y)$ is positive for all $t$, $x$, $y$, the probability \eqref{eq:joint_function_BM} is positive, namely,
    $\Prob(\omega \in \Omega ; X(\omega), 0 \models_n \phi) > 0$
for every $n \in \N$.
Despite its positivity, Theorem~\ref{theo:flat_MTL_convergence} assures that $\Prob(\omega \in \Omega ; X(\omega), 0 \models_n \phi)\rightarrow \Prob(\omega \in \Omega ; X(\omega), 0 \models \phi) = 0$ as $n \rightarrow \infty$.

\section{Conclusion}

In conclusion, this study has examined the measurability of events defined by continuous MTL--formulas, assuming that the underlying stochastic process is measurable as a mapping from sample to time.

Moreover, we demonstrated a counterexample that highlights the lack of convergence of the probability derived from discrete semantics to that derived from continuous semantics, specifically when the intervals within diamond operators are allowed to be bounded open.

Furthermore, the study explored the case of $\flat$MTL--formulas, which only have $\Box$ or $\Diamond$ without nest as modalities, and demonstrated that the probability obtained from discrete semantics converges to the probability obtained from continuous semantics for every formula within this framework.
This finding suggests that $\flat$MTL--formulas exhibit a desirable convergence property, highlighting their applicability and reliability in capturing system behaviors.

In light of these results, our future work should focus on understanding the underlying factors and mechanisms that contribute to the convergence or divergence of probability between discrete and continuous semantics in various formula contexts. By gaining deeper insights into these dynamics, we may enhance the effectiveness and accuracy of probability simulations and predictions within formal verification and system analysis.

\bibliographystyle{alphaurl}
\bibliography{paper}

\appendix

\section{Some appendices for stochastic calculus}\label{append:SDE}

In this section, we prove Lemma~\ref{lemm:local_extreme}.
Let us recall the claim:

\begin{lem}
Let $X$ be a strong solution of SDE \eqref{eq:SDE_with_drift} on $(\Omega,\F,\Prob)$.
Put Assumption~\ref{cond:SDE_with_drift}. 
Then, the following statements hold:
\begin{enumerate}[(i)]
\item Put 
    \begin{align}
        \mathcal{L}^a_\omega:=\{t\geq0; X_t(\omega)=a\},\hspace{10pt}a\in\R,\ \omega\in\Omega.
    \end{align}
    Then $\mathcal{L}^a_\omega$ is dense in itself, almost surely, for all $a\in\R$.
    \item Almost surely, the set of points of local maximum and local minimum for the path $t\mapsto X(\omega)$ is dense in $[0,\infty)$, and all local maxima and local minima are strict.
\end{enumerate}
 \end{lem}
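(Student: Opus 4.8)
The plan is to reduce the general diffusion \eqref{eq:SDE_with_drift} to a one--dimensional Brownian motion by transforming the state space with the scale function and then performing a time change. This exhibits almost every path $t\mapsto X_t(\omega)$ as the image of a Brownian path under two strictly increasing homeomorphisms, one acting on space and one on time, after which both assertions follow by transporting the corresponding classical facts about Brownian paths.

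Concretely, I would first introduce the scale function
\[
    s(x):=\int_{\xi}^{x}\exp\!\left(-2\int_{\xi}^{y}\frac{b(z)}{\sigma^{2}(z)}\,dz\right)\!dy .
\]
Under Assumption~\ref{cond:SDE_with_drift} the function $b/\sigma^{2}$ is Borel and bounded on compacts, so $s\in C^{1}(\R)$ is strictly increasing with strictly positive, locally Lipschitz derivative $s'$; hence $s$ is a homeomorphism of $\R$ onto an open interval, and $s''=-(2b/\sigma^{2})\,s'$ holds Lebesgue--almost everywhere. Applying the generalized It\^o (It\^o--Tanaka) formula (see Chapter~3 of \cite{MR1121940}) to $Y_{t}:=s(X_{t})$, the bounded--variation part cancels and $Y$ becomes a continuous local martingale with $dY_{t}=s'(X_{t})\sigma(X_{t})\,dW_{t}$ and quadratic variation $A_{t}:=\langle Y\rangle_{t}=\int_{0}^{t}\bigl(s'(X_{u})\bigr)^{2}\sigma^{2}(X_{u})\,du$. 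Because the integrand is continuous and strictly positive, $t\mapsto A_{t}$ is a strictly increasing continuous bijection of $[0,\infty)$ onto $[0,L)$ with $L:=\langle Y\rangle_{\infty}\in(0,\infty]$, hence a homeomorphism with continuous inverse $A^{-1}$. By the Dambis--Dubins--Schwarz theorem (see \cite{MR1121940}), $\beta_{u}:=Y_{A^{-1}(u)}-s(\xi)$, $u\in[0,L)$, is a standard Brownian motion, so that, almost surely,
\[
    X_{t}=s^{-1}\!\bigl(s(\xi)+\beta_{A_{t}}\bigr),\qquad t\ge 0 .
\]

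From this representation the two claims are essentially topological. For (i), fix $a\in\R$; since $s$ is injective, $\mathcal{L}^{a}_{\omega}=A^{-1}\bigl(\{u\in[0,L):\beta_{u}=s(a)-s(\xi)\}\bigr)$. The level set of a Brownian motion at a fixed level is, almost surely, a perfect subset of its time interval --- closed and without isolated points --- which is classical (each point is regular for itself; cf.\ Chapter~2 of \cite{MR1121940}); as $A^{-1}$ is a homeomorphism, the preimage inherits these properties, so $\mathcal{L}^{a}_{\omega}$ is dense--in--itself, and the ``for all $a$'' version uses the stronger (still classical) a.s.\ statement that every level set of the Brownian path is perfect. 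For (ii), since $s^{-1}$ and $A$ are strictly increasing homeomorphisms, $t_{0}$ is a point of (strict) local maximum, resp.\ minimum, of $t\mapsto X_{t}(\omega)$ if and only if $A_{t_{0}}$ is one for $u\mapsto\beta_{u}$, because local order comparisons and their strictness are preserved in both directions. It is classical that, almost surely, the set of points of local maximum and of local minimum of a Brownian path is dense in its time interval --- on no rational subinterval is the path monotone, which forces an interior extremum --- and that each such point is strict, since two disjoint rational subintervals almost surely have distinct suprema while a non--strict local maximum would produce such a coincidence. Transporting these statements back through $A^{-1}$ and $s^{-1}$ yields that the set of points of local maximum and minimum of $X(\omega)$ is dense in $[0,\infty)$ and that all such points are strict, almost surely.

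The main obstacle is the first step: carrying out the state--space and time changes rigorously when $b$ is only Borel measurable, so that $s$ is merely $C^{1}$ and one must invoke the It\^o--Tanaka formula --- together with the fact (established elsewhere in this appendix) that $X_{t}$ has a density for $t>0$, which makes the term involving the a.e.-defined $s''$ unambiguous --- rather than the classical It\^o formula; and coping with the possibility that $X$ is transient, in which case $A_{t}$ need not tend to infinity and the auxiliary Brownian motion $\beta$ lives only on the bounded interval $[0,L)$, so that one must check that the Brownian facts invoked hold on an arbitrary non--degenerate time interval and that $A$ is a homeomorphism onto $[0,L)$. Once the representation $X_{t}=s^{-1}(s(\xi)+\beta_{A_{t}})$ is established, the remainder is elementary.
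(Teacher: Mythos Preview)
Your proof follows essentially the same route as the paper's: remove the drift via the scale function and then apply a Dambis--Dubins--Schwarz time change to exhibit $X_t = p^{-1}(p(\xi) + B_{Z_t})$ for a Brownian motion $B$ and a strictly increasing continuous time change $Z$, after which both claims reduce to the classical Brownian facts transported along strictly increasing homeomorphisms. The paper packages the first step by citing Proposition~5.5.13 of \cite{MR1121940} directly (so it need not invoke It\^o--Tanaka under minimal regularity), and it handles the possibility $\langle Y\rangle_\infty < \infty$ by extending the probability space (Karatzas--Shreve 3.4.6--3.4.7) rather than by working with a Brownian motion defined only on $[0,L)$; but these are cosmetic differences.

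One correction is in order. Your parenthetical assertion that ``almost surely, every level set of the Brownian path is perfect'' is false: at any strict local maximum $t_0$ of $B$ --- and such points exist, indeed are dense, by the very fact you invoke for part~(ii) --- the point $t_0$ is isolated in the level set $\{t:B_t=B_{t_0}\}$. The correct statement, and the one both you and the paper actually need and prove, is that for each \emph{fixed} level $a$ the set $\{t:B_t=a\}$ is almost surely perfect, with the null set allowed to depend on $a$. This is how the lemma should be read, and it suffices for the applications in Section~\ref{sec:flat_MTL}, where only finitely many levels (the interval endpoints) are involved.
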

 
Toward this goal, we cite a similar fact about Brownian motion:
\begin{thmC}[2.9.7 and 2.9.12 in \cite{MR1121940}]\label{fact:strictlocalmaxima}
    Let $X$ be a Brownian motion on $(\Omega,\F,\Prob)$.
    Then the following statements hold:
    \begin{enumerate}[(i)]
        \item Put 
    \begin{align}
        \mathcal{L}^a_\omega:=\{t\geq0; X_t(\omega)=a\},\hspace{10pt}a\in\R,\ \omega\in\Omega.
    \end{align}
    Then $\mathcal{L}^a_\omega$ is dense in itself, almost surely, for all $a\in\R$.
    \item Almost surely, the set of points of local maximum for the Brownian path is dense in $[0,\infty)$, and all local maxima are strict.
    \end{enumerate}
\end{thmC}

 \begin{lem}\label{fact:strictlocalminima}
Almost surely, the set of points of local minimum for the Brownian path $t\mapsto W_t(\omega)$ is dense in $[0,\infty)$, and all local minima are strict.
\end{lem}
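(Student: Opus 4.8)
The plan is to reduce the statement to its local-maximum counterpart, Theorem~\ref{fact:strictlocalmaxima}, by exploiting the reflection symmetry of Brownian motion. First I would fix a standard Brownian motion $W=\{W_t\}_{t\geq0}$ starting at some $x\in\R$ and set $\widetilde W:=\{2x-W_t\}_{t\geq0}$. I would check that $\widetilde W$ is again a standard Brownian motion starting at $x$ by verifying the four clauses of Definition~\ref{defi:BM}: $\widetilde W_0=x$ almost surely; $t\mapsto\widetilde W_t$ is almost surely continuous since $W$ is; for $t>s$ the increment $\widetilde W_t-\widetilde W_s=-(W_t-W_s)$ is $\mathcal N(0,t-s)$ because the centered Gaussian law is symmetric; and for $s\leq t\leq u$ the increment $\widetilde W_u-\widetilde W_t=-(W_u-W_t)$ is independent of $\sigma(\widetilde W_v;v\leq s)=\sigma(W_v;v\leq s)$, the generated filtration being unchanged under $u\mapsto 2x-u$.

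Next I would record the elementary observation that, for a fixed $\omega$, a number $t\geq0$ is a point of local minimum (respectively strict local minimum) of $t\mapsto W_t(\omega)$ if and only if it is a point of local maximum (respectively strict local maximum) of $t\mapsto\widetilde W_t(\omega)$, with the same witnessing $\delta>0$; this is immediate from Definition~\ref{defi:local_extreme} since $u\mapsto 2x-u$ is an order-reversing bijection of $\R$. Hence the set of points of local minimum of $W(\omega)$ equals the set of points of local maximum of $\widetilde W(\omega)$, and every local minimum of $W(\omega)$ is strict exactly when every local maximum of $\widetilde W(\omega)$ is strict.

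Finally, I would apply Theorem~\ref{fact:strictlocalmaxima}(ii) to the Brownian motion $\widetilde W$: there is a measurable event of probability one on which the set of points of local maximum of $\widetilde W(\omega)$ is dense in $[0,\infty)$ and all such maxima are strict. On that same event, by the previous paragraph, the set of points of local minimum of $W(\omega)$ is dense in $[0,\infty)$ and all of them are strict, which is the desired conclusion. I do not expect any genuine obstacle here; the only step that merits a little care is confirming that the reflected process $\widetilde W$ satisfies clause (iv) of Definition~\ref{defi:BM}, which holds because reflection leaves the natural filtration invariant.
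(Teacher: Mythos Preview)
Your argument is correct and follows essentially the same approach as the paper: reflect the Brownian path and apply Theorem~\ref{fact:strictlocalmaxima}(ii) to the reflected process, noting that local minima of the original path are exactly local maxima of the reflection. The only cosmetic difference is that the paper takes $\tilde X_t=-X_t$ (invoking rotational invariance) rather than your $\widetilde W_t=2x-W_t$, but these are the same reflection idea and your version handles the starting point more carefully.
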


\begin{proof}
    Let $X$ be Brownian motion and define $\tilde{X}:=\{\tilde{X}_t\}_{t\geq0}$ by $\tilde{X}_t(\omega):=-X_t(\omega)$.
	 By rotational invariance (see 3.3.18 in \cite{MR1121940}), $\tilde{X}$ is also Brownian motion. We can apply Theorem~\ref{fact:strictlocalmaxima} so that the set of local maximum for $\tilde{X}(\omega)$ is dense, and all the local maxima are strict, almost surely.
	 Now, since all the local minima of $X(\omega)$ are local maxima of $\tilde{X}(\omega)$, then our statement holds.
\end{proof}

Now, we have shown Lemma~\ref{lemm:local_extreme} for the case of Brownian motion.
It remains to extend this statement to the case of SDE \eqref{eq:SDE_with_drift} under Assumption~\ref{cond:SDE_with_drift}.
Before showing the solution of such an SDE, we have to guarantee the existence and uniqueness of SDE:

\begin{thmC}[5.5.17 in \cite{MR1121940}]\label{prop:strong_sol_of_SDE}
	Assume that $ b:\R\rightarrow\R$ is bounded and $\sigma:\R\rightarrow\R$ is Lipschitz continuous with $\sigma^2$ bounded away from zero on every compact subset of $\R$.
	Then, for every initial $\xi\in\R$, equation \eqref{eq:SDE_with_drift} has unique strong solution.
\end{thmC}

Furthermore, we show the convergence of the probability of MTL--formula for SDEs with a density function.
The following proposition assures the existence of density for SDE \eqref{eq:SDE_with_drift}:

\begin{thmC}[Theorem 2.1 in \cite{10.3150/09-BEJ215}]\label{prop:density_SDE}
	Let $\xi$ be the constant value in $\R$.
	Assume that $\sigma$ is H\"older continuous with exponent $\theta\in[1/2,1]$ and that $b$ is measurable and at most linear growth.
	Consider a continuous solution $\{X_t\}_{t\geq0}$ to \eqref{eq:SDE_with_drift}.
	Then, for all $t>0$, the law of $X_t$ has a density on the set $\{x\in\R; \sigma(x)\neq0\}$.
\end{thmC}
It is easy to make sure that the above two propositions can be applied to the unique existence and absolute continuity of $X$ under Assumption~\ref{cond:SDE_with_drift}.

To extend Theorem~\ref{fact:strictlocalmaxima} and Lemma~\ref{fact:strictlocalminima} to the case of the stochastic differential equation \eqref{eq:SDE_with_drift}, we can make use of the following representation of the solution $X$ as a time change of Brownian motion:
\begin{thmC}[{\cite[5.5.13]{MR1121940}}]\label{prop:drift_removal}
	Assume Assumption~\ref{cond:SDE_with_drift}.
	Fix a number $c\in\R$ and define the scale function
	\begin{align*}
		p(x)&:=\int_c^x\exp\left\{-2\int_c^\xi\frac{b(\zeta)d\zeta}{\sigma^2(\zeta)}\right\}d\xi;\hspace{10pt}x\in\R
	\end{align*} 
	and inverse $q:(p(-\infty),p(\infty))\rightarrow\R$ of $p$.
	   A process $X=\{X_t\}_{t\geq0}$ is a strong solution of equation \eqref{eq:SDE_with_drift} if and only if the process $Y:=\{Y_t=p(X_t)\}_{t\geq0}$ is a strong solution of
	\begin{align}
		Y_t=Y_0+\int_0^t\tilde{\sigma}(Y_s)dW_s;\hspace{10pt}0\leq t<\infty,\label{eq:transformed_SDE}
	\end{align}
	where
	\begin{gather*}
		p(-\infty)<Y_0<p(\infty)\hspace{10pt}\asure,\\
		\tilde{\sigma}(y)=\begin{cases}
			p'(q(y))\sigma(q(y));&p(\infty)<y<p(\infty),\\
			0;&\text{otherwise}.
		\end{cases}
	\end{gather*}
\end{thmC}

\begin{fact}\label{fact:martingale_theo}
Let $(\Omega,\F,\Prob)$ be a probability space.
\begin{enumerate}[(i)]
    \item In equation \eqref{eq:transformed_SDE}, the Brownian motion $\{W_t\}_{t\geq0}$ is known to be a \emph{continuous local martingale}. The definition of a continuous local martingale can be found in 1.5.15 of the reference \cite{MR1121940}.
    Furthermore, the stochastic integral $\{\int_0^t\tilde{\sigma}(Y_s)dW_s\}_{t\geq0}$ is also a continuous local martingale, provided that the condition
\begin{align*}
\Prob\left(\omega; \int_0^t\tilde{\sigma}^2(Y_s(\omega))ds<\infty\right) = 1 \quad \text{for every } t \in [0,\infty)
\end{align*}
is satisfied.
This fact is also mentioned in Section 3.2.D of \cite{MR1121940}.
    \item  The stochastic process $\{\int_{0}^{t}\tilde{\sigma}^2(Y_s)ds\}_{t\geq0}$ is indeed referred to as the \emph{quadratic variation} of the continuous local martingale $\{\int_0^t\tilde{\sigma}(Y_s)dW_s\}_{t\geq0}$.
    The definition of quadratic variation can be found in 1.5.18 of the reference \cite{MR1121940}.
    This fact is also mentioned in Section 3.2.D of \cite{MR1121940}.
    \item  Let $M:=\{M_t\}_{t\geq0}$ be a continuous local martingale starting at zero and $\lrangle{M}:=\{\lrangle{M}_t\}_{t\geq0}$ be the quadratic variation of $M$.
    Consider a probability space $(\hat{\Omega},\hat{\F},\hat{\Prob})$ in which a Brownian motion exists.
    According to the results in 3.4.6 and 3.4.7 of the reference \cite{MR1121940}, there exists a Brownian motion $\Tilde{B}:=\{\Tilde{B}_t\}_{t\geq0}$ defined on the probability space $(\tilde{\Omega},\tilde{\F},\tilde{\Prob})=(\Omega\times\hat{\Omega},\F\otimes\hat{\F},\Prob\otimes\hat{\Prob})$ such that
\begin{align*}
    M_t(\omega)=B_{\lrangle{M}_t}(\omega,\Tilde{\omega}) \quad \text{almost surely } \tilde{\Prob}.
\end{align*}
This provides a representation of the continuous local martingale $M$ in terms of the Brownian motion $\Tilde{B}$.
\end{enumerate}
\end{fact}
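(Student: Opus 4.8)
The plan is to prove the three assertions in the stated order, because the notion of quadratic variation in (ii) presupposes the continuous local martingale status established in (i), and the time-change representation in (iii) is phrased in terms of the quadratic variation identified in (ii). I treat (i) and (ii) together by constructing the It\^o integral $M_t := \int_0^t \tilde\sigma(Y_s)\,dW_s$ through the standard three-stage approximation and checking that both the local-martingale property and the quadratic-variation identity survive each stage.

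First, $W$ itself is a continuous local martingale: path continuity is item (ii) of Definition~\ref{defi:BM}, and the independent-increment property (iv) together with the mean-zero Gaussian law of the increments (iii) yields $\E[W_t \mid \F_s] = W_s$, so $W$ is a continuous martingale. For simple integrands $H = \sum_j \eta_j \1_{(t_j, t_{j+1}]}$ with $\eta_j$ bounded and $\F_{t_j}$-measurable, the integral is the martingale transform $\sum_j \eta_j (W_{t_{j+1}\wedge t} - W_{t_j \wedge t})$; its martingale property is immediate from that of $W$ and the tower rule, and a direct expansion of $M_t^2$ using $\E[(W_{t_{j+1}} - W_{t_j})^2 \mid \F_{t_j}] = t_{j+1} - t_j$ shows that $M_t^2 - \int_0^t H_s^2\,ds$ is a martingale, which identifies $\int_0^t H_s^2\,ds$ as the quadratic variation. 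For $H$ with $\E\int_0^t H_s^2\,ds < \infty$ I would approximate in $L^2(\Prob \otimes ds)$ by simple integrands and pass to the limit with the It\^o isometry $\E[M_t^2] = \E\int_0^t H_s^2\,ds$, which preserves both the martingale property (an $L^2$-limit of martingales) and the quadratic-variation identity. Finally, under the stated hypothesis $\Prob(\int_0^t \tilde\sigma^2(Y_s)\,ds < \infty) = 1$ I would localize by $\tau_n := \inf\{t : \int_0^t \tilde\sigma^2(Y_s)\,ds \geq n\}$: on each $[0, \tau_n]$ the integrand is square-integrable, the stopped process $M^{\tau_n}$ is a martingale with quadratic variation $\int_0^{\cdot \wedge \tau_n} \tilde\sigma^2(Y_s)\,ds$, and since $\tau_n \uparrow \infty$ almost surely this exhibits $M$ as a continuous local martingale with quadratic variation $\int_0^\cdot \tilde\sigma^2(Y_s)\,ds$, yielding (i) and (ii).

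For (iii), the Dambis--Dubins--Schwarz representation, the main tool is L\'evy's characterization of Brownian motion: a continuous local martingale starting at zero whose quadratic variation is $s \mapsto s$ is a Brownian motion. I introduce the time change $T_s := \inf\{t \geq 0 : \langle M\rangle_t > s\}$, the right-continuous inverse of the continuous nondecreasing process $\langle M\rangle$. On the event $\{\langle M\rangle_\infty = \infty\}$ the process $B_s := M_{T_s}$ is well defined, and using optional sampling along the localizing sequence one checks that $B$ is a continuous local martingale with $\langle B\rangle_s = s$, hence a Brownian motion, and that $M_t = B_{\langle M\rangle_t}$ by continuity. The delicate point, and the reason the statement passes to the product space $(\tilde\Omega, \tilde\F, \tilde\Prob) = (\Omega \times \hat\Omega, \F \otimes \hat\F, \Prob \otimes \hat\Prob)$, is the event $\{\langle M\rangle_\infty < \infty\}$, on which $M$ becomes constant after a finite random time so that $M_{T_s}$ alone cannot produce a full Brownian trajectory; there I would splice in the independent Brownian motion carried by $\hat\Omega$ to continue the path beyond time $\langle M\rangle_\infty$, and verify that the concatenated process $\tilde B$ is still a continuous local martingale with $\langle \tilde B\rangle_s = s$ so that L\'evy's theorem applies on all of $\tilde\Omega$.

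The main obstacle is exactly this enlargement step in (iii). Making the splicing rigorous requires controlling $M$ on $\{\langle M\rangle_\infty < \infty\}$ (where $M$ converges almost surely), verifying the optional-sampling hypotheses at the possibly infinite times $T_s$, and confirming the global identity $\langle \tilde B\rangle_s = s$ on the enlarged space. The remaining checks --- adaptedness with respect to the time-changed filtration $\F_{T_s}$ and continuity of $s \mapsto M_{T_s}$, which uses that $M$ is constant on the intervals of constancy of $\langle M\rangle$ --- are routine once the enlargement is in place.
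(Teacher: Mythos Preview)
The paper does not prove this statement: it is presented as a \emph{Fact} and is justified entirely by citation to the relevant sections (1.5.15, 3.2.D, 3.4.6, 3.4.7) of Karatzas--Shreve, with no argument given. Your outline correctly reconstructs the standard textbook development behind those citations---the three-stage construction of the It\^o integral with localization for (i)--(ii), and the Dambis--Dubins--Schwarz argument via L\'evy's characterization together with the product-space enlargement for (iii)---so there is nothing to compare beyond noting that you have supplied what the paper deliberately left to the reference.
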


The following lemma gives a representation of SDE \eqref{eq:SDE_with_drift} by time change of Brownian motion.

\begin{thm}\label{theo:rep_SDE_time_change}
	Suppose that $\sigma:\R\rightarrow\R$ and $b:\R\rightarrow\R$ satisfies Assumption~\ref{cond:SDE_with_drift}.
Then, a unique, strong solution exists \eqref{eq:SDE_with_drift}.
Moreover, there exist
\begin{itemize}
    \item a probability space $(\hat{\Omega},\hat{\F},\hat{\Prob})$,
    \item a Brownian motion $B$ and an nonnegative continuous strictly increasing process$\{Z_t\}_{t\geq0}$ with $Z_0=0$ on $(\tilde{\Omega},\tilde{\F},\tilde{\Prob}):=(\Omega\times\hat{\Omega},\F\otimes\hat{\F},\Prob\otimes\hat{\Prob})$, and
    \item a strictly increasing continuous function $p:\R\rightarrow\R$
\end{itemize}
such that
\begin{align}
	X_t(\omega)=p^{-1}(p(\xi)+B_{Z_t}(\omega,\hat{\omega}))\hspace{10pt}0\leq t<\infty,
\end{align}
a.s. $\tilde{\Prob}$.
Here $\xi=X_0$ is the constant initial value of the SDE \eqref{eq:SDE_with_drift}.
\end{thm}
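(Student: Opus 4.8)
The plan is to assemble the claimed representation from the cited ingredients in three moves. First, invoke Theorem~\ref{prop:strong_sol_of_SDE}: under Assumption~\ref{cond:SDE_with_drift}, $b$ is bounded and $\sigma$ is Lipschitz with $\sigma^2$ bounded away from zero on compacts, so \eqref{eq:SDE_with_drift} admits a unique strong solution $X=\{X_t\}_{t\geq0}$. Second, apply the scale-function change of variables from Theorem~\ref{prop:drift_removal}: with $p$ the scale function defined there, the process $Y_t:=p(X_t)$ is a strong solution of the driftless equation $Y_t=Y_0+\int_0^t\tilde\sigma(Y_s)\,dW_s$, where $Y_0=p(\xi)$ a.s. Note $p$ is strictly increasing and continuous (its integrand is a strictly positive exponential), hence invertible onto its range, so $X_t=p^{-1}(Y_t)$; this is where the function $p$ in the statement comes from. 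Third, I would represent the martingale part as a time-changed Brownian motion. Set $M_t:=Y_t-Y_0=\int_0^t\tilde\sigma(Y_s)\,dW_s$. By Fact~\ref{fact:martingale_theo}(i)--(ii), $M$ is a continuous local martingale started at $0$ with quadratic variation $\langle M\rangle_t=\int_0^t\tilde\sigma^2(Y_s)\,ds$, provided $\int_0^t\tilde\sigma^2(Y_s)\,ds<\infty$ a.s. for each $t$, which follows from property~(iii) in Definition~\ref{defi:SDE_with_drift} of a strong solution together with the form of $\tilde\sigma$. Then Fact~\ref{fact:martingale_theo}(iii) (the Dambis--Dubins--Schwarz representation, cited from 3.4.6--3.4.7 of \cite{MR1121940}) supplies a probability space $(\hat\Omega,\hat\F,\hat\Prob)$ carrying a Brownian motion, and on the product space $(\tilde\Omega,\tilde\F,\tilde\Prob)=(\Omega\times\hat\Omega,\F\otimes\hat\F,\Prob\otimes\hat\Prob)$ a Brownian motion $B$ with $M_t(\omega)=B_{\langle M\rangle_t}(\omega,\hat\omega)$ a.s. $\tilde\Prob$. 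Setting $Z_t:=\langle M\rangle_t=\int_0^t\tilde\sigma^2(Y_s)\,ds$ and combining the three steps yields
\[
X_t(\omega)=p^{-1}\bigl(Y_0+M_t\bigr)=p^{-1}\bigl(p(\xi)+B_{Z_t}(\omega,\hat\omega)\bigr),\qquad 0\leq t<\infty,
\]
a.s.\ $\tilde\Prob$, which is the desired identity.

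It remains to check the stated regularity of $Z$: it should be nonnegative, continuous, strictly increasing, with $Z_0=0$. Nonnegativity, continuity, and $Z_0=0$ are immediate from $Z_t=\int_0^t\tilde\sigma^2(Y_s)\,ds$. Strict monotonicity is the point where Assumption~\ref{cond:SDE_with_drift}(i) is essential: on any compact interval $[0,t]$ the path $s\mapsto X_s(\omega)$ is continuous, hence stays in a compact set $K$, so $Y_s=p(X_s)$ stays in the compact set $p(K)$; on $p(K)$ the function $\tilde\sigma(y)=p'(q(y))\sigma(q(y))$ is bounded below by a positive constant (since $\sigma$ is bounded below on $K$ and $p'$ is a strictly positive continuous function on $K$), so the integrand is uniformly positive on $[0,t]$ and $Z$ is strictly increasing there. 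This being true for every $t$, $Z$ is strictly increasing on $[0,\infty)$, almost surely.

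I expect the main obstacle to be a careful bookkeeping of the probability spaces rather than any deep estimate: the original solution $X$ and $Y$ live on $(\Omega,\F,\Prob)$, but the Brownian motion $B$ furnished by the time-change theorem is only defined after enlarging to the product space, so all the ``a.s.'' assertions in the conclusion must be stated with respect to $\tilde\Prob$, and one must verify that $X_t$, viewed on $\tilde\Omega$ via $(\omega,\hat\omega)\mapsto X_t(\omega)$, is $\tilde\Prob$-a.s.\ equal to the right-hand side (which genuinely depends on $\hat\omega$ through $B$). A secondary technical point is confirming that the hypotheses of Theorems~\ref{prop:strong_sol_of_SDE}, \ref{prop:drift_removal}, and \ref{prop:density_SDE} are all subsumed by Assumption~\ref{cond:SDE_with_drift} — e.g.\ that ``$b$ bounded and Borel measurable'' plus ``$\sigma$ Lipschitz'' gives the linear-growth and H\"older hypotheses needed elsewhere — which is routine but should be noted explicitly.
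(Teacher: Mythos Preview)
Your proposal is correct and follows essentially the same route as the paper: existence/uniqueness via Theorem~\ref{prop:strong_sol_of_SDE}, removal of drift by the scale function $p$ from Theorem~\ref{prop:drift_removal} to get $Y_t=p(X_t)$ solving \eqref{eq:transformed_SDE}, then the Dambis--Dubins--Schwarz representation from Fact~\ref{fact:martingale_theo} with $Z_t:=\langle Y\rangle_t=\int_0^t\tilde\sigma^2(Y_s)\,ds$. Your compactness argument for strict monotonicity of $Z$ is a slightly more explicit version of the paper's one-line observation that $\tilde\sigma^2(Y_t)>0$ for all $t$; otherwise the proofs coincide.
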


\begin{proof}
    The existence and uniqueness of the solution follow from Proposition~\ref{prop:strong_sol_of_SDE}.
 Since $\sigma$ and $b$ satisfy Assumption~\ref{cond:SDE_with_drift}, we can deduce from Theorem~\ref{prop:drift_removal} that there exists a continuous injection $p:\R\rightarrow\R$ and a continuous function $\tilde{\sigma}:\R\rightarrow\R$ such that $\{Y_t(\omega)\}_{t\geq0}=\{p(X_t(\omega))\}_{t\geq0}$ is a strong solution of the SDE \eqref{eq:transformed_SDE}.
 From the Definition~\ref{defi:SDE_with_drift} of the strong solution, $\Prob[\int_0^t\tilde{\sigma}^2(Y_s(\omega))ds<\infty]=1$ for every $t$.
 Then, by (i) of Fact~\ref{fact:martingale_theo}, we know that $Y_t(\omega)-Y_0(\omega)$ is a continuous local martingale starting at zero.
 From (ii) of Fact~\ref{fact:martingale_theo}, the quadratic variation $\lrangle{Y}$ of $Y$ is given by $\{\int_{0}^{t}\tilde{\sigma}^2(Y_s)ds\}_{t\geq0}$.
 Moreover, from the definition $Y_t(\omega)=p(X_t(\omega))$ and the construction of $\tilde{\sigma}$ in \eqref{eq:transformed_SDE}, we have $\tilde{\sigma}^2(Y_t(\omega))>0$ for every $t\in[0,\infty)$. Therefore, $t\mapsto\lrangle{Y}_t(\omega)$ is strictly increasing almost surely.
 Now, by (iii) of Fact~\ref{fact:martingale_theo}, there exists a probability space $(\hat{\Omega},\hat{\F},\hat{\Prob})$ such that there exists a Brownian motion $B$ on $(\tilde{\Omega},\tilde{\F},\tilde{\Prob})=(\Omega\times\hat{\Omega},\F\otimes\hat{\F},\Prob\otimes\hat{\Prob})$ such that
\begin{align*}
Y_t(\omega)=Y_0(\omega)+B_{\lrangle{Y}_t}(\omega,\hat{\omega}) \quad \text{almost surely on } \tilde{\Prob}.
\end{align*}
We obtain the desired result by setting $Z_t:=\lrangle{Y}_t$.
\end{proof}

\begin{proof}[Proof of Lemma~\ref{lemm:local_extreme}]
    From Theorem~\ref{theo:rep_SDE_time_change}, we have $X_t(\omega)=p^{-1}(p(\xi)+B_{Z_t}(\omega,\hat{\omega}))$ for all $t\in[0,\infty)$ almost surely in the extended probability space $(\tilde{\Omega},\tilde{\F},\tilde{\Prob})=(\Omega\times\hat{\Omega},\F\otimes\hat{\F},\Prob\otimes\hat{\Prob})$, where $\{B_t\}_{t\geq0}$ is a Brownian motion and $\{Z_t\}_{t\geq0}$ is a continuous strictly increasing process.
    \begin{enumerate}[(i)]
        \item
    Set
    \begin{align*}
    \mathcal{L}^{p(a)-p(\xi)}{(\omega,\hat{\omega})}:=\{t\geq0;B_{Z_t}(\omega,\hat{\omega})=p(a)-p(\xi)\},\hspace{10pt}a\in\R,\ (\omega,\hat{\omega})\in\Omega\times\hat{\Omega}.
    \end{align*}
    Since $t\mapsto Z_t(\omega,\hat{\omega})$ is strictly increasing and continuous almost surely $\tilde{\Prob}$, Theorem~\ref{fact:strictlocalmaxima}-(i) implies that $\mathcal{L}^{p(a)}{(\omega,\hat{\omega})}$ is dense in itself almost surely $\tilde{\Prob}$.
    Since $X_t(\omega)=p^{-1}(p(\xi)+B_{Z_t}(\omega,\hat{\omega}))$ almost surely $\tilde{\Prob}$, the set $\mathcal{L}^{a}_{\omega}:=\{t\geq0;X_t(\omega)=a\}$ is dense in itself almost surely $\Prob$.
    
        \item Since $p^{-1}$ is strictly increasing, the points of local maximum and local minimum of $X$ are the same as those of $\{B_{Z_t}\}_{t\geq0}$.
    Since $t\mapsto Z_t(\omega,\hat{\omega})$ is strictly increasing $\tilde{\Prob}$-almost surely, every point of local maximum and local minimum of $t\mapsto B_{Z_t}(\omega,\hat{\omega})$ is strict.
    Since $X_t(\omega)=p^{-1}(B_{Z_t}(\omega,\hat{\omega}))$ almost surely $\tilde{\Prob}$, every point of local maximum and local minimum of $t\mapsto X_t$ is strict almost surely $\Prob$.
    \qedhere
    \end{enumerate}
\end{proof}

\end{document}